\theoremstyle{plain}
\newtheorem{corollary}{Corollary}
\newtheorem{theorem}{Theorem}
\newtheorem{proposition}{Proposition}
\newtheorem{lemma}{Lemma}
\theoremstyle{definition}
\newtheorem{definition}{Definition}
\newtheorem{example}{Example}
\newenvironment{proof}{{\bf Proof:~}}{}
\newcommand{\algline}[1]{line {\sc #1}}
\newcommand{\sepbar}{\mathrel{|}}
\newcommand{\sepdot}{\mathpunct{.}}
\newcommand{\infer}[3][]{#1\mathord{:}~\genfrac{}{}{0.5pt}{}{\begin{aligned}#2\end{aligned}}{#3}}
\newcommand{\jd}{\ensuremath{\mathop{\mathord{\Join}\mathord{:}}}}
\newcommand{\fd}{\ensuremath{\mathop{:}}}
\newcommand{\jdrightarrow}{
  \mathrel{
    \vcenter{\offinterlineskip
      \hbox{\hskip0.35ex\fontsize{9}{9}$\Join$}
      \hbox{$\rightarrow$}
    }
  }
}
\def\qed {{                
   \parfillskip=0pt        
   \widowpenalty=10000     
   \displaywidowpenalty=10000  
   \finalhyphendemerits=0  
                           %
   \leavevmode             
   \unskip                 
   \nobreak                
   \hfil                   
   \penalty50              
   \hskip.2em              
   \null                   
   \hfill                  
   $\square$
                           %
   \par}}                  
\newcommand{\rrc}{\color{green!10!red!65!black}}
\newcommand{\rrmark}[1]{%
\marginpar{\rrc $\bigstar$\\%
\raggedright \fontsize{8.5}{9.8}\selectfont#1}}
\newcommand{\rrtext}[1]{{\rrc#1}}
\renewcommand{\rrmark}[1]{}
\renewcommand{\rrtext}[1]{#1}
\journal{Information Systems}
\begin{document}
\begin{frontmatter}
\title{Consistent Query Answers\\ in the Presence of 
       Universal Constraints\thanksref{grants}}
\thanks[grants]{Research partially supported by NSF grants 
                IIS-0119186 and IIS-0307434 and 
                Enumeration project ANR-07-blanc}
  
\author{S\l{}awomir Staworko\corauthref{phd}}
\address{
  INRIA Lille - Nord Europe,
  Parc Scientifique de la Haute Borne,
  Park Plaza - B\^at A - 40 avenue Halley,
  59650 Villeneuve d'Ascq.
  Email: {\tt slawomir.staworko@inria.fr}
}
\corauth[phd]{Corresponding author. Part of this research was done
              when the author was a PhD student at 
              the University at Buffalo}  

\author{Jan Chomicki}
\address{
  Department of Computer Science and Engineering, 
  201 Bell Hall, The State University of New York at Buffalo, 
  Buffalo, NY 14260. 
  Email: {\tt chomicki@cse.buffalo.edu}
}

\begin{abstract}
The framework of {\em consistent query answers} and {\em repairs} has been introduced to alleviate the impact of inconsistent data on the answers to a query. A repair is a minimally different consistent instance and an answer is consistent if it is present in {\em every} repair. In this article we study the complexity of consistent query answers and {\em repair checking} in the presence of {\em universal constraints}. 

\noindent
We propose an extended version of the conflict hypergraph which allows to capture all repairs w.r.t. a set of universal constraints. We show that repair checking is in PTIME for the class of full tuple-generating dependencies and denial constraints, and we present a polynomial {\em repair algorithm}. This algorithm is {\em sound}, i.e. always produces a repair, but also {\em complete}, i.e. every repair can be constructed. Next, we present a polynomial-time algorithm computing consistent answers to ground quantifier-free queries in the presence of denial constraints, join dependencies, and acyclic full-tuple generating dependencies. Finally, we show that extending the class of constraints leads to intractability. For arbitrary full tuple-generating dependencies consistent query answering becomes coNP-complete. For arbitrary universal constraints consistent query answering is $\Pi_2^p$-complete and repair checking coNP-complete. 
\end{abstract}
\begin{keyword}
  Inconsistent databases, consistent query answers, repair checking, database repairing.
\end{keyword}
\end{frontmatter}
\section{Introduction}
Traditionally, the consistency of a database with a set of integrity constraints was maintained by a DBMS~\cite{RaGe00}. While integrity constraints continue to express important properties of the stored data, in many novel database applications enforcing the consistency becomes problematic. For example, in the scenario of data integration even if the sources are separately consistent, together they may contribute conflicting data. Because data sources are often autonomous and their contents cannot be altered, the consistency cannot be restored by means of data manipulation. Consistency violations occur naturally also in the context of long running data manipulations, delayed updates on data warehouses, and legacy databases. Finally, consistency enforcement may be deactivated for efficiency reasons. At the same time, the semantic properties expressed by integrity constraints often influence the way the user formulates her queries. Hence, if the database is inconsistent, evaluating the queries may yield incorrect and misleading answers.

To address the problem of the potential impact of inconsistencies on  query results Arenas et al. have proposed the framework of repairs and consistent query answers~\cite{ArBeCh99}. A {\em repair} is a consistent database instance minimally different from the original one. The {\em consistent query answers} are the answers present in every repair. Intuitively, the repairs represent all possible ways to restore consistency in the database and an answer is consistent if it is obtained regardless of the way the conflicts are resolved, i.e. the answer that is not affected by the inconsistencies. This framework has served as a foundation for most of the subsequent work in the area of querying inconsistent databases (for the surveys of the area, see~\cite{BeCh03,Be06,ChMa04,Ch07,Fa08}).
\begin{example}
We consider a database that stores information on the occurrence of a genetically inherited disease neurofibromatosis (NF) causing tumors of the nervous tissue. 
NF is an autosomal dominant disorder, which means that only one mutated gene needs to be present in the genome of an affected person. Typically, this gene is inherited from one of the parents.\footnote{A spontaneous mutation can also take place, but we ignore it for sake of simplicity.}

The schema of the database contains two relations: $NF(\underline{Name}, Diag)$ and $Parent(Name,Child)$, where the underline indicates the (primary) key of a relation. The following constraint captures the inheritance factor of NF:
\begin{multline}\label{eq:nf-inh}
NF(x,`yes`)\land Parent(y_1,x)\land Parent(y_2,x)\land{}
y_1\neq y_2\\ 
\Rightarrow NF(y_1,`yes`) \lor NF(y_2,`yes`).
\end{multline}

Now, consider the database instance $I$ in Figure~\ref{fig:nf-instance}.
\begin{figure}[htb]
  \begin{center}
    \begin{tabular}{|l|l|}
      \multicolumn{2}{c}{\it NF} \\
      \hline
      \underline{\it Name} & {\it Diag}\\
      \hline
      Steve & no\\
      \hline
      Mary & no\\
      \hline
      Donald & yes\\
      \hline
    \end{tabular}
    \hspace{35pt}
    \begin{tabular}{|l|l|}
      \multicolumn{2}{c}{\it Parent}\\
      \hline
      \it Name & {\it Child}\\
      \hline
      Steve & Donald\\
      \hline
      Mary & Donald\\
      \hline
    \end{tabular}
  \end{center}
  \caption{\label{fig:nf-instance} Inconsistent database $I_1$.}
\end{figure}
This instance violates \eqref{eq:nf-inh}: $Donald$ is diagnosed with NF while neither of his parents are. This violation can be resolved in three ways: 
\begin{enumerate}
\item By inserting a tuple with a positive diagnosis for one of the $Donald$'s parents. Because of the key dependency, this creates a conflict with the already existing tuple which is consequently deleted. This yields the following repairs:
\begin{align*}
  I_1'=\{
  &NF(Steve,yes), NF(Mary,no), NF(Donald,yes),\\
  &Parent(Steve,Donald), Parent(Mary,Donald)\}.\\
  I_2'=\{
  &NF(Steve,no), NF(Mary,yes), NF(Donald,yes),\\
  &Parent(Steve,Donald), Parent(Mary,Donald)\}.
\intertext{
\item By removing one of the tuples of $Parent$ relation, which gives the following repairs:
}
  I_3'=\{
  &NF(Steve,no), NF(Mary,no), NF(Donald,yes),\\
  &Parent(Mary,Donald)\}.\\
  I_4'=\{
  &NF(Steve,no), NF(Mary,yes), NF(Donald,yes),\\
  &Parent(Steve,Donald)\}.
\intertext{
\item By removing the tuple with the diagnosis of $Donald$ giving the following repair:
}  I_5'=\{&NF(Steve,no), NF(Mary,no),\\
         &Parent(Steve,Donald), Parent(Mary,Donald)\}.
\end{align*}
\end{enumerate}
Consider now the query $NF(Steve,no)$ asking if $Steve$ is not diagnosed with NF. The answer to this query in the instance $I$ is {\bf true}. However, {\bf true} is not the consistent query answer  because of the repair $I_1$ (which indicates that the diagnosis of Steve may be incorrect).
\end{example}
We note that the framework of consistent query answers is parametrized by the notion of minimality used to define repairs. The original notion uses the symmetric set difference (between databases as sets of tuples) and set inclusion, i.e. the repairs are obtained by deleting and inserting a minimal set of tuples. This notion is most commonly considered in the literature and it is used in this paper. Other investigated notions of minimality use asymmetric set  difference~\cite{CaLeRo03} and the cardinality of the symmetric difference~\cite{ArBeCh03,LoBe07}. Finally, various notions of minimality have been considered to accommodate repairs obtained by attribute value modification~\cite{LoBe07,BBFL05,Lopatenko,BFFR05,Wij03}. 

It was observed very early that the number of possible repairs may be exponential even if we consider one functional dependency~\cite{ABCHRS03}. A na\"{i}ve approach to compute consistent query answers by materializing all repairs and consequently evaluating the query in every repair may thus be simply impractical. Consequently, to establish the tractability of database repairing and computing consistent query answers two fundamental decision problems have been investigated: (i) {\em repair checking} -- checking if a given database instance is a repair, and (ii) {\em consistent query answering} -- checking if an answer to a query is present in every repair. Most of the research in this area uses the notion of {\em data complexity}, commonly used to study tractability of computing answers in relational databases~\cite{Var82}. It allows to express the complexity of the problems in terms of the database size only: the set of integrity constraints and the query are assumed to be fixed. We note that the study of complexity of the two decision problems is motivated by the belief that tractable decision algorithms can be converted into efficient algorithms that compute consistent query answers and construct repair(s) of an inconsistent database. This belief is validated, for example, by existing polynomial-time algorithms where decision problems play a central role in the computation of consistent query answers~\cite{ChMaSt03,FuMi05}. 

The problems of repair checking and consistent query answering are parameterized by the class of integrity constraints. {\em Denial constraints} allow the user to specify sets of tuples that cannot be simultaneously present in the database because they create a conflict. This class of constraints includes {\em equality-generating dependencies}, thus also {\em functional dependencies}, and {\em exclusion constraints}~\cite{AbHuVi95}. The standard definition of denial constraints allows to use conjunctions of $=$ and $\neq$ comparisons to relate the values of the tuples creating conflicts. A more general version of denial constraints allows using any Boolean combination of formulas using $=$, $\neq$, $<$, $\leq$, $>$, and $\geq$~\cite{BaChWo99}. This version has also been studied in the context of consistent query answers~\cite{ChMa04}. There, the authors proposed the  {\em conflict hypergraph} to store all conflicts present in the database and subsequently use it to construct repairs and efficiently compute consistent query answers.

{\em Universal constraints} generalize denial constraints by allowing to express conflicts created not only by the presence of some tuples but also by simultaneous absence of other tuples. This class of constraints contains {\em full tuple-generating dependencies} (full TGDs) which have been thoroughly studied in the context of relational databases~\cite{AbHuVi95,Ma97,MaSr96}. Full TGDs contain an important class of {\em join dependencies} (JDs), and thus also its subclass {\em multi-valued dependencies} (MVDs), which are frequently used in the in the setting of {\em denormalized databases}~\cite{AbHuVi95,RaGe00}. In this context, the constraints are typically not actively enforced which permits the occurrence of {\em insertion/deletion/update anomalies}.

\begin{example} Consider a denormalized database that stores the information about locations and the offer of different chains of coffee shops. The schema is $\text{\it CoffeeShop}(Chain,Location,Beverage)$. The list of beverages offered by a particular chain is the same in every coffee shop, which is expressed with the following join dependency:
\[
\text{\it CoffeeShop}\jd{}[\{Chain,Location\},\{Chain,\text{\it Beverage}\}].
\]

Now, consider the instance in Figure~\ref{fig:coffee-instance}. 
\begin{figure}[htb]
\begin{center}
\begin{tabular}{|l|l|l|}
\multicolumn{3}{c}{\it CoffeeShop}\\
\hline
{\it Chain}& {\it Location} & {\it Beverage}\\
\hline
Starbucks & Delaware Ave. & Latte\\
\hline
Starbucks & Delaware Ave. & Espresso\\
\hline
Starbucks & Main Str. & Latte\\
\hline
Spot & Elmwood Ave. & Latte\\
\hline
\end{tabular}
\end{center}
\caption{\label{fig:coffee-instance} Inconsistent instance $I_2$.}
\end{figure}
This instance is inconsistent because Espresso is offered at Starbucks on Delaware Avenue but not at Starbucks on Main Street.
This instance has three repairs:
\begin{enumerate}
\item The first corresponds to the scenario where $Espresso$ has been added to the offer of $Starbucks$ but the change has not been propagated properly. This repair is obtained by {\em inserting} the tuple \begin{align*}
&\text{\it CoffeeShop}(Starbucks,Main~Str.,Espresso).
\intertext{
\item The second corresponds to the scenario where $Espresso$ has been removed from the offer of $Starbucks$ but the change has not been propagated properly. This repair is obtained by {\em deleting} the tuple
}
&\text{\it CoffeeShop}(Starbucks,Delaware~Ave., Espresso).
\intertext{
\item The third corresponds to the scenario where the coffee shop located on Main Street is being closed. This repair is obtained by {\em deleting} the tuple
}
&\text{\it CoffeeShop}(Starbucks, Main~Str., Latte).
\end{align*}
\end{enumerate}
Now, consider the query $\text{\it CoffeeShop}(Starbucks, Delaware~Ave., Latte)$. {\bf true} is the consistent answer to this query because it is {\bf true} in every repair. If we consider the query $\text{\it CoffeeShop}(Starbucks, Delaware~Ave.,Espresso)$, we note that the answer to this query is {\bf true} in the original answer. We observe, however, that {\bf true} is not the consistent query answer because the query is not {\bf true} in every repair. 
\end{example}
The complexity of repair checking and consistent query answering in the presence of general universal constraints has not been thoroughly studied. Previous research conducted in this area shows that computing consistent query answers is:
\begin{itemize}
\item in PTIME for the class of binary universal constraints and a restricted class of quantifier-free queries~\cite{ArBeCh99}.
\item in PTIME for the class of denial constraints and quantifier-free queries~\cite{ChMa04}.
\item in PTIME when at most one primary key per relation is present and the queries belong to a restricted class of conjunctive queries $C_{forest}$~\cite{FuMi05,Fuxman,FuMi07}.
\item coNP-complete for primary keys and arbitrary conjunctive queries~\cite{ChMa04}.
\item $\Pi_p^2$-complete for arbitrary sets of functional and inclusion dependencies~\cite{ChMa04}, when repairs are constructed using deletions only. 
\item undecidable for arbitrary sets of functional and inclusion dependencies~\cite{CaLeRo03}.
\end{itemize}
We remark that the class of universal constraints captures only {\em full inclusion dependencies} and in the paper we do not consider general inclusion dependencies. 

In this paper we investigate computing consistent query answers and repairs in the presence of {\em universal constraints}. This research constitutes a continuation and substantial extension of~\cite{ChMa04}.
Similarly to~\cite{ChMa04} in the constraint definition we allow any Boolean combination of atomic formulas that use the binary relations: $=$, $\neq$, $<$, $\leq$, $>$, and $\geq$. We propose an {\em extended} version of the conflict hypergraph whose hyperedges span both tuples present and absent in the database. The size of the extended conflict hypergraph is still polynomial in the size of the database and every repair corresponds to a maximal independent set. Although, the converse correspondence is not necessarily true, i.e.,  not every maximal independent set defines a repair, we consider the extended conflict hypergraph to be a {\em compact representation} of all repairs.

Next, we study the computational implications of universal constraints. In this paper we show that:
\begin{itemize}
\item The complexity of repair checking is:
\begin{itemize}
\item in PTIME for arbitrary full tuple-generating dependencies and denial constraints. Consequently, we present a polynomial {\em database repairing} algorithm that is both {\em sound} (always constructing a repair) and {\em complete} (able to construct every repair). 
\item coNP-complete for arbitrary universal constraints.
\end{itemize}
\item The complexity of consistent query answering is:
\begin{itemize}
\item in PTIME for quantifier-free closed queries in the presence of join dependencies, {\em acyclic} full tuple-generating dependencies, and denial constraints.
\item coNP-complete for \rrmark{Specified the precise query classes.}\rrtext{atomic queries} in the presence of arbitrary full tuple-generating dependencies and denial constraints. 
\item $\Pi_p^2$-complete for \rrtext{atomic queries} in the presence of  arbitrary universal constraints. 
\end{itemize}
\end{itemize}

The paper is organized as follows. Section~\ref{sec:basic} contains basic notions and definitions. In Section~\ref{sec:extended} we present the extended conflict hypergraph, study its properties, and investigate basic properties of the framework. In Section~\ref{sec:ftgd-repairs} we study the complexity of repair checking in the presence of full tuple-generating dependencies and we present a database repairing algorithm. In Section~\ref{sec:ftgd-cqa} we investigate the complexity of consistent query answering in the presence of full tuple-generating dependencies. In Section~\ref{sec:univ} we investigate the complexity of consistent query answering and repair checking in the presence of arbitrary universal constraints. In Section~\ref{sec:related} we discuss related work. Section~\ref{sec:future} contains final conclusions and the discussion of future work. 

\section{Preliminaries}
\label{sec:basic}
A database {\em schema} $\mathcal{S}$ is a set of relation names of fixed arity (greater than $0$) and we use $R,P,\ldots$ to denote relation names. Relation attributes are drawn from an infinite set of names $U$, and we use $A,B,C,\ldots$ to denote elements of $U$ and $X,Y,Z,\ldots$ to denote finite subsets of $U$. For $R\in\mathcal{S}$ we denote the set of all attributes of $R$ by $attrs(R)$. Every element of $U$ is typed and we consider only two disjoint infinite domains: $\mathsf{Q}$ (rationals) and $D$ (uninterpreted constants). We assume that two constants are equal if and only if they have the same name, and we allow the standard {\em built-in} relation symbols $=$ and $\neq$ over $D$. We also allow the built-in relation symbols $=$, $\neq$, $<$, $\leq$, $>$, and $\geq$ with their natural interpretation over $\mathsf{Q}$. 
We use these symbols together with the vocabulary of relational names $\mathcal{S}$ to build a first-order language $\mathcal{L}$. An $\mathcal{L}$-formula is:
\begin{itemize}
\item {\em closed} (or a {\em sentence}) if it has no free variables,
\item {\em ground} if it has no variables whatsoever,
\item {\em quantifier-free} if it has no quantifiers,
\item {\em atomic} if it has no quantifiers and no Boolean connectives.
\end{itemize}
Finally, a {\em fact} is an atomic ground $\mathcal{L}$-formula and a {\em literal} is a fact or the negation of a fact.

Database {\em instances} are finite, first-order structures over the schema. Often, we will find it more convenient to view an instance $I$ as the finite set of all facts satisfied by the instance $\{R(t)\sepbar R\in \mathcal{S}, I\models R(t)\}$.

In the sequel, we will denote tuples of variables by $\bar{x},\bar{y},\ldots$, tuples of constants by $t,s,\ldots$, facts by $p,q,r,\ldots$, quantifier-free formulas using only built-in predicates by $\varphi$, and instances by $I,J,\ldots$ 

\subsection{Integrity constraints}
An integrity constraint is an $\mathcal{L}$-sentence, i.e. a closed first-order $\mathcal{L}$-formula. In this paper we consider the class of {\em universal constraints}, $\mathcal{L}$-sentences of the form
\begin{equation}\label{eq:univ}
\forall \bar{x}\sepdot \neg [
R_1(\bar{x}_1)\land \ldots \land R_n(\bar{x}_n) \land
\neg P_1(\bar{y}_1)\land \ldots \land \neg P_m(\bar{y}_m) \land \varphi(\bar{x})
],
\end{equation}
where $\varphi(\bar{x})$ is a quantifier-free formula referring to built-in relation names only and 
$\bar{y}_1\cup\ldots\cup\bar{y}_m\subseteq
\bar{x}_1\cup\ldots\cup\bar{x}_n=\bar{x}$ (this is a standard safety requirement~\cite{AbHuVi95}). Also, we make a natural assumption that $n+m>0$. The constraint \eqref{eq:univ} will be often presented as:
\begin{equation}
\label{eq:univ-impl}
R_1(\bar{x}_1)\land\ldots\land R_n(\bar{x}_n)\land \varphi(\bar{x})
\rightarrow
P_1(\bar{y}_1)\lor\ldots\lor P_m(\bar{y}_m),
\end{equation}
where all the variables are implicitly universally quantified. 

The class of universal constraints contains the following basic classes of integrity constraints:
\begin{enumerate}
\item {\em Full tuple-generating dependencies (full TGDs)}: universal constraints with one atom in rhs ($m=1$). \rrtext{Often, the full TGDs considered in literature have a conjuntion of atoms in rhs. 
\rrmark{Comment on single- and multi-head full TGDs.}
We remark, however, that a {\em multi-head} full TGD is equivalent to a set of {\em single-head} full TGDs~\cite{AbHuVi95}.}
\item {\em Join dependencies (JDs)} commonly formulated as 
$R\jd{}[X_1,\ldots,X_k]$, where $R$ is a relation name and $X_1,\ldots,X_k$ are subsets of attributes of $R$ whose union contains all attributes of $R$. A common relational algebra definition is: $R=\pi_{X_1}(R)\Join\ldots\Join\pi_{X_k}(R)$. The equivalent full tuple-generating dependency is:
\[
R(\bar{x}_1)\land\ldots\land R(\bar{x}_k) \land
\bigwedge_{1\leq i,j \leq k} 
\bar{x}_i[X_i\cap X_j]=\bar{x}_j[X_j\cap X_i]
\rightarrow R(\bar{y}),
\]
where $\bar{z}[Y]$ is the subvector of $\bar{z}$ that corresponds to the attributes in $Y$, and $\bar{y}\subseteq \bar{x}_1\cup\ldots\cup\bar{x}_n$ such that $\bar{y}[X_i\setminus\bigcup_{1\leq j < i} X_j] = 
\bar{x}_i[X_i\setminus\bigcup_{1\leq j < i} X_j]$ for $i\in\{1,\ldots,k\}$. 
\item {\em Denial constraints}: universal constraints with no atoms in the rhs ($m=0$):
\[
R_1(\bar{x}_1)\land\ldots\land R_n(\bar{x}_n)\land\varphi(\bar{x})\rightarrow \mathbf{false}.
\]
\item {\em Functional dependencies (FDs)} commonly formulated as $R\fd{}X\rightarrow Y$, where $X$ and $Y$ are sets of attributes of $R$. An FD $R\fd{}X\rightarrow Y$ is expressed by the following denial constraint:
\[
R(\bar{x}_1)\land R(\bar{x}_2)\land 
\bar{x}_1[X]=\bar{x}_2[X]\land
\neg(\bar{x}_1[Y]=\bar{x}_2[Y])\rightarrow \mathbf{false}.
\]
\end{enumerate}

The following restriction will allow us to identify a tractable class of integrity constraints. 

\begin{definition}[Acyclic constraints]
The {\em dependency graph} $\mathcal{D}(\mathcal{S},F)$ of a set of universal
constraints $F$ is a directed graph whose set of vertices is the relational schema $\mathcal{S}$ and for any constraint \eqref{eq:univ} in $F$ there is an edge from $P_j$ to $R_i$ for every $i\in\{1,\ldots,n\}$ and every $j\in\{1,\ldots,m\}$. The set of constraints $F$ is {\em acyclic} if $\mathcal{D}(\mathcal{S},F)$ is an acyclic graph.  
\end{definition}
We also adapt the standard notions of {\em height} and {\em depth} of a node in a tree to (possibly cyclic) dependency graphs~\cite{Cormen}. Given a dependency graph $\mathcal{D}(\mathcal{S},F)$, the {\em acyclic depth} of a node $R$, denoted $depth(R)$, is the maximal length of a directed acyclic path that ends in $R$, where the length of a path is the number of edges its comprises of. The {\em acyclic height} of $R$, denoted $height(R)$, is the maximal length of a directed acyclic path that originates in $R$. The {\em acyclic height} of $\mathcal{D}(\mathcal{S},F)$ is maximum acyclic height of all node in $\mathcal{D}(\mathcal{S},F)$. We note that both the acyclic height and acyclic depth of a node are bounded by the acyclic height of the dependency graph. 
\begin{example}
Figure~\ref{fig:dependency-graph} presents a dependency graph for schema 
\[
\mathcal{S}=\{R(A,B),P(A,B),T(A,B,C),S(A,B,C)\}
\] 
and a cyclic set of constraints
\begin{align*}
F=\{
&R(x,y)\land P(y,z) \rightarrow S(x,y,z),
 S(x,y,z)\rightarrow T(x,y,z),\\
&T(x,y,z)\rightarrow P(x,y)\lor P(y,z)
\}.
\end{align*}
\begin{figure}[htb]
\begin{center}
\begin{tikzpicture}[scale=1]
\node (R) at (0,0) {$R$};
\node (P) at (0,1) {$P$};
\node (S) at (1,0) {$S$};
\node (T) at (1,1) {$T$};

\draw[->] (S) -- (R);
\draw[->] (S) -- (P);
\draw[->] (P) -- (T);
\draw[->] (T) -- (S);

\node at (4,0.5) {
\small
\begin{tabular}{r|l|l|l|l}
& $R$ & $P$ & $S$ & $T$ \\
\hline
$depth$ & $3$ & $2$ & $2$ & $2$ \\
\hline
$height$ & $0$ & $3$ & $2$ & $2$  
\end{tabular}
};
\end{tikzpicture}
\end{center}
\caption{\label{fig:dependency-graph} A cyclic dependency graph.}
\end{figure}
The acyclic height of this graph is $3$. 
\end{example}
Database consistency is defined in the standard way.
\begin{definition}
Given a database instance $I$ and a set of integrity constraints $F$, $I$ is {\em consistent} with $F$ if $I\models F$ in the standard model-theoretic sense; otherwise $I$ is {\em inconsistent}.
\end{definition}
We observe that because we do not allow relation names of arity $0$ and we consider universal constraints satisfying the safety requirement, the constraint~\eqref{eq:univ} can have negative atoms only if it has positive atoms as well, i.e. $m>0$ implies $n>0$. Therefore, the prerequisite of a constraint violation is the existence of some facts in the database. Consequently, if the instance is empty then all the constraints are satisfied. We note this conforms to the behavior of typical SQL database management systems: an empty database satisfies any set of constraints expressed in SQL.

\subsection{Queries}
In this paper we deal only with {\em closed} queries, i.e. closed $\mathcal{L}$-formulas. The query answers are Boolean: {\bf true} or {\bf false} A query is {\em atomic} ({\em quantifier-free}) if the $\mathcal{L}$-formula is atomic (quantifier-free respectively). A {\em conjunctive} query is an existentially quantified conjunction of atomic $\mathcal{L}$-formulas.
\begin{definition}
{\bf true} is the answer to a closed query $Q$ in an instance $I$ if $I\models Q$; otherwise the answer to $Q$ is {\bf false}.
\end{definition}

\subsection{Repairs}
Repairs are defined as consistent instances that are minimally different from the original one. The differences are measured in terms of the set of facts that need to be deleted and inserted. Because we view an instance as the set of facts, the {\em (symmetric) difference} $\Delta(I_1,I_2)$ of the instances $I_1$ and $I_2$ is defined as $\Delta(I_1,I_2)=I_1\setminus I_2 \cup I_2 \setminus I_1$. 

Given an instance $I$, the {\em relative proximity relation} $\leq_I$ on instances is defined as 
\[
I_1\leq_I I_2 \iff \Delta(I,I_1) \subseteq \Delta(I,I_2).
\]
We note that  $\leq_I$ is a partial order and we write $I_1 <_I I_2$ if $I_1\leq_I I_2$ and $I_1\neq I_2$. 
\begin{definition}[\cite{ArBeCh99}]\label{def:repair}
Given a set of integrity constraints $F$ and database instances $I$ and $I'$, we say that $I'$ is a {\em repair} of $I$ w.r.t. $F$ if $I'$ is a $\leq_I$-minimal instance consistent with $F$. By $Repairs(I,F)$ we denote the set of all repairs of $I$ w.r.t. $F$. 
\end{definition}
Because an empty instance over a schema always satisfies any set of universal constraints over the schema, the set of repairs is guaranteed to be non-empty. 

\subsection{Consistent query answers}
Finally, we use the repairs to define the consistent query answers.
\begin{definition}[\cite{ArBeCh99}] \label{def:cqa}
{\bf true} {\rm (}{\bf false}{\rm)} is the {\em consistent answer} to a closed query $Q$, denoted $I\models_F Q$ {\rm(}$I\not\models_F Q$ resp.{\rm)} if and only if {\bf true} {\rm(}{\bf false} resp.{\rm)} is the answer to $Q$ in every repair of $I$ w.r.t. $F$.
\end{definition}
We note that our approach can be easily extended to open queries along the lines of~\cite{ChMa04,ChMaSt04}. In essence, from an open query $Q(\bar{x})$ we derive a query $Q^E(\bar{x})$ that defines an {\em envelope}, a superset of consistent query answers to $Q(\bar{x})$. For every tuple $t$ from the envelope, $t$ is a consistent answer to $Q(\bar{x})$ if and only if {\bf true} is the consistent answer to the closed query $Q(t)$.

\subsection{Decision problems}
We consider here the following complexity classes:
\begin{itemize}
\item {\em PTIME}: the class of decision problems solvable in polynomial time by deterministic Turing machines;
\item {\em coNP}: the class of decision problems whose complements are solvable in polynomial time by nondeterministic Turing machines;
\item $\Pi^p_2$: the class of decision problems whose complements are solvable in polynomial time by nondeterministic Turing machines with an NP oracle. 
\end{itemize}

To investigate tractability of the framework of consistent query answers we use the notion of data complexity~\cite{Var82}. This notion allows to describe the complexity of the problems in terms of the size of the database only and assume the remaining parts of the input to be fixed. There are two classical decision problems that are investigated in the context of consistent query answers~\cite{ChMa04}:  
\begin{enumerate}
\item[$(i)$] {\em repair checking} -- determining if a database
  instance is a repair of a given database instance, i.e. the
  complexity of the following set
\[
\mathcal{B}_F =\{(I,I') \sepbar I' \in Repairs(I,F)\}.
\]
\item[$(ii)$] {\em consistent query answering} -- determining if 
{\bf true} is the consistent answer to a given closed query in a
given database  w.r.t. a given set of integrity constraints,
i.e. the complexity of the following set
\[
\mathcal{D}_{F,Q} = \{I \sepbar I \models_F Q\}.
\]
\end{enumerate}

\section{Basic constructions and facts}
\label{sec:extended}
In this section we generalize the conflict hypergraph for denial constraints~\cite{ABCHRS03,ChMa04}. In the scope of this section, we fix an instance $I$ and a set of universal constraints $F$. 

\subsection{Extended conflict hypergraph}

For denial constraints, a conflict is a set of facts whose {\em presence} violates a constraint. For universal constraints, a conflict is created not only by the {\em presence} of some facts but also by the simultaneous {\em absence} of other facts. 
\begin{definition}[Conflict]
A set of literals
\[
\{R_1(t_1),\ldots,R_n(t_n),\neg P_1(s_1),\ldots,\neg P_m(s_m)\}
\] 
is a {\em conflict} w.r.t. a constraint
\begin{equation*}
\forall \bar{x}\sepdot \neg [
R_1(\bar{x}_1)\land \ldots \land R_n(\bar{x}_n) \land
\neg P_1(\bar{y}_1)\land \ldots \land \neg P_m(\bar{y}_m) \land \varphi(\bar{x})
],
\end{equation*}
if there exists a ground substitution $\theta$ of variables $\bar{x}$ such that $\theta(\bar{x}_i)=t_i$ for $i\in\{1,\ldots,n\}$, $\theta(\bar{y}_j)=s_j$ for $j\in\{1,\ldots,m\}$, and $\varphi(\theta(\bar{x}))$ holds.
\end{definition}

If the constraints are limited to denial constraints, conflicts can be resolved only by deleting facts. Moreover, deleting a fact will not create further conflicts. Therefore, only conflicts created by the facts from the original instance need to be considered. 

In the case of universal constraints, the picture is more complex. First, a conflict can be resolved not only by deleting facts but also by inserting a fact. Second, deleting a fact can create conflicts caused by the absence of the fact. Similarly, inserting a fact can create conflicts caused by the presence of the fact. Moreover, a cascading propagation of conflicts can easily take place: resolving one conflict leads to the creation of another conflict whose resolution leads to yet another one, and so on. 

\begin{example}\label{ex:cascade}
Consider a schema $\mathcal{S}=\{R(A,B),P(C)\}$ with one constraint $F=\{R(x,y)\land P(x)\rightarrow P(y)\}$ and take $I=\{R(1,2),R(2,3),P(1)\}$. The conflict $\{R(1,2),P(1),\neg P(2)\}$ can be resolved by inserting $P(2)$ into the instance $I$ to obtain: $I\cup\{P(2)\}$. This creates the conflict $\{R(2,3),P(2),\neg P(3)\}$ which can be resolved by further inserting $P(3)$. Finally, we obtain the repair $I_1=I\cup\{P(2),P(3)\}$. The other repairs are $I_2=I\setminus\{R(1,2)\}$, $I_3=I\setminus\{P(1)\}$, and $I_4=I\cup\{P(2)\}\setminus\{R(2,3)\}$.
\end{example}

Clearly, to capture all repairs it is not enough to consider only the facts present in the original database instance. Also the facts potentially inserted need to be considered. We capture the set of relevant facts in the following way.

\begin{definition}[Hull]\label{def:hull}
The {\em hull} is the minimal set of literals $Hull(I,F)$ satisfying the following conditions:
\begin{enumerate}
\item $I\subseteq Hull(I,F)$,
\item if a set $e$ is a conflict w.r.t. a constraint in $F$ such that every fact of $e$ belongs to $Hull(I,F)$, then for every $\neg P(t)$ in $e$, both $\neg P(t)$ and $P(t)$ belong to $Hull(I,F)$. 
\end{enumerate}
\end{definition}

We note that this definition can be easily translated to a negation-free Datalog program which computes the set of literals in the hull. The arities of the predicates are equal to the arities of the corresponding relation names.
\begin{example}\label{ex:datalog1}
For the set of constraints $F$ from Example~\ref{ex:cascade} we construct the following Datalog program:
\begin{align*}
R^{H}(x,y) &\leftarrow R(x,y).\\
P^{H}(x) &\leftarrow P(x).\\
P^{H}(y) &\leftarrow R^{H}(x,y), P^{H}(x).\\
\bar{P}^{H}(y) &\leftarrow R^{H}(x,y), P^{H}(x).
\end{align*}
Now, if we treat the instance $I$ as the extensional database, the program above has the following solution (least fixpoint):
\begin{align*}
I\cup&\{
R^{H}(1,2),R^{H}(2,3),P^{H}(1)\}\cup\{P^{H}(2),P^{H}(3),\bar{P}^{H}(2),\bar{P}^{H}(3)\},
\end{align*}
which corresponds to $Hull(I,F)=I\cup\{P(2),\neg P(2), 
P(3), \neg P(3)\}$.
\end{example}

Our intention is to use hyperedges to restrict the sets of literals used to construct a repair so that no conflicts are present. Because the hull may contain a fact and its negation, we also use an edge to prevent us from considering these two together. 
\begin{definition}[Extended conflict hypergraph]
The {\em extended conflict hypergraph} $G(I,F)$ is a hypergraph whose set of vertices is $Hull(I,F)$ and whose set of hyperedges consists of the following two types of sets: 
\begin{enumerate}
\item \underline{{\em conflict hyperedges}} $e\subseteq Hull(I,F)$ such that $e$  is a conflict w.r.t a constraint in $F$,
\item \underline{{\em stabilizing edges}} $\{P(t),\neg P(t)\}$ such that both $P(t)$ and $\neg P(t)$ belong to $Hull(I,F)$.
\end{enumerate}
An {\em independent set} of the extended conflict hypergraph $G(I,F)$
is any subset of $Hull(I,F)$ that contains no hyperedges. $M$ is a {\em maximal} independent set if there exists no independent set $M'\subseteq Hull(I,F)$ such that $M\subsetneq M'$. 
\end{definition}

\begin{example}
Figure~\ref{fig:ech} contains the extended conflict hypergraph $G_1$ for the instance $I$ w.r.t the set of universal constraints $F$ from Example~\ref{ex:cascade}.
\begin{figure}[htb]
\begin{center}
\begin{tikzpicture}[yscale=1.25,xscale=1.75]
\node (r1) at (0,0) {$R(1,2)$};
\node[inner sep=0pt] (p1) at (-.5,-1) {$P(1)$};
\node[inner sep=0pt] (np2) at (.5,-1) {$\neg P(2)$};
\draw[blue!50!black] (0,0) plot[smooth cycle, tension=0.85] coordinates{
+(0,.4) +(.85,-1.15) +(-.85,-1.15)
};

\node (r2) at (2,0) {$R(2,3)$};
\node[inner sep=0pt] (p2) at (1.5,-1) {$P(2)$};
\node[inner sep=0pt] (np3) at (2.5,-1) {$\neg P(3)$};
\draw[blue!50!black] (2,0) plot[smooth cycle, tension=0.85] coordinates{
+(0,.4) +(.85,-1.15) +(-.85,-1.15)
};

\node[inner sep=0pt] (p3) at (3.5,-1) {$P(3)$};

\draw[densely dotted] (np2) -- (p2);
\draw[densely dotted] (np3) -- (p3);

\end{tikzpicture}
\end{center}
\caption{\label{fig:ech} The extended conflict hypergraph $G_1$ for $I$ and $F$ from Example~\ref{ex:cascade}.}
\end{figure}
A dotted line is used for stabilizing edges connecting a fact and its negation (if present). We observe that deleting $P(1)$ does not lead to a conflict, and consequently, $\neg P(1)$ is not present in the hull. 
\end{example}
Since we assume the set of constraints to be fixed, the cardinalities of each conflict in $G(I,F)$ are bounded by the maximal number of atoms used in a constraint definition, a constant $K$. To construct the set of hyperedges we need to consider all subsets of cardinality bounded by $K$. Consequently, 
\begin{proposition} 
The extended conflict hypergraph $G(I,F)$ can be constructed in time polynomial in the size of $I$ (data complexity). Also, the size of $G(I,F)$ is polynomial in the size of $I$. 
\end{proposition}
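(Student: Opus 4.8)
The plan is to bound both the number of vertices and the number of hyperedges of $G(I,F)$ by a polynomial in $|I|$, and then to argue that each part can be produced within polynomial time. Since the constraint set $F$ and the schema $\mathcal{S}$ are fixed, the whole difficulty is to control how large the hull can grow under the cascading closure of Definition~\ref{def:hull}.

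First I would pin down the set of constants that can occur in the hull. Let $\mathit{adom}(I)$ denote the set of all constants (rational or uninterpreted) appearing in the facts of $I$; since $\mathcal{S}$ is fixed, the maximum relation arity is a constant $d$, so $|\mathit{adom}(I)|\leq d\cdot|I|=O(|I|)$. The key claim is that every literal of $Hull(I,F)$ uses only constants from $\mathit{adom}(I)$, which I would prove by induction on the stages of the fixpoint construction of $Hull(I,F)$. The base case is immediate since $I\subseteq Hull(I,F)$. For the inductive step, suppose $e$ is a conflict w.r.t. a constraint~\eqref{eq:univ} all of whose facts already lie in the hull and hence, by the induction hypothesis, use only constants from $\mathit{adom}(I)$. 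For each $\neg P_j(s_j)$ in $e$ we have $s_j=\theta(\bar{y}_j)$ for the witnessing substitution $\theta$; the safety requirement $\bar{y}_1\cup\ldots\cup\bar{y}_m\subseteq\bar{x}_1\cup\ldots\cup\bar{x}_n$ guarantees that every variable of $\bar{y}_j$ also occurs in some positive atom, so $s_j$ is built only from the constants of the positive facts $R_i(t_i)$ of $e$, which lie in $\mathit{adom}(I)$ by hypothesis. Hence the newly added literals $P_j(s_j)$ and $\neg P_j(s_j)$ use only constants from $\mathit{adom}(I)$. This is the conceptual crux: safety prevents the cascading insertions illustrated in Example~\ref{ex:cascade} from ever introducing a fresh value, and everything after it is routine counting.

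Given the claim, $Hull(I,F)$ is contained in the set of all literals formed from a relation name of $\mathcal{S}$ with arguments drawn from $\mathit{adom}(I)$. There are at most $|\mathcal{S}|\cdot|\mathit{adom}(I)|^{d}$ such facts and twice as many literals, so $|Hull(I,F)|=O(|I|^{d})$, a polynomial since $d$ and $|\mathcal{S}|$ are constants. To compute the hull within this bound I would invoke the negation-free Datalog program associated with Definition~\ref{def:hull} (as in Example~\ref{ex:datalog1}): this program is fixed once $F$ is fixed, so its least fixpoint over the extensional database $I$ is computable in time polynomial in $|I|$ by the standard data-complexity bound for Datalog.

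Finally I would bound the hyperedges. Stabilizing edges are in bijection with the facts $P(t)$ for which both $P(t)$ and $\neg P(t)$ lie in the hull, so there are at most $|Hull(I,F)|/2=O(|I|^{d})$ of them, and they are read off directly. Every conflict hyperedge is a subset of $Hull(I,F)$ of cardinality at most $K$, the fixed bound on the number of atoms per constraint; there are $O(|Hull(I,F)|^{K})=O(|I|^{dK})$ such subsets, and for each one we test membership in a conflict by checking, against each of the finitely many constraints in $F$, whether a witnessing ground substitution exists---a test that runs in constant time in data complexity. Enumerating these subsets and applying the test therefore takes polynomial time and yields polynomially many conflict hyperedges, which together with the vertex and stabilizing-edge bounds establishes both the size and the construction-time claims. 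The only genuinely delicate point is the active-domain closure of the hull; once that is in hand, the proposition follows by elementary counting over a fixed schema and a fixed constraint set.
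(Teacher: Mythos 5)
Your proposal is correct and follows essentially the same route as the paper: the hull is computed by the fixed negation-free Datalog program (polynomial in data complexity), and the hyperedges are obtained by enumerating all subsets of the hull of cardinality at most the constant $K$ bounding the number of atoms per constraint. Your explicit active-domain induction via the safety requirement merely makes rigorous the polynomial bound on $|Hull(I,F)|$ that the paper leaves implicit in the Datalog translation, so it is a welcome elaboration rather than a different argument.
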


The presented extension of the conflict hypergraph is backward-com\-pat\-i\-ble with~\cite{ChMa04}: if we restrict the set of constraints to denial constrains only, the hull is equal to the original instance and we obtain the standard conflict hypergraph~\cite{ABCHRS03,ChMa04}. Finally, in the presence of denial constraints, because the repairs are obtained by deleting facts only, the repairs are maximal consistent subsets of the original instance. 
\begin{proposition}[\cite{ChMa04}]\label{prop:denial-repairs}
If $F$ is a set of denial constraints, each repair of $I$ w.r.t. $F$ corresponds to a maximal independent set of $G(I,F)$, and vice versa. 
\end{proposition}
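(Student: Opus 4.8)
The plan is to exploit the observation made just above the statement: when $F$ consists of denial constraints the hull collapses to $Hull(I,F)=I$. Indeed, every conflict w.r.t. a denial constraint is a set of plain facts (there are no $\neg P$ literals, since $m=0$), so the second clause of Definition~\ref{def:hull} never fires and no negated fact is ever introduced. Consequently $G(I,F)$ has vertex set exactly $I$, has no stabilizing edges, and its hyperedges are precisely the conflicts, i.e. the subsets of $I$ that violate some constraint. Under this reading an independent set is simply a subset $M\subseteq I$ containing no conflict, which by the definition of consistency is the same as a \emph{consistent subset of $I$}. I would record at the outset the \emph{anti-monotonicity} of denial constraints: since such a constraint can be violated only by the \emph{presence} of facts, any subset of a consistent instance is again consistent. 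This single structural property drives the whole argument.

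The key preliminary step is to show that every repair is a subset of $I$, i.e. that repairs of denial constraints are obtained by deletions alone. Given a repair $I'$, I would consider $I'\cap I$. As a subset of the consistent instance $I'$ it is consistent by anti-monotonicity, and a direct computation gives $\Delta(I,I'\cap I)=I\setminus I'\subseteq (I\setminus I')\cup(I'\setminus I)=\Delta(I,I')$, so $I'\cap I\leq_I I'$. The $\leq_I$-minimality of $I'$ then forces $I'\cap I=I'$, that is $I'\subseteq I$. Hence every repair is a consistent subset of $I$, and therefore an independent set of $G(I,F)$.

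With this in hand both directions are short. For a repair $I'$, I would establish maximality among independent sets by contradiction: if $I'\subsetneq M\subseteq I$ with $M$ consistent, then $\Delta(I,M)=I\setminus M\subsetneq I\setminus I'=\Delta(I,I')$, so $M<_I I'$, contradicting minimality. Conversely, for a maximal independent set $M$ (automatically a consistent subset of $I$), I would show it is $\leq_I$-minimal: if some consistent $J$ satisfied $\Delta(I,J)\subsetneq\Delta(I,M)=I\setminus M$, then $J\setminus I=\emptyset$ and $M\subsetneq J$, so $M\subsetneq J\subseteq Hull(I,F)$ with $J$ independent, contradicting the maximality of $M$; thus $M$ is a repair. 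The main obstacle is the preliminary step: it is the one place where the specific nature of denial constraints (as opposed to general universal constraints, where insertions genuinely occur) must be combined with the exact definition of $\leq_I$, and getting the symmetric-difference inclusions precisely right is the only point at which the proof can slip.
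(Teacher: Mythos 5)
Your proof is correct and takes essentially the same route as the paper, which does not prove the proposition itself (it is attributed to \cite{ChMa04}) but sketches exactly this argument in the surrounding text: for denial constraints the hull collapses to $I$, the hypergraph degenerates to the standard conflict hypergraph, and repairs are obtained by deletions only, hence coincide with the maximal consistent subsets of $I$, i.e. the maximal independent sets of $G(I,F)$. Your deletion-only lemma (via $I'\cap I\leq_I I'$) and the two symmetric-difference computations supply precisely the details the paper leaves implicit.
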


\rrmark{Added example. Changed ``{\em saturation}'' to a more intuitive ``{\em complementation}''.}\rrtext{The same equivalence does not hold for the extended conflict hypergraph. For instance, while the repair $I_1$ (Example~\ref{ex:cascade}) is a maximal independent set of $G_1$ (Fig.~\ref{fig:ech}), the repairs $I_2$, $I_3$, and $I_4$ are not. 
One reason for this is the use of negated facts in the extended conflict hypergraph. We observe, however, that if we complement $I_2=\{R(2,3),P(1)\}$ with the (relevant) negations of facts that are not present in $I_2$, we obtain a maximal independent set $\{R(2,3),P(1),\neg   P(2),\neg P(3)\}$. This holds for every repair.} 
\begin{proposition}\label{prop:saturated}
For any repair $I'\in Repairs(I,F)$ the set 
\[
Compl(I')=I'\cup\{\neg R(t)\in Hull(I,F) \sepbar R(t)\not\in I'\}
\]
is a maximal independent set of $G(I,F)$. 
\end{proposition}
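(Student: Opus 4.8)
The plan is to verify three things in turn: that $Compl(I')$ is a well-defined subset of the vertex set $Hull(I,F)$, that it is independent, and that it is maximal. The single tool driving the whole argument is the closure property of the hull (Definition~\ref{def:hull}, condition~2): whenever a conflict $e$ has all of its positive literals inside $Hull(I,F)$, every negated literal $\neg P(t)$ occurring in $e$, together with the matching fact $P(t)$, is forced into $Hull(I,F)$ as well.

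First I would settle the implicit well-definedness claim $I'\subseteq Hull(I,F)$, so that $Compl(I')$ really consists of vertices of $G(I,F)$. Arguing by contradiction, set $B=I'\setminus Hull(I,F)$ and $J=I'\cap Hull(I,F)=I'\setminus B$, and suppose $B\neq\emptyset$. Since $I\subseteq Hull(I,F)$ we have $B\cap I=\emptyset$, from which $\Delta(I,J)=\Delta(I,I')\setminus B\subsetneq\Delta(I,I')$, so $J<_I I'$. By $\leq_I$-minimality of the repair $I'$, the instance $J$ cannot be consistent, so some conflict $e$ is present in $J$. As $J\subseteq I'$ and $I'$ is consistent, this conflict must be caused by a fact removed from $I'$, i.e. some $P_j(s_j)\in B$ whose negation $\neg P_j(s_j)$ is a literal of $e$. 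But every positive literal of $e$ lies in $J\subseteq Hull(I,F)$, so hull closure forces $P_j(s_j)\in Hull(I,F)$, contradicting $P_j(s_j)\in B$. Hence $B=\emptyset$.

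Independence is then routine. No stabilizing edge is contained in $Compl(I')$, because $P(t)\in Compl(I')$ exactly when $P(t)\in I'$, while $\neg P(t)\in Compl(I')$ requires $P(t)\notin I'$, so the two literals never coexist. No conflict hyperedge $e$ is contained in $Compl(I')$ either: its positive literals would lie in $I'$ and for each negated literal $\neg P_j(s_j)$ we would have $P_j(s_j)\notin I'$, so $I'$ would violate the corresponding constraint, contradicting its consistency.

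For maximality I would show that adding any vertex $\ell\in Hull(I,F)\setminus Compl(I')$ creates a hyperedge. If $\ell=\neg P(s)$, then $P(s)\in I'\subseteq Compl(I')$ and $\{P(s),\neg P(s)\}$ is a stabilizing edge. If $\ell=R(t)$ is a positive fact with $R(t)\notin I'$ and $\neg R(t)\in Hull(I,F)$, then $\neg R(t)\in Compl(I')$ and again $\{R(t),\neg R(t)\}$ is a stabilizing edge. The remaining and genuinely hard case is $\neg R(t)\notin Hull(I,F)$: here I would first note that a positive fact enters the hull only by belonging to $I$ or by closure (which would also insert $\neg R(t)$), so $\neg R(t)\notin Hull(I,F)$ forces $R(t)\in I$. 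Then $I'\cup\{R(t)\}$ is strictly $\leq_I$-closer to $I$ than $I'$, hence inconsistent by minimality, and the witnessing conflict $e$ must involve $R(t)$. Its positive literals lie in $I'\cup\{R(t)\}\subseteq Hull(I,F)$, so by hull closure all of its negated literals are vertices of $G(I,F)$ too, making $e$ a bona fide conflict hyperedge; and since each such $\neg P_j(s_j)$ satisfies $P_j(s_j)\notin I'$, we get $e\subseteq Compl(I')\cup\{R(t)\}$. I expect this last case, together with the well-definedness step, to be the crux: in both I must convert ``removing or adding a fact destroys consistency'' into a statement about the hypergraph $G(I,F)$, and the only reason the new conflict is an actual hyperedge—rather than involving literals outside the hull—is precisely the closure condition of $Hull(I,F)$, so the careful bookkeeping is that every literal of the conflict lands both in $Hull(I,F)$ and in $Compl(I')\cup\{\ell\}$.
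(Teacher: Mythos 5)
Your proof is correct and follows essentially the same route as the paper's: independence falls out of the consistency of $I'$, and maximality is established by the same three-case analysis (negative literal, positive fact with negation in the hull, positive fact from $I$) resting on the same two hull-closure observations, with $\leq_I$-minimality converting the inconsistency of $I'\cup\{R(t)\}$ into a conflict hyperedge. The only difference is that you also verify explicitly that $I'\subseteq Hull(I,F)$, a well-definedness point the paper leaves implicit; this is a worthwhile addition, since it is exactly what guarantees that the conflicts you exhibit have all their literals inside the vertex set of $G(I,F)$.
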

\begin{proof}
Naturally, $Compl(I')$ is independent because $I'$ is consistent. Before showing that $Compl(I')$ is a maximal independent set we make two observations following from the construction of the hull:
\begin{enumerate}
\item[$1^o$]  For every $\neg R(t)$ in $Hull(I,F)$, the fact $R(t)$ is also present in $Hull(I,F)$. 
\item[$2^o$] For every fact $R(t)$, if $\neg R(t)$ is not present in $Hull(I,F)$, then $R(t)\in I$.
\end{enumerate}
Now, take any fact $p\in Hull(I,F)\setminus Compl(I')$. If $p=\neg R(t)$, then $R(t)\in Compl(I')$ and adding $\neg R(t)$ to $Compl(I')$ introduces the stabilizing edge $\{R(t),\neg R(t)\}$. Analogously, we show that $Compl(I')$ cannot be extended with a fact $R(t)$ whose negation is present in $Hull(I,F)$. The only remaining case is extending $Compl(I')$ with a fact $R(t)$ that belongs to $I$. We observe that for $R(t)\in I\setminus I'$, we have $I'\cup\{R(t)\}<_II'$ and so $I'\cup\{R(t)\}$ is inconsistent. Consequently, $Compl(I')\cup\{R(t)\}$ contains a conflicting hyperedge. 
\qed
\end{proof}

We note that the converse of Proposition~\ref{prop:saturated} is not necessarily true, i.e. for a maximal independent set $M$ of the extended conflict hypergraph, its {\em positive projection} $M^+=\{R(t)\sepbar R(t)\in M\}$ needs not to be a repair. For instance, if we take the maximal independent set $M=\{R(1,2),\neg P(2),R(2,3),\neg P(3)\}$ of $G_1$ (Fig.~\ref{fig:ech}), its positive projection $M^+=\{R(1,2),R(2,3)\}$ is not a repair. Nevertheless, the extended conflict hypergraph allows us to capture all the repairs. And since its size is polynomial in the size of $I$, we consider it to be a {\em compact representation} of the repairs of $I$. 
\begin{proposition}\label{prop:repair-mis}
For any maximal independent set $M$ of $G(I,F)$ either $M^+$ is a repair of $I$ w.r.t. $F$ or there exists a maximal independent set $N$ of $G(I,F)$ such that $N^+<_I M^+$.
\end{proposition}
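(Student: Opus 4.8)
The plan is to show that whenever $M^+$ is not a repair one can produce a maximal independent set $N$ with $N^+ <_I M^+$. In every case the witness $N$ will be built with Proposition~\ref{prop:saturated}: as soon as I have a repair $I'$ with $I' <_I M^+$, the set $N = Compl(I')$ is a maximal independent set with $N^+ = I'$ and we are finished. Thus the whole problem reduces to producing a strictly dominating repair (or, where that is impossible, a strictly dominating maximal independent set directly). I would also record once the standard fact that every consistent instance $J$ dominates a repair: the sets $\Delta(I,J')$ over consistent $J' \leq_I J$ are all subsets of the finite set $\Delta(I,J)$, so this family has a $\subseteq$-minimal member, whose instance is $\leq_I$-minimal among all consistent instances, i.e.\ a repair below $J$.

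The easy case is $M^+$ consistent. Being consistent but not a repair, $M^+$ is not $\leq_I$-minimal, so some consistent instance lies strictly $\leq_I$-below it; by the fact just recorded there is then a repair $I'$ with $I' <_I M^+$, and $N = Compl(I')$ settles this case.

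The hard case, which I expect to be the main obstacle, is $M^+$ inconsistent. Then $M^+$ violates a constraint (necessarily one with a non-empty right-hand side, since a denial conflict would already sit inside $M$), giving an active conflict $e = \{R_1(t_1),\ldots,R_n(t_n),\neg P_1(s_1),\ldots,\neg P_m(s_m)\}$ with every $R_i(t_i)\in M^+$ and every $P_j(s_j)\notin M^+$. Independence forces some $\neg P_{j_0}(s_{j_0})\notin M$, and since also $P_{j_0}(s_{j_0})\notin M$, maximality forces a second conflict blocking the insertion of $\neg P_{j_0}(s_{j_0})$. The intended move is to resolve $e$ and re-maximize to a new maximal independent set $N$ and then check $\Delta(I,N^+)\subsetneq\Delta(I,M^+)$. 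The delicate point is the \emph{direction} of $\Delta$: deleting one of the present facts $R_i(t_i)$ shrinks $\Delta$ only if that fact was inserted rather than original, so when the active conflict rests entirely on original facts one is forced to resolve it by an insertion and to follow the induced cascade of new conflicts (as in Example~\ref{ex:cascade}), verifying at each step that independence survives and that no already-satisfied conflict is reactivated. I would therefore run this case as an induction on a suitable measure such as $|Hull(I,F)\setminus M|$ or the number of conflicts active in $M^+$, carrying the cascade bookkeeping explicitly; guaranteeing that $\Delta$ strictly decreases through this process is the crux of the argument.
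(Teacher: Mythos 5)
Your ``easy case'' is, in substance, the paper's entire proof: the paper asserts outright that $M^+$ is consistent for every maximal independent set $M$ (``Naturally, $M^+$ is consistent\ldots''), then picks a repair $I'<_I M^+$ and takes $N=Compl(I')$ via Proposition~\ref{prop:saturated}, exactly as you do, including your preliminary observation that every consistent non-minimal instance strictly dominates some repair. Where you diverge is that you refuse to take that consistency for granted and isolate the inconsistent case as the real obstacle. Your instinct is sound: maximality of $M$ does not force $M^+$ to be consistent, because the hyperedge witnessing a violation contains negative literals that need not lie in $M$, and the insertion of each such literal can be blocked by a \emph{different} hyperedge. However, the cascade-and-induction plan you sketch for that case cannot be completed---not for bookkeeping reasons, but because the statement itself fails there. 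Concretely, take $\mathcal{S}=\{R(A),P(A),S(A)\}$, $F=\{R(x)\rightarrow P(x),\ P(x)\land S(x)\rightarrow\mathbf{false}\}$, and $I=\{R(1),S(1)\}$. Then $Hull(I,F)=\{R(1),S(1),P(1),\neg P(1)\}$, and the hyperedges are the conflicts $\{R(1),\neg P(1)\}$ and $\{P(1),S(1)\}$ together with the stabilizing edge $\{P(1),\neg P(1)\}$. The set $M=\{R(1),S(1)\}$ is a maximal independent set (adding either literal of $P(1)$ completes a hyperedge), and $M^+=I$ is inconsistent, hence not a repair; yet $\Delta(I,M^+)=\varnothing$, so no maximal independent set $N$ with $N^+<_I M^+$ can exist.

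So the gap in your proposal is real---the hard case is left as an admitted sketch---but it is unfillable, and it coincides exactly with the flaw in the paper: the word ``Naturally'' hides a false claim, and the proposition as literally stated is false. What is true, and what your easy-case argument proves in full, is the proposition under the additional hypothesis that $M^+$ is consistent, equivalently that $M=Compl(M^+)$, i.e.\ that $M$ contains every $\neg R(t)\in Hull(I,F)$ with $R(t)\notin M$ (my counterexample $M$ omits $\neg P(1)$). This restricted form is also all the paper ever needs: in Lemma~\ref{lemma:cqa-upper-bound} the proposition is applied only to sets of the form $Compl(I')$, and for those independence does imply consistency of the positive projection, since a conflict violated by $I'$ would have all its positive literals in $I'$ and all its negative literals in $Compl(I')$, contradicting independence. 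In short: you reproduced the paper's argument where it is valid, correctly flagged the step the paper glossed over, and the step you could not finish is precisely where the published claim breaks down.
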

\begin{proof}
Take any maximal independent set $M$ of $G(I,F)$ such that $M^+$ is not a repair. Naturally, $M^+$ is consistent and therefore there exists a repair $I'$ such that $I'<_I M^+$. It suffices to note that $Compl(I')^+=I'$ and by Proposition~\ref{prop:saturated} $Compl(I')$ is a maximal independent set of $G(I,F)$. 
\qed
\end{proof}
\subsection{Grounding constraints}
Often, we will find it more convenient to view conflict hyperedges as grounded integrity constraints. This helps to pinpoint the exact reasons for integrity violations and the facts that can be inserted and deleted to resolve the conflict. 
\begin{definition}
For any conflict hyperedge $$\{R_1(t_1),\ldots,R_n(t_n),\neg P_1(s_1),\ldots,\neg P_m(t_m)\}$$ in $G(I,F)$ the implication $$R_1(t_1)\land\ldots\land R_n(t_n)\rightarrow P_1(s_1)\lor\ldots\lor P_m(s_m)$$ is a {\em ground rule} (or simply {\em rule}) in $G(I,F)$. By $Rules(I,F)$ we denote the set of all ground rules in $G(I,F)$.
\end{definition}
A denial (full TGD, JD, or non-JD, resp.) rule is a rule obtained from a conflict w.r.t. a denial constraint (full TGD, JD, or non-JD resp.)
The facts in the lhs and the rhs of a ground rule are represented as sets, i.e. no particular order is assumed and duplicates are removed. 

Naturally, the cardinality of the set of the ground rules is equal to the number of the conflict hyperedges, and thus it is polynomial in the size of $I$. We also note that when considering the instances using facts from the hull only, satisfaction of the set of constraints $F$ implies the satisfaction of $Rules(I,F)$. The converse is also true because the hull contains all relevant facts. 
\begin{proposition} For any instance $J$ such that $J\subseteq Hull(I,F)$, $J\models F$ if and only if $J\models Rules(I,F)$. 
\end{proposition}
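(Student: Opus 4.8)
The plan is to prove the two implications separately, with the forward direction being essentially immediate and the backward direction carrying the real content and being the only place the hypothesis $J\subseteq Hull(I,F)$ is needed.

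For the forward direction, $J\models F \Rightarrow J\models Rules(I,F)$, I would argue by contraposition and note that this half does not even use $J\subseteq Hull(I,F)$. Suppose $J$ violates some rule $R_1(t_1)\land\ldots\land R_n(t_n)\rightarrow P_1(s_1)\lor\ldots\lor P_m(s_m)$ in $Rules(I,F)$, so $R_i(t_i)\in J$ for all $i$ while $P_j(s_j)\notin J$ for all $j$. By definition this rule arises from a conflict hyperedge $e=\{R_1(t_1),\ldots,R_n(t_n),\neg P_1(s_1),\ldots,\neg P_m(s_m)\}$, which, being a conflict w.r.t.\ some constraint of the form \eqref{eq:univ}, comes equipped with a ground substitution $\theta$ satisfying $\theta(\bar{x}_i)=t_i$, $\theta(\bar{y}_j)=s_j$, and $\varphi(\theta(\bar{x}))$. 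This same $\theta$ witnesses that $J$ falsifies that constraint: all positive atoms hold, all negated atoms $\neg P_j(s_j)$ hold since $P_j(s_j)\notin J$, and $\varphi$ holds. Hence $J\not\models F$.

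For the backward direction, $J\models Rules(I,F)\Rightarrow J\models F$, I would again use contraposition. Assume $J\not\models F$; then some constraint \eqref{eq:univ} and some ground substitution $\theta$ make its body true in $J$. Putting $t_i=\theta(\bar{x}_i)$ and $s_j=\theta(\bar{y}_j)$ yields a set of literals $e=\{R_1(t_1),\ldots,R_n(t_n),\neg P_1(s_1),\ldots,\neg P_m(s_m)\}$ that is a conflict w.r.t.\ that constraint (witnessed by $\theta$, with $\varphi(\theta(\bar{x}))$ holding), and for which $R_i(t_i)\in J$ and $P_j(s_j)\notin J$.

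The crux, and the step I expect to be the main obstacle, is to show that $e$ is genuinely a \emph{conflict hyperedge} of $G(I,F)$, i.e.\ that \emph{every} literal of $e$ lies in $Hull(I,F)$. The positive facts are immediate: $R_i(t_i)\in J\subseteq Hull(I,F)$. For the negated literals I would invoke the closure condition in the definition of the hull: since $e$ is a conflict whose every (positive) fact belongs to $Hull(I,F)$, condition~2 of the hull forces both $\neg P_j(s_j)$ and $P_j(s_j)$ into $Hull(I,F)$ for each $j$. Thus $e\subseteq Hull(I,F)$, so $e$ is a conflict hyperedge and its associated rule $R_1(t_1)\land\ldots\land R_n(t_n)\rightarrow P_1(s_1)\lor\ldots\lor P_m(s_m)$ belongs to $Rules(I,F)$. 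Since $R_i(t_i)\in J$ for all $i$ and $P_j(s_j)\notin J$ for all $j$, $J$ violates this rule, giving $J\not\models Rules(I,F)$ and completing the argument. The delicate point is the precise matching between the phrase ``every fact of $e$ belongs to $Hull(I,F)$'' in the hull's closure rule and the hypothesis: it is exactly $J\subseteq Hull(I,F)$ that places the positive facts of the violating grounding among the hull vertices, which in turn pulls the required negated literals into the hull and thereby manifests the rule that $J$ fails.
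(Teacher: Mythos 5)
Your proof is correct and follows exactly the route the paper intends: the paper states this proposition without a formal proof, remarking only that satisfaction of $F$ trivially yields satisfaction of $Rules(I,F)$ (rules being groundings of constraints) and that ``the converse is also true because the hull contains all relevant facts.'' Your key step---using $J\subseteq Hull(I,F)$ together with the hull's closure condition to pull the negated literals of the violating grounding into $Hull(I,F)$, so that the grounding is a genuine conflict hyperedge and its rule lies in $Rules(I,F)$---is precisely the content behind that remark, correctly identified and carried out.
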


\section{Repairing in the presence of full tuple-generating dependencies}
\label{sec:ftgd-repairs}
Now, we show that repair checking in the presence of full tuple-generating dependencies and denial constraints is tractable. We use the result to construct a {\em complete} and {\em sound} repairing algorithm. In the scope of this section we fix an instance $I$ and a set $F$ of denial constraints and full tuple-generating dependencies.

\subsection{Repair checking}
\rrmark{Sections 3.3 and 4.1 merged.} 
We begin by presenting an alternative characterization of repairs w.r.t. a set of full TGD and denial constraints. 
If we view full TGDs as Datalog programs we can use the standard consequence operator to identify the facts that need to be added to satisfy the constraints. 
\begin{definition}
For a set of facts $J\subseteq Hull(I,F)$, the operator of {\em immediate consequence} of $F$ on $J$ is defined as
\begin{align*}   
  T_F(J)=J\cup \{ P(s) \sepbar 
  & R_1(t_1)\land\ldots\land R(t_n)\rightarrow P(s)\in Rules(I,F)
    \quad{s.t.}\\
  & \{R(t_1),\ldots,R(t_n)\}\subseteq J\}.
\end{align*}
$T_F^*$ is defined as the transitive closure of $T_F$.  
\end{definition}
It is a classical result that $T_F^*(J)$ can be computed in time polynomial in the size of $J$~\cite{Baral03}. Now, we present an alternative characterization of a repair w.r.t. a set of full TGDs and denial constraints.
\rrtext{Essentially, every repair is obtained by closing under $T_F^*$ some subset of the original instance and verifying that the resulting instance is consistent.}
\begin{lemma}\label{lemma:repair-test}
$I'$ is a repair of $I$ w.r.t. $F$ if and only if the following conditions are satisfied:
\begin{enumerate}
\item[$(i)$] $I'$ is consistent,
\item[$(ii)$] $T^*_F(I'\cap I)=I'$,
\item[$(iii)$] \rrtext{\rrmark{More intuitive presentation}
there is no $R(t)\in I\setminus I'$ such that $J'=T^*_F(I'\cup\{R(t)\})$ is consistent and $J'\setminus I = I'\setminus I$.
}
\end{enumerate}
\end{lemma}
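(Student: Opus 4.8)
The plan is to prove both implications by exploiting the fact that, for a set $F$ of full TGDs and denial constraints, consistency of an instance $J\subseteq Hull(I,F)$ splits into two monotone ingredients: $J$ must be \emph{closed} under the full TGD rules (equivalently $T^*_F(J)=J$) and $J$ must \emph{satisfy} the denial rules, the latter property being preserved under taking subsets. Together with the fact that $T^*_F$ is a monotone, extensive, idempotent closure operator, these observations drive the whole argument. Throughout I use that every repair, and every instance competing with it, is a subset of $Hull(I,F)$, so that $T^*_F$ applies and the earlier proposition (consistency is equivalent to satisfaction of $Rules(I,F)$) is available.

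For the forward direction, assume $I'$ is a repair. Condition $(i)$ is immediate. For $(ii)$, the inclusion $T^*_F(I'\cap I)\subseteq I'$ follows from monotonicity together with the fact that the consistent instance $I'$ is a fixpoint of $T^*_F$. For the reverse inclusion I set $J=T^*_F(I'\cap I)$ and note that $J$ is consistent: it is closed under the TGD rules by construction, and it satisfies the denials because $J\subseteq I'$ and denial satisfaction is downward closed. A short computation gives $J\cap I=I'\cap I$, so if $J\subsetneq I'$ then $\Delta(I,J)\subsetneq\Delta(I,I')$, i.e. $J<_I I'$, contradicting minimality; hence $J=I'$. Condition $(iii)$ is proved similarly by contradiction: if some $R(t)\in I\setminus I'$ produced a consistent $J'=T^*_F(I'\cup\{R(t)\})$ with $J'\setminus I=I'\setminus I$, then $J'\supseteq I'\cup\{R(t)\}$ forces $I\setminus J'\subsetneq I\setminus I'$ while the inserted parts coincide, so $J'<_I I'$, again contradicting minimality.

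The backward direction is where the real work lies. Assuming $(i)$--$(iii)$, I suppose toward a contradiction that some consistent $I''$ satisfies $I''<_I I'$. Unpacking $\Delta(I,I'')\subseteq\Delta(I,I')$ gives $I'\cap I\subseteq I''\cap I$ and $I''\setminus I\subseteq I'\setminus I$. The crucial step combines $(ii)$ with the consistency of $I''$: since $I'=T^*_F(I'\cap I)$, $I'\cap I\subseteq I''$, and $I''$ is a fixpoint of $T^*_F$, monotonicity yields $I'\subseteq I''$. This forces $I''\setminus I=I'\setminus I$, so the strictness of $<_I$ must come from the original facts, giving some $R(t)\in(I''\cap I)\setminus(I'\cap I)\subseteq I\setminus I'$. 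Now $I'\cup\{R(t)\}\subseteq I''$, so $J'=T^*_F(I'\cup\{R(t)\})\subseteq T^*_F(I'')=I''$; hence $J'$ is consistent (closed under the TGDs, and a subset of the consistent $I''$ for the denials), and squeezing $I'\setminus I\subseteq J'\setminus I\subseteq I''\setminus I=I'\setminus I$ gives $J'\setminus I=I'\setminus I$. This exhibits exactly the configuration forbidden by $(iii)$, the desired contradiction.

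I expect the main obstacle to be this backward direction, specifically the realization that condition $(ii)$ upgrades the merely \emph{local} minimality expressed by $(iii)$ (no single deleted fact can be reinstated) into \emph{global} minimality. The inclusion $I'\subseteq I''$ is the linchpin: without it one would have to account for competitors that simultaneously retain more original facts and insert fewer derived ones, and a naive exchange argument would not obviously terminate. It is precisely the monotone closure structure of full TGDs---that any consistent instance containing the seed $I'\cap I$ must contain its entire closure $I'$---that collapses this possibility and reduces the refutation to a single application of $(iii)$. I would verify carefully the boundary facts this hinges on: that $T^*_F$ is a monotone, extensive, idempotent closure operator, that consistent instances within $Hull(I,F)$ are exactly its fixpoints satisfying the denials, and that denial satisfaction is downward closed, all of which follow from the form of the constraints and the proposition relating $F$ to $Rules(I,F)$.
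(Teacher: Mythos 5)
Your proof is correct and follows essentially the same route as the paper's: the forward direction uses the same minimality arguments, and the backward direction hinges on exactly the paper's key step, namely that condition $(ii)$ plus monotonicity of $T^*_F$ (and the fact that consistent instances are fixpoints of $T^*_F$) forces $I'\subseteq I''$ for any consistent competitor $I''\leq_I I'$, after which a single application of $(iii)$ to some $R(t)\in (I''\cap I)\setminus I'$ via $J'=T^*_F(I'\cup\{R(t)\})\subseteq I''$ yields the contradiction. The only cosmetic difference is that the paper takes $I''$ to be a repair below $I'$ and shows $I'=I''$, whereas you refute an arbitrary consistent $I''<_I I'$ directly.
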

\begin{proof}
For the {\em only if\/} part $I'\in Repairs(I,F)$ implies that (i) holds. 

To show (ii) we note that $T^*_F(I\cap I')$ by definition is the minimal set that contains $I\cap I'$ and satisfies all full TGDs from $F$. Hence, $T^*_F(I\cap I')\subseteq I'$ and, as a subset of a consistent instance $I'$, $T_F^*(I\cap I')$ satisfies also all denial constraints from $F$. We also note that $T_F^*(I\cap I')$ and $I'$ agree on the facts in $I$.  $I'$ is the $\leq_I$-minimal consistent instance that contains all of $I\cap I'$ and none of $I\setminus I'$, which implies that $I'=T^*_F(I\cap I')$.

To show (iii) we take any $R(t)\in I\setminus I'$ such that $J'=T_F^*(I'\cup \{R(t)\})$ is consistent. Since $I'$ is a $\leq_I$-minimal consistent instance, $J'\not\leq_I I'$. This implies  $J'\setminus I\not\subseteq I'\setminus I$ because $I\setminus J'\subseteq I\setminus I'$. 

For the {\em if\/} part take any $\leq_I$-minimal consistent instance $I''$ such that $I''\leq_I I'$. Such instance exists because by (i) $I'$ is consistent. $I''$ is a repair, and it satisfies (ii). Therefore it suffices to show that $I'\cap I = I''\cap I$. $I''\leq_I I'$ shows directly that $I'\cap I \subseteq I'' \cap I$. This also shows that $I'\subseteq I''$. 

To show $I''\cap I\subseteq I'\cap I$, suppose there exists $R(t)\in I''\cap I$ such that $R(t)\not\in I'\cap I$. Naturally, $T^*_F(I'\cup\{R(t)\})\subseteq I''$ and since $I'\subseteq I''$, we have \rrmark{Changed formatting.}
\[
T^*_F(I'\cup\{R(t)\})\setminus I \subseteq I''\setminus I\subseteq I'\setminus I.
\]
On the other hand, we note that $R(t)\in I\setminus I'$ and $T^*_F(I'\cup\{R(t)\})$ is consistent as a subset of a consistent instance closed under $T^*_F$. By (iii), we obtain
\[
T^*_F(I'\cup\{R(t)\}) \not\subseteq I'\setminus I,
\]
which is a contradiction. Thus, $I'=I''$ and $I'$ is a repair. 
\qed
\end{proof}
We observe that the conditions of Lemma~\ref{lemma:repair-test} can be checked in time polynomial in the size of the instances $I$ and $I'$. Consequently, 
\begin{theorem} \label{thm:rcheck-cyclic}
Repair checking is in PTIME for any set of denial constraints and 
full tuple-generating dependencies. 
\end{theorem}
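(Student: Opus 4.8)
The plan is to reduce repair checking directly to the three-part characterization established in Lemma~\ref{lemma:repair-test}. Given an input pair $(I,I')$, deciding membership in $\mathcal{B}_F$ amounts to testing conditions $(i)$, $(ii)$, and $(iii)$ of that lemma, so the whole task is to argue that each test runs in time polynomial in $|I|+|I'|$ under data complexity, with $F$ held fixed. Since the lemma's characterization is exact, conjoining three polynomial-time tests yields a polynomial-time decision procedure, which is precisely what the theorem asserts.

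The one enabling ingredient I would establish first is the polynomial computability of $T^*_F$ over the hull. By the earlier proposition, $Hull(I,F)$ has size polynomial in $|I|$ and is constructible in polynomial time, and $Rules(I,F)$ is likewise polynomial and polynomially constructible. Each round of $T_F$ fires every ground rule at most once and is therefore polynomial in the number of rules; moreover every fact that $T_F$ produces is the head of some ground rule and hence a vertex of $Hull(I,F)$. Thus $T_F$ never leaves the hull, the iteration stabilizes after polynomially many rounds, and $T^*_F(J)$ is computable in polynomial time for any $J\subseteq Hull(I,F)$ --- exactly the classical Datalog bound already cited in the paper.

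With this in hand the three conditions fall into place. Condition $(i)$, the consistency of $I'$, is decidable in polynomial time because $F$ is a fixed set of universal constraints each with at most $K$ atoms, so one inspects only the $O(|I'|^{K})$ candidate groundings of the constraint bodies. Condition $(ii)$ is checked by forming $I'\cap I$, computing $T^*_F(I'\cap I)$ by the bound above, and testing equality with $I'$, all polynomial. Condition $(iii)$ is a universally quantified statement over the facts $R(t)\in I\setminus I'$, of which there are at most $|I|$; for each such $R(t)$ one computes $J'=T^*_F(I'\cup\{R(t)\})$, checks its consistency as in $(i)$, and tests the set equality $J'\setminus I=I'\setminus I$, so the loop over polynomially many witnesses costs polynomial time in total.

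The genuine mathematical weight lies in Lemma~\ref{lemma:repair-test}, not in the theorem itself; given the lemma, the only points that need care are confirming that $T^*_F$ introduces no facts outside $Hull(I,F)$ (so its closures are polynomially bounded and efficiently computable) and that the apparently unbounded quantifier ``there is no $R(t)$'' in $(iii)$ in fact ranges only over the polynomially many candidates in $I\setminus I'$. Both observations are immediate once the hull is in play, so I expect no real obstacle beyond assembling these routine bounds into the final polynomial-time procedure.
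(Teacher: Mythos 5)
Your proposal is correct and takes essentially the same route as the paper: the paper obtains Theorem~\ref{thm:rcheck-cyclic} directly from Lemma~\ref{lemma:repair-test} by observing that its three conditions can be checked in time polynomial in the sizes of $I$ and $I'$, relying on the cited classical result that $T^*_F$ is computable in polynomial time. Your additional bookkeeping --- that $T_F$ never leaves $Hull(I,F)$, that consistency checking costs $O(|I'|^{K})$ for fixed $F$, and that condition $(iii)$ quantifies only over the polynomially many facts in $I\setminus I'$ --- simply spells out the details the paper leaves as an observation.
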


\subsection{Constructing a repair}
In this subsection we present a polynomial-time algorithm for constructing repairs of an instance w.r.t. a set of full TGDs and denial constraints. Rather than trying to resolve all the conflicts present in the instance, the algorithm constructs a repair from scratch: it begins with an empty instance, iterates over the facts of the original instance, and for every fact makes a decision whether to {\em discard} the fact or to {\em add} it to the constructed instance. It should be noted that those two actions, although related, are different and should not be confused with {\em inserting} and {\em deleting} facts from the original instance in order to resolve conflicts. 

Before we present the algorithm for full TGDs and denial constraints, we recall a simpler algorithm~\cite{StChMa06} that constructs repairs in the presence of denial constraints. 
\begin{algorithm}[htb]
\caption{\label{alg:sound-repair} Constructing a repair of $I$ w.r.t. a set of denial constraints $F$}
\begin{tabbing}
mm\=xxx\=xxx\=xxx\=xxx\=xxx\=\kill
{\footnotesize\tt ~1:} \> $I^o\gets I$\\
{\footnotesize\tt ~2:} \> $J\gets\varnothing$\\
{\footnotesize\tt ~3:} \> {\bf while} $I^o\neq\varnothing$ {\bf do}\\
{\footnotesize\tt ~4:} \>\> {\bf choose} $R(t)\in I^o$\\
{\footnotesize\tt ~5:} \>\> $I^o\gets I^o\setminus\{R(t)\}$\\
{\footnotesize\tt ~6:} \>\> {\bf if} $J\cup\{R(t)\}\models F$ {\bf then}\\
{\footnotesize\tt ~7:} \>\>\> $J\gets J\cup\{R(t)\}$\\
{\footnotesize\tt ~8:} \> {\bf return} $J$
\end{tabbing}
\end{algorithm}
Algorithm~\ref{alg:sound-repair} constructs a maximal consistent subset of the input instance by iterating over the facts of the input instance (using $I^o$ to store the remaining facts) and adding the current fact if doing so does not violate the constraints. Since maximal consistent subsets of the original instance correspond to maximal independent set of the conflict hypergraph, by Proposition~\ref{prop:denial-repairs} this algorithm always returns a repair, i.e. it is {\em sound}. We also note that it is {\em complete}, i.e. every repair of the original instance can be constructed: it suffices to chose first the facts of the desired repair. 

An approach that constructs a maximal consistent subset of the original instance is also sound for constructing a repair in the presence of denial constraints and general TGDs~\cite{Ko08}. However, in the presence of TGDs a repair needs not be a subset of the original instance and consequently this approach is not necessarily complete. 
\begin{example}
For the schema $\mathcal{S}=\{R(A),P(A)\}$ with a set of constraints $F=\{R(x)\rightarrow P(x)\}$ consider the instance $I=\{R(1),R(2)\}$. This instance has four repairs w.r.t. $F$:
\begin{align*}
I_1'=\varnothing, I_2'=\{R(1),P(1)\}, I_3'=\{R(2),P(2)\}, I_4'=\{R(1),P(1),R(2),P(2)\}.
\end{align*}
Only the repair $I_1'$ is a subset of the original instance. 
\end{example}

Our approach extends Algorithm~\ref{alg:sound-repair} by allowing it also to add a fact together with the facts implied by full TGDs. In this way, for instance, the repair $I_2$ is obtained by adding $R(1)$ together with $P(1)$ and discarding $R(2)$. 
Hence, from now on when we {\em add} a fact, we implicitly add the facts that are required to satisfy full TGDs (i.e., we keep the instance closed under $T_F^*$). 

In the general scenario, because of the complex interaction among facts, the decision whether to add or to discard a fact becomes quite intricate. We illustrate this in the following example.  
\begin{example}
We take the schema $\mathcal{S}=\{R(A,B,C), P(A,B)\}$ with a set of constraints $F=\{R(x,y,z)\rightarrow P(x,y), P\fd{}A\rightarrow B\}$.

First, we consider the instance $I_1=\{P(1,1),R(1,2,1)\}$. We start with an empty instance and begin with the fact $P(1,1)$. We add it as doing so does not violate the constraints. Adding the next fact $R(1,2,1)$ would require adding also the fact $P(1,2)$. This would, however, create a conflict. Hence, we must discard $R(1,2,1)$. The obtained instance $\{P(1,1)\}$ is a repair of $I_1$. 

\rrtext{
\rrmark{Added more details to the example.}
Now, let's consider the instance $I_2=\{R(1,2,1),R(1,2,2)\}$ and begin with the fact $R(1,2,1)$. 
Adding the fact $R(1,2,1)$ would require adding also the fact $P(1,2)$ which is not present in the instance constructed so far. Therefore, we can consider both adding and discarding $R(1,2,1)$. We decide to discard it, but we note that the set of facts $\{P(1,2)\}$ cannot become included later on in the constructed instance. We store it on the list of {\em banned} sets. Intuitively, a banned set contains tuples whose mutual absence in the constructed instance justifies discarding some other tuple. 
Consequently, we must prevent adding any tuples which may cause a banned set to be included in the constructed instance. For instance, adding the next fact $R(1,2,2)$ would require adding also $P(1,2)$, and cause inclusion of the banned set $\{P(1,2)\}$. Hence, we discard $R(1,2,2)$, and finally, obtain the empty repair $\varnothing$. We observe that if the fact $R(1,2,2)$ were added {\em(}together with $P(1,2)${\em)}, the obtained instance $\{R(1,2,2),P(1,2)\}$, although consistent, would not be $\leq_{I_2}$-minimal (the repair $\{R(1,2,1),R(1,2,2),P(1,2)\}$ is relatively closer to $I_2$).
}

Finally, we consider the instance $I_3=\{R(1,1,1), P(1,1), P(1,2)\}$ and begin with the fact $R(1,1,1)$. Discarding it would require memorizing the banned set $\{P(1,1)\}$. However, this set is not appropriate for our purposes because the fact $P(1,1)$ is present in the original instance and later we might be forced to add it to the constructed instance. Hence, we add the fact $R(1,1,1)$ (together with $P(1,1)$). Next, we add the fact $P(1,1)$ but ignore the fact $P(1,2)$. The constructed instance $\{R(1,1,1),P(1,1)\}$ is a repair.
\end{example}

Now, we present a sound and complete Algorithm~\ref{alg:construct-repair} constructing a repair of a (possibly inconsistent) instance $I$ w.r.t. a set $F$ of full TGDs and denial constraints. 
\begin{algorithm}[htb]
\caption{\label{alg:construct-repair} Constructing a repair of $I$ w.r.t. a set of full TGDs $F$}
\begin{tabbing}
mm\=xxx\=xxx\=xxx\=xxx\=xxx\=\kill
{\footnotesize\tt ~1:} \>$I^o\gets I$\\
{\footnotesize\tt ~2:} \>$J\gets\varnothing$\\
{\footnotesize\tt ~3:} \>$Banned\gets\varnothing$\\
{\footnotesize\tt ~4:} \>{\bf while} $I^o\neq\varnothing$ {\bf do}\\
{\footnotesize\tt ~5:} \>\>{\bf choose} $R(t)\in I^o$ {\bf and} $b\in\{\mathbf{true},\mathbf{false}\}$\\
{\footnotesize\tt ~6:} \>\>$I^o\gets I^o\setminus\{R(t)\}$\\
{\footnotesize\tt ~7:} \>\>$J'\gets T^*_F(J\cup\{R(t)\})$\\
{\footnotesize\tt ~8:} \>\>{\bf if} $
\begin{cases}
J'\not\models F \,\text{\bf{}or} 
&\text{(*)}\\
b\land J'\setminus(I\cup J)\neq \varnothing \,\text{\bf{}or} &\text{(**)}\\
\exists B\in Banned\sepdot B\subseteq J' 
&\text{(***)}
\end{cases}
$ {\bf then}\\
{\footnotesize\tt ~9:} \>\>\> {\bf if} $J'\models F$ {\bf then} $Banned\gets Banned\cup\{J'\setminus(I\cup J)\}$\\
{\footnotesize\tt 10:} \>\> {\bf else} $J\gets J'$\\
{\footnotesize\tt 11:} \> {\bf return} $J$
\end{tabbing}
\end{algorithm}
It starts with an empty instance $J$ and iterates over the facts of the original instance $I$. $Banned$ is a collection of {\em banned} sets of facts, i.e. sets that are not to be included in the constructed instance $J$. We note that {\em some} elements of those sets can, however, be included in $J$.

For every fact $R(t)$ the algorithm makes a choice $b$ whether or not it should try discarding the fact. \rrmark{Clarified on the choice of $b$.}\rrtext{Here, this choice is nondeterministic but, in practice, it could be based on the user preference.} The fact $R(t)$ is discarded if one of the following conditions is satisfied:
\begin{itemize}
\item[(*)] Adding $R(t)$ violates the constraints. 
\item[(**)] Adding $R(t)$ does not violate constraints, but $b$ is set to {\bf true} and adding $R(t)$ implies adding facts that are not present in $J$ and $I$. 
\item[(***)] Adding $R(t)$ leads to inclusion of some previously created {\em banned} set. 
\end{itemize}
\rrmark{Added line references.}
If the fact is discarded even though adding it does not violate the constraints, a banned set is added to $Banned$ (\algline{9}). Finally, if none of the conditions above is satisfied, the fact is added to $J$ (\algline{10}).

Before proving that Algorithm~\ref{alg:construct-repair} is sound and complete, we make several observations. First, we note that $J$ is always closed under $T^*_F$. Moreover, $J$ is always consistent because the condition (*) ensures that facts are added to $J$ only if doing so does not violate the constraints. Finally, the main loop of Algorithm~\ref{alg:construct-repair} satisfies the following invariant:
\[
\mathtt{Inv}\equiv\forall B\in Banned\sepdot B\not\subseteq J.
\]
Indeed, the invariant is trivially satisfied before the execution enters the main loop. Also, the condition (***) ensures that facts are added to the constructed instance $J$ only if doing so does not violate $\mathtt{Inv}$. Hence, we need only to check that creating a new banned set does not violate the invariant. We observe that a new banned set is created only if (**) or (***) are satisfied. (**) implies directly that the new banned set satisfies $\mathtt{Inv}$. (***) implies this implicitly. In this case there exists a banned set $B$ such that $B\subseteq J'=T^*_F(J\cup\{R(t)\})$. We note that all banned sets contain no fact from $I$. $B\not\subseteq J$ by $\mathtt{Inv}$. Thus, the newly created banned set $B'=J'\setminus (I\cup J)$ is not included in $J$. 

\begin{theorem}
Algorithm~\ref{alg:construct-repair} is a sound and complete repairing algorithm for any instance and any set of denial constraints and full tuple-generating dependencies. Algorithm~\ref{alg:construct-repair} works in time polynomial in the size of the input instance $I$.
\end{theorem}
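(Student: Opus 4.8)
The plan is to establish the three claims — soundness, completeness, and the polynomial time bound — separately, using the characterization of repairs in Lemma~\ref{lemma:repair-test} together with the three facts about the main loop already established above: that $J$ is always closed under $T^*_F$, always consistent, and always satisfies the invariant $\mathtt{Inv}$. For \textbf{soundness} I would verify that the returned instance $J$ meets conditions (i)--(iii) of Lemma~\ref{lemma:repair-test}. Condition (i) is immediate from the observation that $J$ is consistent at every step. For condition (ii), $T^*_F(J\cap I)=J$, I would argue by induction on the iterations: since $T^*_F$ is a closure operator and each fact added on \algline{10} sets $J$ to $T^*_F$ of the previous value together with some $R(t)\in I$, one gets $T^*_F(J_{new}\cap I)=T^*_F((J\cap I)\cup\{R(t)\})=J_{new}$, so $J=T^*_F(J\cap I)$ is maintained.

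The crux is condition (iii). I would fix $R(t)\in I\setminus J$; such a fact must have been \emph{discarded} when processed, since otherwise it would belong to the monotonically growing $J$. Writing $J_k$ for the value of $J$ at that moment and $J'_k=T^*_F(J_k\cup\{R(t)\})$, there are two cases. If the discard was triggered because $J'_k\not\models F$, then $J'_k$ violates a denial constraint (any $T^*_F$-closed set satisfies the full TGDs); as $J_k\subseteq J$ and $T^*_F$ is monotone we have $J'_k\subseteq T^*_F(J\cup\{R(t)\})$, and since denial violations persist in supersets, $T^*_F(J\cup\{R(t)\})$ is inconsistent, verifying (iii) for this fact. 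Otherwise $J'_k\models F$, so a banned set $B=J'_k\setminus(I\cup J_k)$ was created on \algline{9}. Because $\mathtt{Inv}$ holds for the final $J$ and the only-growing collection $Banned$, we get $B\not\subseteq J$, hence a witness $q\in B$ with $q\notin J$; by construction $q\notin I$, and by monotonicity $q\in J'_k\subseteq T^*_F(J\cup\{R(t)\})$. Thus $T^*_F(J\cup\{R(t)\})\setminus I\supsetneq J\setminus I$, so (iii) holds again. I expect this step — transporting a banned set created against an earlier, smaller $J_k$ into a witness against the final $J$ — to be the main obstacle, and it is precisely monotonicity of $T^*_F$ together with $\mathtt{Inv}$ that makes it go through.

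For \textbf{completeness} I would fix an arbitrary repair $I^*$ and exhibit nondeterministic choices that produce it. I would first schedule the facts of $I^*\cap I$, each with the choice $b=\mathbf{false}$: then (**) is vacuous, (***) cannot fire (no banned set exists yet), and (*) fails because $J\cup\{R(t)\}\subseteq I^*$ forces $J'\subseteq I^*$, which is consistent; so each such fact is added, and after this phase $J=T^*_F(I^*\cap I)=I^*$ (using that $I^*$ is $T^*_F$-closed by Lemma~\ref{lemma:repair-test}(ii)). I would then process the facts of $I\setminus I^*$ with $b=\mathbf{true}$. For each such $R(t)$, condition (iii) of Lemma~\ref{lemma:repair-test} applied to the repair $I^*$ guarantees that $J'=T^*_F(I^*\cup\{R(t)\})$ is either inconsistent, triggering (*), or satisfies $J'\setminus I\neq I^*\setminus I$, hence $J'\setminus(I\cup I^*)\neq\varnothing$, triggering (**) since $b=\mathbf{true}$. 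In either case $R(t)$ is discarded and $J$ is left unchanged, so the run returns exactly $I^*$.

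Finally, for the \textbf{polynomial time bound}, the loop executes $|I|$ times, and all manipulated facts lie in $Hull(I,F)$, whose size is polynomial in $|I|$ by the proposition preceding this development. Each iteration computes a single $T^*_F$-closure (polynomial by the cited classical result), tests $J'\models F$ in data complexity, and scans $Banned$, which gains at most one set per iteration and so contains at most $|I|$ sets, each of polynomial size. Hence each iteration runs in polynomial time, and therefore so does the whole algorithm.
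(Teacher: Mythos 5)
Your proposal is correct and follows essentially the same route as the paper's proof: soundness by verifying conditions (i)--(iii) of Lemma~\ref{lemma:repair-test} using the invariant $\mathtt{Inv}$ and monotonicity of $T^*_F$, completeness via the same two-phase scheduling (facts of $I^*\cap I$ with $b=\mathbf{false}$, then facts of $I\setminus I^*$ with $b=\mathbf{true}$), and the same counting argument for the polynomial bound. The only differences are presentational --- you handle condition (iii) by a case split on whether (*) fired at the discarding iteration, where the paper argues contrapositively, and you spell out the inductive maintenance of $T^*_F(J\cap I)=J$, which the paper treats as immediate.
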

\begin{proof}
{\em Soundness}. We show that for any execution of Algorithm~\ref{alg:construct-repair} the returned instance, denoted here by $I'$, satisfies the conditions (i), (ii), and (iii) of Lemma~\ref{lemma:repair-test}. 

The conditions (i) and (ii) are satisfied trivially because, as observed before, at every time the constructed instance $J$ is consistent and closed under $T^*_F$. 

To show (iii) we take any $R(t)\in I\setminus I'$ such that $J'=T^*_F(I'\cup\{R(t)\})\models F$. Consider the iteration of the main loop during which $R(t)$ was chosen and note that $J\subseteq I'$. We observe that the condition (*) is not satisfied, but since $R(t)$ is not added to $I'$, (**) or (***) is. Consequently, a banned set $B=J'\setminus(I\cup J)$ is added to $Banned$. It is easy to see that 
$B\subseteq J'\setminus I$. On the other hand, $\mathtt{Inv}$ implies that $B\not\subseteq I'\setminus I$, which proves that $J'\setminus I\not\subseteq I'\setminus I$.

{\em Completeness}. Take any repair $I'\in Repairs(I,F)$ and consider an execution of Algorithm~\ref{alg:construct-repair} during which: 1) in the first phase, it selects all facts from $I\cap I'$ and sets $b$ to {\bf false}; 2) in the second phase, it chooses all facts from $I\setminus I'$ and sets $b$ to {\bf true}. We show that this execution returns $I'$.

First, we consider the facts chosen in the first phase and we show that none of the conditions (*), (**), and (***) is satisfied. Indeed, the condition (*) is not satisfied because $I'$ is consistent (and so is any of its subsets closed under  $T^*_F$). Trivially, the condition (**) is also not satisfied because $b$ is chosen to be $false$.  The condition (***) is not satisfied because during the first phase $Banned$ remains empty. Note that the instance $J$ obtained after the first phase consists exactly of the facts of $I'\cap I$ closed under $T^*_F$. By (ii) of Lemma~\ref{lemma:repair-test},  $J=I'$. 

Now, we show with a simple inductive argument that none of the facts from $I\setminus I'$ are added in the second phase. By (iii) of Lemma~\ref{lemma:repair-test}, for a fact $R(t)\in I\setminus I'$, if (*) is not satisfied, then $T^*_F(J\cup\{R(t)\})\setminus I\not\subseteq J\setminus I$. This implies that $T^*_F(J\cup\{R(t)\})\setminus (I\cup J)$ is nonempty, i.e. (**) is satisfied ($b$ is {\bf true}).

We finish the proof by showing that Algorithm~\ref{alg:construct-repair} works in time polynomial in the size of $I$. First, we observe the algorithm iterates over the facts of $I$. For every fact it creates at most one banned set, and so the cardinality of $Banned$ is bounded by the size of $I$. Each banned set is a subset of $Hull(I,F)$, and thus of polynomial size as well. Consequently, for every fact all of the conditions (*), (**), and (***) can be checked in polynomial time. 
\qed
\end{proof}

\section{Consistent query answering for full TGDs}
\label{sec:ftgd-cqa}
In this section we investigate consistent query answering in the presence of full tuple-generating dependencies and denial constraints. We begin by presenting a polynomial algorithm for computing consistent answers to quantifier-free queries in the presence of acyclic full TGDs and denial constraints. Next, we extend this approach to handle join dependencies as well. Finally, we show that for arbitrary full TGDs consistent query answering is coNP-complete.

\subsection{Warm-up: acyclic full TGDs and denial constraints}
\label{sec:aftgd}
In this section we extend the algorithm computing consistent query answers to closed quantifier-free queries in the presence of denial constraints~\cite{ChMa04}. The main idea of the algorithm is to check if there exists a repair that does not satisfy the query, i.e. satisfies the negated query. The negated query specifies the facts that need to be present and the facts that need to be absent in the repair. To find if the repair in question can be constructed we devise {\em supports} and {\em blocks} of the facts that need to be respectively present and absent in the repair. Intuitively, a {\em support} of a fact is a set of facts from the original instance that, if contained in the repair, guarantees the presence of this fact. Conversely, a {\em block} of a fact specifies facts that lead to a conflict with this fact and so their presence guarantees the absence of the fact in the repair. Additionally, a block can specify facts that must not be included during the repairing process. Finally, we show how to check that a combination of supports and blocks can be realized in the same repair. 

We fix an instance $I$ and a set $F$ of denial constraints and acyclic full tuple-generating dependencies. For brevity, we use inference-like rules to define the supports and blocks of a fact. A rule of the form $\frac{A}{B}$ reads: ``{\em $B$ provided $A$}''. Also, in $A$ we often use ground rules which implicitly belong to $Rules(I,F)$. 
\begin{definition}[Support]\label{def:support}
A {\em support} of a fact $R(t)\in Hull(I,F)$ is a subset of $I$ defined with the following rules:
\begin{gather*}
\infer[\mathtt{S}_0]{
  R(t)\in I
}{
  \{R(t)\}\in Supp(R(t))
}\\[5pt]
\infer[\mathtt{S}_1]{
  \begin{aligned}
    &&
    &R_1(t_1)\land\ldots\land R_n(t_n)\rightarrow R(t)\\
    R(t)&\not\in I&
    &S_i\in Supp(R_i(t_i))\quad\forall i\in\{1,\ldots,n\}
  \end{aligned}
} {
  \bigcup_i S_i \in Supp(R(t))
}
\end{gather*}
where $Supp(R(t))$ is the set of all supports of $R(t)$.
\end{definition} 
\rrmark{Added an intuitive description of supports.}
\rrtext{Essentially, a support of a tuple from the original instance is a singleton consisting of that tuple (rule $\mathtt{S}_0$). If a tuple does not belong to the original instance but there is a full TGD rule having the tuple in its rhs, then a support of that tuple is a union of the supports of the tuples in the lhs (rule $\mathtt{S}_1$).
}
\begin{example}\label{ex:support}
We take the schema $\mathcal{S}=\{R(A,B,C), P(A,B), Q(A)\}$ with the set of constraints $F=\{R(x,y,z)\rightarrow P(x,y), P(x,y)\rightarrow Q(x), P\fd{}A\rightarrow B\}$, and consider the instance $I=\{R(1,1,1),R(1,2,1), P(1,2), Q(2)\}$. The hull is $Hull(I,F)=I\cup\{P(1,1),Q(1)\}$. The set of ground rules is
\begin{align*}
Rules(I,F)=\{
&R(1,1,1)\rightarrow P(1,1), R(1,2,1)\rightarrow P(1,2), \\
&P(1,1)\rightarrow Q(1), P(1,2)\rightarrow Q(1),\\
&P(1,1)\land P(1,2)\rightarrow \mathbf{false}\}.
\end{align*}
$Repairs(I,F)$ consists of the following instances:
\begin{gather*}
I_1'=\{Q(2)\},\quad
I_2'=\{R(1,1,1),P(1,1),Q(1), Q(2)\},\\
I_3'=\{R(1,2,1),P(1,2),Q(1), Q(2)\}.
\end{gather*}
The facts from $I$ have simple supports obtained with the rule  $\mathtt{S}_0$:
\begin{align*}
Supp(R(1,1,1))&=\{\{R(1,1,1)\}\},&
Supp(P(1,2))&=\{\{P(1,2)\}\},\\
Supp(R(1,2,1))&=\{\{R(1,2,1)\}\},&
Supp(Q(2))&=\{\{Q(2)\}\}.
\end{align*}
The fact $P(1,1)$ has only one support obtained with the rule $\mathtt{S}_1$:
\begin{gather*}
Supp(P(1,1))=\{\{R(1,1,1)\}\}.
\end{gather*}
Finally, $Q(1)$ has two supports:
\begin{gather*}
Supp(Q(1))=\{\{R(1,1,1)\},\{P(1,2)\}\}.
\end{gather*}
\end{example}
The supports of a fact define conditions that ensure that it is present in the repair.  
\begin{proposition}\label{prop:support}
For every $I'\in Repairs(I,F)$ and every $R(t)\in Hull(I,F)$ 
\[
R(t)\in I' \iff \exists S\in Supp(R(t))\sepdot S\subseteq I'.
\]
\end{proposition}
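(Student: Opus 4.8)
The plan is to prove the two implications separately, reducing each to an induction that mirrors an inductive definition already in play. Two earlier facts will be used throughout. First, every repair $I'$ is consistent and satisfies $I'\subseteq Hull(I,F)$: consistency is by definition of a repair, and the inclusion follows from Proposition~\ref{prop:saturated}, since $Compl(I')$ is an independent set of $G(I,F)$ and hence a subset of the vertex set $Hull(I,F)$, while $I'\subseteq Compl(I')$. Because $I'\subseteq Hull(I,F)$ is consistent, the proposition equating $\models F$ with $\models Rules(I,F)$ on subsets of the hull gives $I'\models Rules(I,F)$. Second, I would use condition (ii) of Lemma~\ref{lemma:repair-test}, namely $T^*_F(I'\cap I)=I'$.

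For the direction $(\Leftarrow)$, I would assume $S\in Supp(R(t))$ with $S\subseteq I'$ and argue by structural induction on the derivation of $S$ via the rules $\mathtt{S}_0$ and $\mathtt{S}_1$. In the base case ($\mathtt{S}_0$) we have $S=\{R(t)\}$ with $R(t)\in I$, so $S\subseteq I'$ immediately yields $R(t)\in I'$. In the inductive case ($\mathtt{S}_1$) there is a ground rule $R_1(t_1)\land\ldots\land R_n(t_n)\rightarrow R(t)$ and supports $S_i\in Supp(R_i(t_i))$ with $S=\bigcup_i S_i\subseteq I'$; then each $S_i\subseteq I'$, so by the induction hypothesis every $R_i(t_i)\in I'$, and since $I'\models Rules(I,F)$ satisfies this (single-head full-TGD) rule, $R(t)\in I'$.

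For the direction $(\Rightarrow)$, I would use $I'=T^*_F(I'\cap I)$ and induct on the stage at which a fact enters this least fixpoint. Set $T^0=I'\cap I$ and $T^{k+1}=T_F(T^k)$, and for $f\in I'$ let its rank be the least $k$ with $f\in T^k$. A fact of rank $0$ lies in $I'\cap I\subseteq I$, so rule $\mathtt{S}_0$ supplies the singleton support $\{R(t)\}\subseteq I'$. A fact $R(t)$ of rank $k+1$ is produced by some ground rule $R_1(t_1)\land\ldots\land R_n(t_n)\rightarrow R(t)$ whose body atoms all have rank $\le k$; crucially such an $R(t)$ cannot lie in $I$, for otherwise its rank would be $0$, so rule $\mathtt{S}_1$ is applicable. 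Combining, via the induction hypothesis, the supports $S_i\subseteq I'$ of the body atoms then yields $\bigcup_i S_i\in Supp(R(t))$ with $\bigcup_i S_i\subseteq I'$.

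The two inductions are short; the point that I expect to need the most care is the bookkeeping that connects them to the surrounding framework: verifying (a) that $I'\subseteq Hull(I,F)$, which is what licenses invoking $I'\models Rules(I,F)$ in $(\Leftarrow)$, and (b) that rank $0$ coincides exactly with $I'\cap I$, which is what cleanly separates the $\mathtt{S}_0$ and $\mathtt{S}_1$ cases and, in the inductive step of $(\Rightarrow)$, guarantees the hypothesis $R(t)\notin I$ required by $\mathtt{S}_1$. I would note in passing that acyclicity of the full TGDs is not needed for the equivalence itself—both inductions are well-founded in any case—but it is what keeps each set $Supp(R(t))$ finite.
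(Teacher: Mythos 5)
Your proof is correct, but your argument for the harder (only-if) direction takes a genuinely different route from the paper's. The paper proves Proposition~\ref{prop:support} by induction over the position of $R$ in a topological sort of the dependency graph $\mathcal{D}(\mathcal{S},F)$; the detailed argument, given only for the JD generalization (Proposition~\ref{prop:jd-support} in Appendix~\ref{app:omitted-proofs}), collects the facts of $I'$ lacking a proper support into a set $T$, shows that $I'\setminus T$ is still consistent, and then invokes $\leq_I$-minimality of the repair to force $T=\varnothing$. You instead invoke condition (ii) of Lemma~\ref{lemma:repair-test}, $T^*_F(I'\cap I)=I'$, and induct on the stage at which a fact enters this fixpoint, so that minimality is used only through that lemma rather than re-derived; your observation that rank $\geq 1$ forces $R(t)\notin I$, licensing rule $\mathtt{S}_1$, is exactly the point that makes this work. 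This is more modular and makes explicit that acyclicity plays no role in the equivalence itself, which agrees with the paper's own remark that acyclicity is used only in Propositions~\ref{prop:block} and~\ref{prop:support-and-block-size}. What the paper's structure-level induction buys in exchange is that it generalizes to the JD setting of Proposition~\ref{prop:jd-support}, where supports must be derivable within bounded depth ($\ell$-supports): your fixpoint ranks are unbounded, so converting them into depth-bounded derivations would require the rule-weaving machinery of Lemma~\ref{lemma:jd-weaving} anyway. Your ``if'' direction (induction on the derivation, via $I'\subseteq Hull(I,F)$ and $I'\models Rules(I,F)$) coincides with the paper's. One small correction to your closing remark: $Supp(R(t))$ is finite regardless of cyclicity, being a collection of subsets of the finite instance $I$; what acyclicity actually buys is the polynomial bound on the number of supports and blocks (Proposition~\ref{prop:support-and-block-size}) and the block characterization (Proposition~\ref{prop:block}), not finiteness.
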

The proof is by a simple induction over the position of the relation name $R$ in a topological sort of the dependency graph $\mathcal{D}(\mathcal{S},F)$. The proof of a more general claim  (Proposition~\ref{prop:jd-support}) can be found in Appendix~\ref{app:omitted-proofs}.

Blocking a fact is more complex. First, the facts that are not present in the original instance $I$ need to be added in the process of creating a repair. In this case we can explicitly forbid adding this fact (rule $\mathtt{B}_0$). For instance, the fact $Q(1)$ can be blocked this way. Facts that belong to $I$ can be blocked using conflicts they are involved in. If a fact is involved in a denial conflicts then it is blocked by the presence of other facts that together lead to a conflict (rule $\mathtt{B}_1$). Finally, blocks can be propagated using full TGDs (rule $\mathtt{B}_2$).

Consequently, a block of a fact consists of two sets: one indicating the facts from $I$ that need to be present in the repair and the other one indicating a fact that must not be added to the repair. 
\begin{definition}[Block]\label{def:block}
A {\em block} of a fact $R(t)\in Hull(I,F)$ is a pair that consists of a subset of $I$ and a set of at most one fact from $Hull(I,F)\setminus I$, defined with the following rules: 
\begin{gather*}
\infer[\mathtt{B}_0]{
  R(t)\not\in I
}{
  (\varnothing,\{R(t)\})\in Block(R(t))
}\\[5pt]
\infer[\mathtt{B}_1]{
  \begin{aligned}
    &&
    & R(t)\land R_1(t_1)\land\ldots\land R_n(t_n)\rightarrow \mathbf{false}\\
    &R(t)\in I&
    & S_i\in Supp(R_i(t_i)) \quad\forall i \in \{1,\ldots,n\}
  \end{aligned}
} {
  (\bigcup_i S_i,\varnothing)\in Block(R(t))
}\\[5pt]
\infer[\mathtt{B}_2]{
  \begin{aligned}
    &&
    & R(t)\land R_1(t_1)\land\ldots\land R_n(t_n)\rightarrow P(s)\\
    &&
    & S_i\in Supp(R_i(t_i)) \quad\forall i \in \{1,\ldots,n\}\\
    & R(t)\in I&
    & (B,N)\in Block(P(s))
  \end{aligned}
} {
  (\bigcup_i S_i\cup B,N)\in Block(R(t))
}
\end{gather*}
where $Block(R(t))$ being the set of all blocks of $R(t)$.
\end{definition}
Blocks specify the conditions that ensure a fact to be absent in a repair. 
\begin{proposition}\label{prop:block}
For every $I'\in Repairs(I,F)$ and every $R(t)\in Hull(I,F)$
\[
R(t)\not\in I' \iff \exists (B,N)\in Block(R(t))\sepdot
B\subseteq I' \land N\cap I'=\varnothing. 
\]
\end{proposition}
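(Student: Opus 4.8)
The plan is to prove the two implications separately, using throughout the support characterization (Proposition~\ref{prop:support}), the facts that a repair $I'$ is consistent and satisfies $I'=T^*_F(I\cap I')$ (Lemma~\ref{lemma:repair-test}), and the observation that $I'\subseteq Hull(I,F)$ (which follows from $I'=T^*_F(I\cap I')$ together with the closure of the hull under full-TGD consequences). I will also exploit the dependency graph: whenever rule $\mathtt{B}_2$ turns a block of $P(s)$ into a block of $R(t)$ via a full TGD $R(t)\land\ldots\rightarrow P(s)$, the graph contains the edge $P\to R$, so $depth(P)<depth(R)$; acyclicity of $F$ guarantees these depths are finite.

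For the ($\Leftarrow$) direction I would argue by structural induction on the derivation of the witnessing block $(B,N)$, checking that each of the three rules is sound. For $\mathtt{B}_0$ the witness is $(\varnothing,\{R(t)\})$, whose condition $N\cap I'=\varnothing$ is literally $R(t)\notin I'$. For $\mathtt{B}_1$, if $R(t)$ were in $I'$ then $\bigcup_i S_i\subseteq I'$ together with Proposition~\ref{prop:support} would place all of $R(t),R_1(t_1),\ldots,R_n(t_n)$ in $I'$, violating the denial rule and contradicting the consistency of $I'$. For $\mathtt{B}_2$, the inductive hypothesis applied to the strictly smaller block $(B,N)$ of $P(s)$ gives $P(s)\notin I'$; but if $R(t)\in I'$ then, as before, the whole left-hand side of the full TGD lies in $I'$, so consistency forces $P(s)\in I'$, a contradiction. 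Hence $R(t)\notin I'$ in every case.

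For the ($\Rightarrow$) direction I would use induction on $depth(R)$. If $R(t)\notin I$, the block $(\varnothing,\{R(t)\})$ from $\mathtt{B}_0$ already works. Otherwise $R(t)\in I\setminus I'$, and I would examine the one-step consequences $D=\{P(s)\sepbar R(t)\land R_1(t_1)\land\ldots\land R_n(t_n)\rightarrow P(s)\in Rules(I,F)\text{, all }R_i(t_i)\in I'\}$ of reinstating $R(t)$ over $I'$. If some $P(s)\in D$ is not in $I'$, then $P(s)\in Hull(I,F)$ (hull closure) and $depth(P)<depth(R)$, so the inductive hypothesis yields a satisfied block $(B,N)$ of $P(s)$; choosing supports $S_i\subseteq I'$ of the $R_i(t_i)$ (Proposition~\ref{prop:support}), rule $\mathtt{B}_2$ produces the required block of $R(t)$. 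If instead $D\subseteq I'$, then $T^*_F(I'\cup\{R(t)\})=I'\cup\{R(t)\}$, so reinstating $R(t)$ adds no fact outside $I$; by $\leq_I$-minimality of $I'$ (equivalently, Lemma~\ref{lemma:repair-test}(iii)) this set must then be inconsistent, and since $I'$ is consistent the violated constraint is a denial rule containing $R(t)$, so rule $\mathtt{B}_1$ yields the desired block.

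The main obstacle is precisely this last direction: reconciling the single-fact, linear propagation permitted by $\mathtt{B}_2$ with the possibly branching cascade triggered by reinstating $R(t)$. The idea that makes it go through is to inspect only the \emph{first-level} consequences $D$: this guarantees that the full TGD used has $R(t)$ \emph{literally} in its left-hand side, with all other antecedents already in $I'$ (hence supported in $I'$), so that a single $\mathtt{B}_2$ step applies and the dichotomy ``a new fact escapes $I'$'' versus ``reinstatement forces inconsistency'' matches exactly the two non-trivial block rules. Acyclicity enters only here, ensuring $depth(P)<depth(R)$ so that the induction is well founded.
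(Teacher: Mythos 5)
Your proof is correct and follows essentially the same route as the paper's: the backward direction by induction on the derivation of the block, and the forward direction by induction along a reverse topological sort of $\mathcal{D}(\mathcal{S},F)$, using $\leq_I$-minimality to identify a ground rule violated by $I'\cup\{R(t)\}$ and converting it into a block via $\mathtt{B}_1$ or $\mathtt{B}_2$ with supports supplied by Proposition~\ref{prop:support}. The paper only sketches this argument for the acyclic case and spells out the details for the JD generalization (Proposition~\ref{prop:jd-block}), but specialized to acyclic full TGDs that proof collapses to exactly your case analysis.
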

The proof is by induction over the position of $R$ in a reverse topological sorting of $\mathcal{D}(\mathcal{S},F)$. The proof of a more general claim  (Proposition~\ref{prop:jd-block}) can be found in Appendix~\ref{app:omitted-proofs}. We remark, however, that acyclicity of $F$ is essential here. 
\begin{example}[cont. Example~\ref{ex:support}]\label{ex:block}
The facts $Q(1)$ and $P(1,1)$ have simple blocks obtained with the rule $\mathtt{B}_0$:
\begin{align*}
Block(Q(1))&=\{(\varnothing,\{Q(1)\})\}&
Block(P(1,1))&=\{(\varnothing,\{P(1,1)\})\}.
\end{align*}
The fact $R(1,1,1)$ has one block obtained with the rule $\mathtt{B}_2$:
\begin{gather*}
Block(R(1,1,1))=\{(\varnothing,\{P(1,1)\})\}.
\end{gather*}
The fact $P(1,2)$ has two blocks obtained with the rules $\mathtt{B}_1$ and $\mathtt{B}_2$:
\begin{gather*}
Block(P(1,2))=\{(\{R(1,1,1)\},\varnothing),(\varnothing,\{Q(1)\})\}.
\end{gather*}
The fact $R(1,2,1)$ has two blocks obtained with the rule $\mathtt{B}_2$:
\begin{gather*}
Block(R(1,2,1))=\{(\{R(1,1,1)\},\varnothing),(\varnothing,\{Q(1)\})\}.
\end{gather*}
Finally, the fact $Q(2)$ has no blocks (it does not participate in any conflict):
\begin{gather*}
Block(Q(2))=\varnothing.
\end{gather*}
\end{example}
The following proposition ensures tractability of our approach. 
\begin{proposition}\label{prop:support-and-block-size}
For any fact the number of all of its supports and the number of all of its blocks can be computed in time polynomial in the size of the instance $I$. 
\end{proposition}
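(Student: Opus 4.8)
The plan is to compute, for every fact $R(t)\in Hull(I,F)$, the entire sets $Supp(R(t))$ and $Block(R(t))$ by a bottom-up evaluation that simply follows the inference rules of Definitions~\ref{def:support} and~\ref{def:block}, and then read off their cardinalities. Since $F$ is fixed, both the maximal number $K$ of atoms per constraint and the acyclic height $h$ of the dependency graph $\mathcal{D}(\mathcal{S},F)$ are constants; moreover $Hull(I,F)$ and $Rules(I,F)$ are of size polynomial in $|I|$. The argument therefore reduces to showing that each of the sets $Supp(R(t))$ and $Block(R(t))$ stays polynomial in $|I|$, since then the recursive definitions can be evaluated in a single pass through $Hull(I,F)$ with polynomial work per fact, and counting is immediate.

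First I would bound the number of supports. The key observation is that \emph{every support is a subset of $I$ of constant size}. Any support of $R(t)$ is produced by a derivation tree whose leaves are applications of rule $\mathtt{S}_0$ (contributing singletons $\{R(s)\}$ with $R(s)\in I$) and whose internal nodes are applications of rule $\mathtt{S}_1$. Because $\mathtt{S}_1$ expands a fact $R(t)\notin I$ using a rule having $R(t)$ in its rhs, every such step moves to facts of \emph{strictly smaller acyclic height}, so the tree has depth at most $h$; its branching is bounded by the lhs size, i.e.\ by $K$. Hence the tree has at most $K^h$ leaves and the support, being the union of the leaf singletons, contains at most $K^h$ facts. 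Consequently $Supp(R(t))$ lies in the family of subsets of $I$ of size at most $K^h$, of which there are only $\sum_{j=0}^{K^h}\binom{|I|}{j}=O(|I|^{K^h})$. Processing facts in order of increasing acyclic height, $Supp(R(t))$ is assembled from the already-computed supports of lower facts: for each of the polynomially many ground rules with $R(t)$ in its rhs we form the (at most $K$-fold) Cartesian product of the supports of its lhs facts, take unions, and deduplicate. This is polynomial work per fact, hence polynomial overall.

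The same idea bounds the number of blocks, now recursing on \emph{acyclic depth}. A block $(B,N)$ of $R(t)$ has $|N|\le 1$ (the component $N$ can only be introduced by $\mathtt{B}_0$), so it suffices to bound $B\subseteq I$. Rule $\mathtt{B}_2$ builds $B$ by adjoining to a block of an rhs fact $P(s)$ the union of at most $K$ supports; since each $\mathtt{B}_2$ step moves to a fact of strictly smaller acyclic depth, it can be applied at most $h$ times before reaching a base block produced by $\mathtt{B}_0$ or $\mathtt{B}_1$. Each application contributes at most $K$ supports, and each support has at most $K^h$ facts, so $|B|$ is bounded by a constant (on the order of $hK^{h+1}$). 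Thus every block is, up to the single optional fact in $N$, a constant-size subset of $I$, and there are only polynomially many of them. Iterating over facts in order of increasing acyclic depth and reusing the already-computed supports, $Block(R(t))$ is obtained from $\mathtt{B}_0$, $\mathtt{B}_1$, and the blocks of the relevant rhs facts with polynomial work per fact.

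The subtle point — and the only place where a naive analysis could fail — is the product structure in rules $\mathtt{S}_1$ and $\mathtt{B}_2$, which along long dependency chains might in principle cause an exponential blow-up in the number of derivations. Acyclicity is exactly what prevents this: it caps the recursion depth at the constant $h$, turning the iterated unions into constant-size objects and the number of distinct supports and blocks into a polynomial in $|I|$, with duplicate derivations collapsing harmlessly once the sets are deduplicated. Once both sets are computed in polynomial time, returning their cardinalities is trivial.
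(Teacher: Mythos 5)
Your proof is correct, but it reaches polynomiality by a different counting mechanism than the paper. The paper proves the more general Proposition~\ref{prop:jd-support-and-block-size} in Appendix~\ref{app:omitted-proofs} and obtains the present statement as a special case; there, it never bounds the cardinality of an individual support or block. Instead it inductively bounds, level by level, the number of ways the inference rules can be \emph{instantiated} --- polynomially many choices of ground rules times the number of combinations of lower-level supports of the at most $K^2$ facts involved --- arriving at explicit bounds of the form $R^{(K+1)^{2(\ell+1)}}$ for $\ell$-supports and $(R^{K+1}N^{K^2})^{\ell+2}$ for $\ell$-blocks (with $R$ bounding the number of ground rules), and then asserts that a generating algorithm ``can be easily derived.'' You instead prove a structural fact the paper never states: because acyclicity caps derivation depth at $h$ and branching at $K$, every support is a subset of $I$ of constant cardinality at most $K^h$, and every block is such a set paired with at most one extra fact, so $Supp(R(t))$ and $Block(R(t))$ are a priori confined to a family of polynomial size; the bottom-up evaluation stratified by acyclic height (for supports) and acyclic depth (for blocks), with deduplication, then does the rest. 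Your route yields smaller exponents and makes the role of acyclicity more transparent, and it gives the computation step more explicitly than the paper does. What the paper's derivation-counting buys is that it applies verbatim to the JD-extended $\ell$-support and $\ell$-block definitions of Section~\ref{sec:ftgd-cqa}, where the depth cap comes from the explicit $\ell$-indexing rather than from acyclicity of the dependency graph --- which is exactly why the paper proves the general claim once and specializes. One small remark: your closing worry about exponentially many derivations collapsing only after deduplication is unfounded under acyclicity --- the paper's induction shows the number of derivations themselves is polynomial --- but retaining the deduplication step is harmless and your argument does not depend on that worry.
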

This claim is proved with a simple combinatorial argument. Also here, the acyclicity of $F$ is essential. The proof of a more general claim (Proposition~\ref{prop:jd-support-and-block-size}) can be found in Appendix~\ref{app:omitted-proofs}. 

Finally, we show how to check if there exists a repair that realizes a given combination of supports and blocks. 
\begin{lemma}\label{lemma:base-repair-test}
For any (possibly cyclic) set of full TGDs and denial constraints $F$, an instance $I$, and two sets of facts $P\subseteq I$ and $N\subseteq Hull(I,F)\setminus I$, a repair containing all facts from $P$ and no facts from $N$ exists if and only if $T^*_F(P)$ is consistent and disjoint with $N$. 
\end{lemma}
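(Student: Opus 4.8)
The plan is to prove both directions of the biconditional separately, using Lemma~\ref{lemma:repair-test} as the characterization of repairs. The statement asserts that the existence of a repair containing $P$ and avoiding $N$ reduces to a simple tractable test on $T^*_F(P)$.

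For the \emph{only if} direction, I would suppose a repair $I'\in Repairs(I,F)$ exists with $P\subseteq I'$ and $N\cap I'=\varnothing$. Since $P\subseteq I'\cap I$ (note $P\subseteq I$), and $I'$ is closed under $T^*_F$ (it is consistent and satisfies all full TGDs), monotonicity of $T^*_F$ gives $T^*_F(P)\subseteq T^*_F(I'\cap I)=I'$, where the last equality is condition (ii) of Lemma~\ref{lemma:repair-test}. As a subset of the consistent instance $I'$, the set $T^*_F(P)$ is itself consistent. Moreover, since $T^*_F(P)\subseteq I'$ and $N\cap I'=\varnothing$, we immediately get $T^*_F(P)\cap N=\varnothing$. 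This direction is the routine one.

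For the \emph{if} direction, I would assume $T^*_F(P)$ is consistent and disjoint from $N$, and I must \emph{construct} a repair $I'$ with $P\subseteq I'$ and $N\cap I'=\varnothing$. The natural candidate is to start from the consistent, $T^*_F$-closed instance $J_0=T^*_F(P)$ and extend it to a $\leq_I$-minimal consistent instance. The cleanest route is to invoke the soundness of Algorithm~\ref{alg:construct-repair}: run it so that it first selects (with $b=\mathbf{false}$) exactly the facts of $P$, thereby building $J=T^*_F(P)$, and thereafter processes the remaining facts of $I\setminus P$ in an order and with choices of $b$ that never force any fact of $N$ into $J$. The output is guaranteed to be a repair by the soundness theorem, it contains $P$ by construction, and it avoids $N$ provided the later choices respect the banned sets. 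Alternatively, and perhaps more transparently, I would take any $\leq_I$-minimal consistent instance $I'$ with $I'\leq_I T^*_F(P)$; such an instance exists because $T^*_F(P)$ is consistent, and it is a repair. The obstacle here is that $\leq_I$-minimizing below $T^*_F(P)$ could a priori drop some fact of $P$ or pull in a fact of $N$.

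The main difficulty, therefore, is ensuring that the constructed repair still contains all of $P$ and excludes all of $N$. For containment of $P$, I would argue that every fact of $P$ lies in $I$, so removing it from $I'$ would strictly decrease $\Delta(I,\cdot)$ only if consistency is preserved; but since $P\subseteq T^*_F(P)$ and the minimization proceeds relative to $I$, any $\leq_I$-minimal $I'$ below $T^*_F(P)$ automatically retains the facts of $P$ that it can consistently keep, and a careful invocation of condition (iii) of Lemma~\ref{lemma:repair-test} shows these cannot be profitably discarded. For exclusion of $N$, the key point is that $N\subseteq Hull(I,F)\setminus I$, so facts of $N$ are never forced in by $\leq_I$-minimality over facts of $I$; they would only enter via $T^*_F$-closure, and disjointness of $T^*_F(P)$ from $N$ together with the fact that minimization only removes facts guarantees they stay out. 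I expect reconciling the minimization step with the simultaneous requirements on $P$ and $N$ to be the crux, and I would lean on Algorithm~\ref{alg:construct-repair}'s completeness and soundness to package this argument cleanly rather than re-deriving the $\leq_I$-minimization by hand.
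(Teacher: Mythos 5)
The \emph{only if} direction of your proof is essentially correct and matches the paper's (one small caveat: a subset of a consistent instance is not automatically consistent in the presence of full TGDs — you need both that $T^*_F(P)\subseteq I'$, which gives the denial constraints, and that $T^*_F(P)$ is $T^*_F$-closed, which gives the TGDs). The genuine gap is in the \emph{if} direction, where you sketch two routes and complete neither. Your second route — take any repair $I'$ with $I'\leq_I T^*_F(P)$ — is exactly the paper's proof, and it closes in three lines; the two ``obstacles'' you raise do not exist. The missing idea is the purely order-theoretic content of the proximity relation: unfolding $\Delta(I,I'')\subseteq\Delta(I,J)$ into its positive and negative parts shows that $I''\leq_I J$ holds if and only if $I\cap J\subseteq I''\subseteq I\cup J$. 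Applying this with $J=T^*_F(P)$: first, $P\subseteq I\cap T^*_F(P)\subseteq I'$, so no fact of $P$ can be ``dropped by minimization'' — consistency plays no role here and condition (iii) of Lemma~\ref{lemma:repair-test} is not needed; then $T^*_F(P)\subseteq I'$ by consistency of $I'$. Second, $I'\subseteq I\cup T^*_F(P)$, and since $N$ is disjoint from $I$ (hypothesis) and from $T^*_F(P)$ (assumption), we get $N\cap I'=\varnothing$. Note that your supporting claim ``minimization only removes facts'' is false as stated — an instance $\leq_I$-below $T^*_F(P)$ may perfectly well contain facts of $I\setminus T^*_F(P)$ — it is just harmless, because such facts lie in $I$ and hence outside $N$.

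Your preferred first route, through Algorithm~\ref{alg:construct-repair}, can also be made to work (and is not circular, since the algorithm's soundness rests only on Lemma~\ref{lemma:repair-test}), but as written it assumes the very point at issue: you must \emph{exhibit} choices that keep $N$ out, and ``respecting the banned sets'' is not the mechanism that does it. The correct instantiation is: in a first phase select the facts of $P$ with $b=\mathbf{false}$ (no such fact is ever rejected, because at every step $T^*_F(J\cup\{p\})\subseteq T^*_F(P)$ is $T^*_F$-closed and a subset of the consistent $T^*_F(P)$, hence consistent), which yields $J=T^*_F(P)$; in a second phase process every fact of $I\setminus P$ with $b=\mathbf{true}$. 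Then condition (**) guarantees that $J$ is only ever replaced by some $J'\subseteq I\cup J$, so the invariant $J\cap N=\varnothing$ propagates from $T^*_F(P)\cap N=\varnothing$, and soundness of the algorithm gives a repair containing $P$ and avoiding $N$. Either route succeeds once these details are supplied; as submitted, the harder direction is not proved.
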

\begin{proof}
The {\em only if\/} part of the proof is trivial. 
For the {\em if\/} part take any repair $I'$ such that $I'\leq_I T^*_F(P)$. Such an instance exists because $T^*_F(P)$ is consistent. We show that $I'$ is the desired repair. First, $I'\leq_I T^*_F(P)$ and $P\subseteq I$ imply that $P\subseteq I'$, and consequently $T^*_F(P)\subseteq I'$ (as $I'$ is consistent). $I'\leq_I T^*_F(P)$ implies also $N\cap I'=\varnothing$ because $N$ contains no fact from $I$. \qed
\end{proof}

We use the previous results to construct Algorithm~\ref{alg:cqa} computing the consistent answer to a quantifier-free query to $Q$ in the instance $I$ w.r.t. a set of denial constraints and acyclic full TGDs $F$. 
\begin{algorithm}[htb]
\caption{\label{alg:cqa} Computing the consistent answer to $Q$ in $I$ w.r.t. $F$.}
\begin{tabbing}
mm\=xxx\=xxx\=xxx\=xxx\=xxx\=\kill
{\bf function} {\sc CQA($Q$, $I$, $F$)}\\
{\bf precompute:} $Hull(I,F)$, $Supp$, and $Block$ for $I$ and $F$. \\
{\footnotesize\tt ~1:}\>{\bf let} $Q\equiv Q_1\land\ldots\land Q_w$ \qquad{\it /*~Query in CNF~*/}\\
{\footnotesize\tt ~2:}\>{\bf for} $i \gets 1,\ldots, w$ {\bf do}\\
{\footnotesize\tt ~3:}\>\>{\bf let} $\neg Q_i\equiv R_1(t_1)\land\ldots\land R_n(t_n) \land \neg P_{1}(s_{1})\land\ldots\land \neg P_{m}(s_{m})$\\
{\footnotesize\tt ~4:}\>\>{\bf if} {\sc existsRepair($\{R_1(t_1),\ldots,R_n(t_n)\}$, $\{P_{1}(s_{1}),\ldots,P_{m}(s_{m})\}$)} {\bf then}\\
{\footnotesize\tt ~5:}\>\>\>{\bf return false}\\
{\footnotesize\tt ~6:}\>{\bf return true}\\
{\bf end function}\\
\\
{\bf function} {\sc existsRepair($T_P$, $T_N$)}\\
{\footnotesize\tt ~7:} \> {\bf if} $T_P\not\subseteq Hull(I,F)$ {\bf then}\\
{\footnotesize\tt ~8:} \>\> {\bf return} {\bf false}\\
{\footnotesize\tt ~9:} \> {\bf if} $T_N\not\subseteq Hull(I,F)$ {\bf then}\\
{\footnotesize\tt 10:} \>\> $T_N\gets T_N\cap Hull(I,F)$\\
{\footnotesize\tt 11:} \> {\bf let} $\{R_1(t_1),\ldots,R_n(t_n)\}\equiv T_P$\\
{\footnotesize\tt 12:} \> {\bf let} $\{P_1(s_1),\ldots,P_m(t_m)\}\equiv T_N$\\
{\footnotesize\tt 13:} \> {\bf for} $S_1\in Supp(R_1(t_1)),\ldots,S_n\in Supp(R_n(s_n)$ {\bf do}\\
{\footnotesize\tt 14:} \>\> {\bf for} $(B_1,N_1)\in Block(P_1(s_1)),\ldots,(B_m,N_m)\in Block(P_m(s_m))$ {\bf do}\\
{\footnotesize\tt 15:} \>\>\> $P\gets S_1\cup\ldots\cup S_n \cup B_1\cup\ldots\cup B_m$\\
{\footnotesize\tt 16:} \>\>\> $N\gets N_1\cup\ldots\cup N_m$\\
{\footnotesize\tt 17:} \>\>\> {\bf if} $T^*_F(P)\models F$ {\bf and} $T^*_F(P)\cap N=\varnothing$ {\bf then}\\
{\footnotesize\tt 18:} \>\>\>\> {\bf return true}\\
{\footnotesize\tt 19:} \> {\bf return false}\\
{\bf end function}
\end{tabbing}
\end{algorithm}

We assume that the query $Q$ is in CNF and we note that {\bf true} is {\em not} the consistent answer to $Q$ if and only if there exists a conjunct of $Q$ that is {\em not} satisfied by some repair.
\rrmark{Added line references.}
Consequently, for each conjunct $Q_i$ of $Q$ we check if there exists a repair that satisfies $\neg Q_i$ (\algline{2}). A negated conjunct is a conjunction of positive and negative atomic formulas 
\[
\neg Q_i \equiv 
R_1(t_1)\land\ldots\land R_n(t_n)\land 
\neg P_1(s_1)\land\ldots\land P_m(t_m)
\]
and therefore a repair satisfying $\neg Q_i$ is a repair that contains all  $R_i(t_i)$'s and no $P_j(s_j)$'s. The existence of such a repair is checked with the function {\sc ExistsRepair}. 

Because all repairs are constructed from facts in $Hull(I,F)$, we can assume that all $R_i(t_i)$'s and $P_j(s_j)$'s belong to $Hull(I,F)$. Indeed, if some $R_i(t_i)$ does not belong to $Hull(I,F)$, then a repair containing $R_i(t_i)$ does not exist (\algline{7}). Similarly, if some $P_j(s_j)$ does not belong to $Hull(I,F)$, then no repair contains $P_j(s_j)$ (\algline{9}). Using Propositions~\ref{prop:support} and~\ref{prop:block} we show that there exists a repair $I'$ containing all $R_i(t_i)$'s and no $P_j(s_j)$'s if and only if for every $R_i(t_i)$ there exists a support $S_i$ and for every $P_j(s_j)$ there exists a block $(B_j,N_j)$ such that $I'$ contains all $S_i$'s and $B_j$'s and is disjoint with every $N_j$'s. 

Hence, it suffices to exhaustively enumerate over all combinations of supports of $R_i(t_i)$'s (\algline{13}) and blocks of $P_j(s_j)$'s (\algline{14}) and use Lemma~\ref{lemma:base-repair-test} to check if a combination can be realized by a repair (\algline{17}).

Finally, to show that Algorithm~\ref{alg:cqa} works in time polynomial in the size of $I$, we note that the size of the query is considered to be a fixed constant and by Proposition~\ref{prop:support-and-block-size} the number of supports and blocks of every fact is polynomial in the size of $I$. We remark that acyclicity is used only in Propositions~\ref{prop:block} and~\ref{prop:support-and-block-size}.

\begin{theorem}\label{thm:acyclic-ftgd-tractable}
Consistent query answering is in PTIME for any quantifier-free query and any acyclic set of denial constraints and full tuple-generating dependencies.
\end{theorem}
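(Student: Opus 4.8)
The plan is to establish the correctness and the polynomial running time of Algorithm~\ref{alg:cqa}, since the theorem is precisely the claim that this algorithm decides consistent query answering in PTIME. The starting observation is the duality already noted in the text: {\bf true} is \emph{not} the consistent answer to $Q$ if and only if some repair $I'$ satisfies $\neg Q$. Writing $Q$ in CNF as $Q_1\land\ldots\land Q_w$, a repair violates $Q$ exactly when it violates some clause $Q_i$, i.e. satisfies the conjunction of literals $\neg Q_i\equiv R_1(t_1)\land\ldots\land R_n(t_n)\land\neg P_1(s_1)\land\ldots\land\neg P_m(s_m)$. Hence the problem reduces, clause by clause, to deciding whether there is a repair containing every $R_i(t_i)$ and none of the $P_j(s_j)$ -- exactly the task of {\sc ExistsRepair}. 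Since the query is fixed, $w$ and each $n,m$ are constants, so only a constant number of such subproblems arise.

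The core step is proving {\sc ExistsRepair} correct. First I would dispose of the facts outside the hull: if some $R_i(t_i)\notin Hull(I,F)$ then no repair can contain it (every repair draws its facts from the hull, by Proposition~\ref{prop:saturated}), so the answer is {\bf false} (\algline{7}); and any $P_j(s_j)\notin Hull(I,F)$ is absent from every repair automatically and may simply be dropped from $T_N$ (\algline{9}). Then I would invoke the characterizations of Propositions~\ref{prop:support} and~\ref{prop:block}: a repair $I'$ contains all the $R_i(t_i)$ and none of the $P_j(s_j)$ if and only if one can pick a support $S_i\in Supp(R_i(t_i))$ for each $i$ and a block $(B_j,N_j)\in Block(P_j(s_j))$ for each $j$ such that $\bigcup_i S_i\cup\bigcup_j B_j\subseteq I'$ and $I'\cap\bigcup_j N_j=\varnothing$.

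With this reformulation in hand, the existence of such a repair is decided by Lemma~\ref{lemma:base-repair-test}: setting $P=\bigcup_i S_i\cup\bigcup_j B_j\subseteq I$ and $N=\bigcup_j N_j\subseteq Hull(I,F)\setminus I$, a repair realizing this particular combination exists iff $T^*_F(P)$ is consistent and disjoint from $N$, which is exactly the test at \algline{17}. I would argue both directions carefully. For soundness, a combination passing the test yields, via Lemma~\ref{lemma:base-repair-test}, a repair $I'$ with $P\subseteq I'$ and $N\cap I'=\varnothing$, and then Propositions~\ref{prop:support} and~\ref{prop:block} certify that $I'$ indeed contains every $R_i(t_i)$ and excludes every $P_j(s_j)$, so $Q$ is not a consistent answer. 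For completeness, from an actual witnessing repair $I'$ the same propositions supply witness supports and blocks with $S_i\subseteq I'$, $B_j\subseteq I'$, and $N_j\cap I'=\varnothing$; hence $T^*_F(P)\subseteq I'$ is consistent and, since $N\cap I'=\varnothing$, disjoint from $N$, so the corresponding iteration of the nested loops succeeds.

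Finally, for the complexity bound I would note that $Hull(I,F)$ and all supports and blocks are precomputed in polynomial time, that by Proposition~\ref{prop:support-and-block-size} each fact has only polynomially many supports and blocks, and that the loops at \algline{13} and \algline{14} range over a product of a \emph{constant} number of these polynomially sized sets, hence over polynomially many combinations; each iteration costs one polynomial-time computation of $T^*_F$ together with polynomial consistency and disjointness checks. The main obstacle I anticipate is not any single computation but the careful bookkeeping of the two-directional equivalence linking ``a repair satisfying $\neg Q_i$ exists'' to the loop test -- in particular ensuring that the chosen supports, the positive parts $B_j$ of the blocks, and the forbidden parts $N_j$ can all be realized \emph{simultaneously} in one repair, which is precisely where Lemma~\ref{lemma:base-repair-test} is needed and where acyclicity, through Propositions~\ref{prop:block} and~\ref{prop:support-and-block-size}, is silently used.
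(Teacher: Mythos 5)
Your proposal is correct and follows essentially the same route as the paper's own argument: CNF decomposition with a clause-by-clause search for a falsifying repair, reduction of each clause to a support/block combination via Propositions~\ref{prop:support} and~\ref{prop:block}, realizability of a combination via Lemma~\ref{lemma:base-repair-test}, and the polynomial bound from Proposition~\ref{prop:support-and-block-size} with the query size treated as a constant. You even pinpoint, as the paper does, that acyclicity enters only through Propositions~\ref{prop:block} and~\ref{prop:support-and-block-size}, so there is nothing to add.
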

\begin{example}[cont. Example~\ref{ex:block}]
We execute Algorithm~\ref{alg:cqa} with the following query:
\[
Q=(Q(1)\lor \neg R(1,1,1))\land (Q(2)\lor \neg P(1,2)) 
\land (R(1,2,1)\lor\neg P(1,2)).
\]
The negation of the first conjunct is $\neg Q_1=R(1,1,1)\land\neg Q(1)$. The fact $R(1,1,1)$ has only one support $\{R(1,1,1)\}$ and the fact $Q(1)$ only one block $(\varnothing,\{Q(1)\})$. Although $T_F^*(\{R(1,1,1)\})=\{R(1,1,1),P(1,1),Q(1)\}$ is consistent, it contains $Q(1)$. Hence, there does not exists a repair that satisfies $\neg Q_1$.

The negation of the second conjunct is $\neg Q_2=P(1,2)\land\neg Q(2)$. Because the fact $Q(2)$ has no block, there is no repair that does not contain $Q(2)$, and consequently there does not exists a repair satisfying $\neg Q_2$.

The negation of the third conjunct is $\neg Q_3=P(1,2)\land\neg R(1,2,1)$. The fact $P(1,2)$ has only one support $\{R(1,1,1)\}$ and  the fact $R(1,2,1)$ has two blocks: $(\varnothing,\{Q(1)\})$ and $(B_2,N_2)=(\{R(1,1,1)\},\varnothing)$. Similarly to $\neg Q_1$, combining the support with the first block does not guarantee the existence of a repair satisfying $\neg Q_3$. However, if we use the support with the second block, then $P=\{R(1,1,1)\}$ and $N=\varnothing$ satisfy Lemma~\ref{lemma:base-repair-test} which implies that there exists a repair satisfying $\neg Q_3$. Indeed, this repair is $I_2'$.

Consequently, the query $Q$ does not hold in the repair $I_2'$ and {\bf true} is not the consistent answer to $Q$. 
\end{example}
\subsection{Adding join dependencies}
In this section we extend Algorithm~\ref{alg:cqa} to include also JDs. For this we generalize the definitions of supports and blocks and we show that Propositions~\ref{prop:support},~\ref{prop:block}, and~\ref{prop:support-and-block-size} continue to hold. Here, we present only the constructions and main claims. Complete proofs are presented in Appendix~\ref{app:omitted-proofs}.

The following folklore result shows that we need to consider the case where there is only one JD per relation.
\begin{proposition}\label{prop:one-jd-per-relation}
For any (possibly empty) set of JDs $\{jd_1,\ldots,jd_n\}$ on the same relation name there exists a JD $jd^*$ such that an instance satisfies $\{jd_1,\ldots,jd_n\}$ if and only if it satisfies $jd^*$.
\end{proposition}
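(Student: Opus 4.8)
The plan is to phrase everything in terms of the \emph{project-join operator} attached to each join dependency and to reason about its fixpoints. For a JD $jd_i = R\jd{}[X^i_1,\ldots,X^i_{k_i}]$ I would define the operator $m_i$ on instances $r$ of $R$ by $m_i(r) = \pi_{X^i_1}(r)\Join\cdots\Join\pi_{X^i_{k_i}}(r)$. First I would record the standard facts that $m_i$ is \emph{inflationary} ($r\subseteq m_i(r)$, since every tuple of $r$ reappears in the join of its projections), \emph{monotone}, and \emph{idempotent}, so that, for a fixed finite active domain, $m_i$ is a closure operator on the finite lattice of instances ordered by inclusion, and that $r\models jd_i$ exactly when $r = m_i(r)$, i.e.\ when $r$ is a fixpoint of $m_i$. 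Consequently $\mathrm{Mod}(jd_i)$, the family of instances satisfying $jd_i$, is the set of closed sets of a closure operator, hence a closure system: it is closed under intersection and contains the full relation over the active domain as its top element.

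The core observation is then that $r$ satisfies the whole set $\{jd_1,\ldots,jd_n\}$ iff $r$ is a \emph{common} fixpoint of $m_1,\ldots,m_n$, i.e.\ iff $r$ lies in $\bigcap_i \mathrm{Mod}(jd_i)$. Intersecting finitely many closure systems again yields a closure system, whose associated closure operator $c$ can be computed as the stationary value of iterating the composite $m_n\circ\cdots\circ m_1$; this iteration stabilizes because each $m_i$ is inflationary and monotone and the lattice is finite. Thus $\{jd_1,\ldots,jd_n\}$ and $c$ have exactly the same models, namely the common fixpoints of the $m_i$.

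It remains to exhibit a single JD $jd^*$ whose project-join operator $m^*$ has precisely these fixpoints. I would build the component family of $jd^*$ directly from the families $\{X^i_j\}$ and then verify the two inclusions tuplewise: every common model is a fixpoint of $m^*$, and conversely every fixpoint of $m^*$ is simultaneously closed under each $m_i$, so it satisfies every $jd_i$. The degenerate empty-set case is disposed of by taking $jd^* = R\jd{}[attrs(R)]$, the trivial JD that every instance satisfies.

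The main obstacle, and the only genuinely non-routine step, is this last one: producing the components of $jd^*$ and proving that the combined closure $c$ is itself a \emph{project-join} operator rather than merely an abstract closure operator obtained by iteration. This is delicate precisely because composing or intersecting project-join operators does not visibly stay inside the project-join family; the verification must therefore descend to the level of individual tuples and show that membership of a tuple $t$ in $c(r)$ is governed tuplewise by the projections of $r$ onto the components of $jd^*$. Once that representation is in hand, the equivalence with the iterated operator follows by a straightforward induction on the number of applications of the composite, and the stated biconditional drops out uniformly across all active domains.
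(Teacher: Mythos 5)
Your closure-operator scaffolding is correct as far as it goes: each project-join operator $m_i$ is indeed inflationary, monotone and idempotent, an instance satisfies $jd_i$ exactly when it is a fixpoint of $m_i$, and the common fixpoints of $m_1,\ldots,m_n$ form a closure system whose closure operator is the stabilized iterate of $m_n\circ\cdots\circ m_1$. But none of this touches the proposition. The one step you defer --- producing the components of $jd^*$ and showing that the intersected closure system is again the model class of a \emph{single} JD --- \emph{is} the proposition; everything else is packaging. As written, the proposal reduces the statement to itself, so there is a genuine gap, not a verification left to the reader.

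Worse, the gap cannot be closed, because the step you flag as ``delicate'' is in fact impossible: model classes of JDs are not closed under intersection. Take $R(A,B,C)$ with $jd_1 = R\jd[\{A,B\},\{B,C\}]$ and $jd_2 = R\jd[\{A,B\},\{A,C\}]$. The instance $r_0=\{(1,1,1),(2,2,2)\}$ satisfies both, so any equivalent $jd^*$ must be satisfied by $r_0$. Every projection of $r_0$ consists of the all-$1$ and the all-$2$ tuples, so $r_0$ satisfies a JD precisely when the graph joining attributes that share a component is connected; up to deleting redundant components (which never changes the project-join), the JDs over $\{A,B,C\}$ with this property are only $R\jd[\{A,B,C\}]$, $jd_1$, $jd_2$, $R\jd[\{A,C\},\{B,C\}]$, and $R\jd[\{A,B\},\{A,C\},\{B,C\}]$. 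Each of these has a model violating $\{jd_1,jd_2\}$: the instance $s_1=\{(1,1,1),(2,1,1),(1,2,1),(1,2,2)\}$ satisfies $jd_1$, hence also the weaker three-component JD and the trivial JD, yet violates $jd_2$; the instance $s_2=\{(1,1,1),(1,2,1),(2,1,1),(2,1,2)\}$, obtained by exchanging the $A$ and $B$ columns of $s_1$, satisfies $jd_2$ but violates $jd_1$; and $s_3=\{(1,1,1),(2,1,1),(1,1,2),(1,2,2)\}$ satisfies $R\jd[\{A,C\},\{B,C\}]$ but violates $jd_1$. Hence no single JD has exactly the models of $\{jd_1,jd_2\}$. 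This is the classical fact that a set of multivalued dependencies is equivalent to one join dependency only when the set is conflict-free.

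You should also know that the paper's own proof founders at exactly the point you identified, so your instinct about where the difficulty sits was right. The paper inducts on $n$ and, for two JDs $R\jd[X_1,\ldots,X_n]$ and $R\jd[Y_1,\ldots,Y_m]$, proposes the explicit candidate $jd^*=R\jd[X_i\cap Y_j \mid i,j]$; on the example above this is $R\jd[\{A,B\},\{A\},\{B\},\{C\}]$, equivalent to $R\jd[\{A,B\},\{C\}]$, which $r_0$ violates although $r_0$ satisfies $jd_1\wedge jd_2$. Its verification goes wrong by exchanging the two halves of the equality $R=\pi_{Z_1}(R)\Join\cdots\Join\pi_{Z_k}(R)$: it asserts as standard that the join of projections is contained in $R$ (that containment is the dependency itself, not a tautology) and then proves the containment $R\subseteq{}$ join, which holds unconditionally for any covering family of components. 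So neither your route nor the paper's establishes the claim, and the counterexample shows the proposition is false as stated; any fix must restrict the class of JD sets (e.g.\ to conflict-free ones) rather than sharpen the argument.
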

We remark that in particular every relation $R$ in every instance satisfies the {\em trivial} JD $R\jd[attrs(R)]$. Hence, we fix an instance $I$ and a set of constraints $F$ consisting of a set of acyclic full TGDs, denial constraints, and exactly one JD per relation. Also, to distinguish JD rules we write them $R(t_1)\land\ldots\land R(t_n)\jdrightarrow R(t)$. We remark, however,  that $\rightarrow$ continues to be used for all rules, including those obtained from conflicts w.r.t. JDs. Finally, we observe that any JD $R\jd[X_1,\ldots,X_k]$ yields rules $R(t)\jdrightarrow R(t)$ for all $R(t)\in Hull(I,F)$. Because we assume that $F$ contains one JD on every relation in $\mathcal{S}$,  $\jdrightarrow$ and $\rightarrow$ (which subsumes $\jdrightarrow$) are {\em reflexive} on facts from $Hull(I,F)$.

Previously, the acyclicity of the set of integrity constraints implicitly provided a bound on the depth of the derivation of supports and blocks. This bound is essential when showing that supports and blocks can be constructed in time polynomial in the size of the database. Although JDs translate to cyclic full TGDs, it is sufficient to consider derivations of supports and blocks of bounded depth. 
\rrmark{Moved {\em unfolding} rules and the proof of Lemma 3 to the appendix.}
\begin{lemma}\label{lemma:jd-weaving}
If $Rules(I,F)$ contains the following two ground rules
\[
r'=R(t'_1)\land\ldots\land R(t'_k)\jdrightarrow R(t_i)
\quad\text{and}\quad
r''=R(t_1)\land\ldots\land R(t_k)\jdrightarrow R(t)
\]
for some $i\in\{1,\ldots,k\}$, then there exists $j\in\{1,\ldots,k\}$ such that $Rules(I,F)$ contains also 
\[
r^*=R(t_1)\land\ldots\land R(t_{i-1})\land R(t'_j)\land R(t_{i+1})\land\ldots\land R(t_k)\jdrightarrow R(t).
\]
\end{lemma}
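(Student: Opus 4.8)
The plan is to exploit the hypothesis that $F$ contains exactly one JD per relation (Proposition~\ref{prop:one-jd-per-relation}), so that both $r'$ and $r''$ arise from the \emph{same} join dependency $R\jd[X_1,\ldots,X_k]$ and hence share a common attribute decomposition. Writing $Y_\ell = X_\ell\setminus\bigcup_{m<\ell}X_m$, I will use three facts from the full-TGD formulation of a JD: the sets $Y_1,\dots,Y_k$ are pairwise disjoint and cover $attrs(R)$; the body tuples of any JD rule pairwise agree on shared attributes, i.e.\ $\bar x_\ell[X_\ell\cap X_m]=\bar x_m[X_m\cap X_\ell]$; and the head tuple $\bar y$ is assembled componentwise by $\bar y[Y_\ell]=\bar x_\ell[Y_\ell]$. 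My claim is that the witness required by the lemma is simply $j=i$: replacing the $i$-th body tuple $t_i$ of $r''$ by the $i$-th body tuple $t'_i$ of $r'$ (which is precisely the head tuple of $r'$) again yields a valid JD rule with the same head $t$.

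The crux of the argument, and the step I expect to be the main obstacle, is the claim that $t'_i$ and $t_i$ agree on the \emph{entire} component $X_i$, not just on the attributes the head condition directly controls. I would prove this attribute by attribute. Fix $A\in X_i$ and let $\ell_0$ be the least index with $A\in X_{\ell_0}$, so that $A\in Y_{\ell_0}$ and $\ell_0\le i$. If $\ell_0=i$ then $A\in Y_i$, and since $t_i$ is the head of $r'$ the assembly condition gives $t_i[A]=t'_i[A]$ immediately. The genuinely delicate case is $\ell_0<i$, where the value at $A$ is \emph{not} pinned down by the head condition at index $i$; here I would chain two facts about $r'$. The join condition of $r'$ between components $\ell_0$ and $i$ gives $t'_i[A]=t'_{\ell_0}[A]$ (as $A\in X_i\cap X_{\ell_0}$), while the assembly condition of $r'$ at $Y_{\ell_0}$ gives $t_i[A]=t'_{\ell_0}[A]$; together these yield $t'_i[A]=t_i[A]$. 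Hence $t'_i$ and $t_i$ coincide on all of $X_i$.

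With this agreement established the rest is routine verification. For validity of $r^*$ I check the two JD requirements for the body $t_1,\dots,t_{i-1},t'_i,t_{i+1},\dots,t_k$: the join conditions among the unchanged tuples hold because $r''$ is valid, and for any pair involving $t'_i$ I use $t'_i[X_i\cap X_m]=t_i[X_i\cap X_m]=t_m[X_m\cap X_i]$, the first equality being the agreement just proved and the second the join condition of $r''$. The tuple assembled from this body equals $t$, since on $Y_m$ ($m\ne i$) it is unchanged and on $Y_i\subseteq X_i$ it is $t'_i[Y_i]=t_i[Y_i]=t[Y_i]$; as the $Y_\ell$ cover $attrs(R)$ the assembled tuple is exactly $t$. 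Finally, to conclude $r^*\in Rules(I,F)$ I note that every fact it mentions already lies in $Hull(I,F)$: the tuples $R(t_m)$ ($m\ne i$) and $R(t)$ are the body and head of the conflict underlying $r''$, and $R(t'_i)$ is a body fact of the conflict underlying $r'$. Therefore $\{R(t_1),\dots,R(t'_i),\dots,R(t_k),\neg R(t)\}$ is itself a conflict hyperedge of $G(I,F)$, and $r^*$ is its associated ground rule.
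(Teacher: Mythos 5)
Your proof is correct and follows essentially the same route as the paper's: both take the witness $j=i$, and both hinge on showing that $t'_i$ agrees with $t_i$ (the head of $r'$) on all of $X_i$, from which the join conditions and head assembly of $r^*$ follow. The only difference is granularity: the paper packages your attribute-by-attribute chase into an asserted characterization of ground JD rules ($s_\alpha[X_\alpha]=s[X_\alpha]$ together with pairwise agreement on intersections), whereas you derive that agreement explicitly and also spell out the $Hull(I,F)$-membership of the new conflict hyperedge, which the paper leaves implicit.
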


Because the set of constraints is cyclic, we construct the supports in an iterative manner allowing us to bound the derivation depth. \rrtext{The new rules for supports are obtained by appropriately  incorporating JD-rules into $\mathtt{S}_0$ and $\mathtt{S}_1$ (Definition~\ref{def:support}).}
\begin{definition}[Support]
Let $h$ be the acyclic height of the dependency graph $\mathcal{D}(\mathcal{S},F)$. For $\ell\in \{-1,0,\ldots,h\}$ an {\em $\ell$-support} of a fact $R(t)\in Hull(I,F)$ is a subset of $I$ defined with the following rules:
\begin{gather*}
\infer[\mathtt{S}_0^{-1}]{
R(t)\in I
}{
\{R(t)\}\in Supp^{-1}(R(t))
}
\\[10pt]
\infer[\mathtt{S}_1^\ell]{
&&&R(t_1)\land\ldots\land R(t_k)\jdrightarrow R(t)\\
&&&R_{i,1}(t_{i,1})\land\ldots\land R_{i,n_i}(t_{i,n_i})\rightarrow R(t_i)
\quad\forall i\in\{1,\ldots,k\}\\
&R(t)\not\in I& &S_{i,j}\in Supp^{\ell-1}(R_{i,j}(t_{i,j}))\quad
\forall i\in\{1,\ldots,k\},\;\forall j\in\{1,\ldots,n_i\}
}{
\bigcup_{i,j} S_{i,j}\in Supp^\ell(R(t))
}
\end{gather*}
where $Supp^\ell(R(t))$ denotes the set of all $\ell$-supports of $R(t)$. A support of $R(t)$ is any element of the set $Supp(R(t))=Supp^{h}(R(t))$.
\end{definition}
We note that because $\rightarrow$ and $\jdrightarrow$ are reflexive on facts from $Hull(I,F)$, the rule $\mathtt{S}_1^\ell$ properly propagates supports, i.e any $(\ell-1)$-support of $R(t)$ is also an $\ell$-support of $R(t)$. We also note that if all the JDs in $F$ are trivial, then the set of supports of a fact coincides with the set of supports from Definition~\ref{def:support}

\begin{proposition}~\label{prop:jd-support}
For every $I'\in Repairs(I,F)$ and every $R(t)\in Hull(I,F)$
\[
R(t)\in I'\iff \exists S\in Supp(R(t)). S\subseteq I'\sepdot
\]
\end{proposition}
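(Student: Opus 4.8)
The plan is to prove the two implications separately, mirroring the proof of Proposition~\ref{prop:support} but replacing the topological induction (no longer available, since the JD-rules turn the dependency graph cyclic through self-loops) by an induction on the support level $\ell$, supplemented by the weaving property of Lemma~\ref{lemma:jd-weaving}. Throughout I use that a repair $I'$ is consistent and contained in $Hull(I,F)$, hence $I'\models Rules(I,F)$, and that by (ii) of Lemma~\ref{lemma:repair-test} we have $I'=T^*_F(I'\cap I)$, so every $R(t)\in I'$ has a finite derivation from $I'\cap I\subseteq I$ using the single-head full TGD and JD rules.

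For the ($\Leftarrow$) direction I would show by induction on $\ell$ that $S\in Supp^\ell(R(t))$ and $S\subseteq I'$ imply $R(t)\in I'$. The base case $\ell=-1$ is rule $\mathtt{S}_0^{-1}$: then $S=\{R(t)\}\subseteq I'$ gives $R(t)\in I'$. In the step, $S=\bigcup_{i,j}S_{i,j}$ comes from $\mathtt{S}_1^\ell$, using a JD-rule $R(t_1)\land\ldots\land R(t_k)\jdrightarrow R(t)$ and full TGD rules $R_{i,1}(t_{i,1})\land\ldots\land R_{i,n_i}(t_{i,n_i})\rightarrow R(t_i)$. Each $S_{i,j}\subseteq I'$ is an $(\ell-1)$-support, so the induction hypothesis gives $R_{i,j}(t_{i,j})\in I'$; firing the full TGD rules (legal since $I'\models Rules(I,F)$) yields every $R(t_i)\in I'$, and then the JD-rule yields $R(t)\in I'$. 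This direction uses only consistency of $I'$.

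The ($\Rightarrow$) direction is the crux and is where acyclicity and Lemma~\ref{lemma:jd-weaving} are essential. I would prove by induction on $height(R)$ that each $R(t)\in I'$ admits a $height(R)$-support $S\subseteq I'$ (hence a support, since $(\ell-1)$-supports are $\ell$-supports and $height(R)\le h$). The derivation of $R(t)$ from $I'\cap I$ is split into its JD-part and its non-JD part: call an $R$-fact of $I'$ a \emph{seed} if it lies in $I'\cap I$ or is produced by a non-JD full TGD rule from lhs facts of strictly smaller acyclic height (which, by acyclicity of the full TGDs, are exactly the relations reachable from $R$ along non-self-loop edges). By $\mathtt{S}_0^{-1}$ together with the induction hypothesis applied to those strictly-lower lhs facts, and using upward propagation of supports, every seed has a $(height(R)-1)$-support contained in $I'$.

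It then remains to show that $R(t)$ itself is obtainable from seeds by a \emph{single} JD application, which is exactly the shape consumed by $\mathtt{S}_1^{height(R)}$; assembling the seed supports there produces the required support, with reflexivity of $\jdrightarrow$ and $\rightarrow$ on $Hull(I,F)$ covering the degenerate cases (no applicable nontrivial JD, or a seed used directly). The main obstacle is this collapsing step: whenever $R(t)$ is derived by a JD-rule one of whose antecedents $R(t_i)$ is itself JD-derived, Lemma~\ref{lemma:jd-weaving} supplies a direct JD-rule deriving $R(t)$ with $R(t_i)$ replaced by one antecedent of the inner rule; I would argue that iterating this weaving strictly shrinks the JD-derivation tree and therefore terminates with all antecedents being genuine seeds. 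Verifying termination, that the surviving antecedents are truly seeds (and not further JD-derived facts), and that the level bound $h$ is respected throughout is the delicate part; the bounded-level support construction is what makes it go through, since it caps the non-JD recursion at depth $h$ while the weaving eliminates the unbounded JD-recursion.
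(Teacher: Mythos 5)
Your proposal is correct, and for the ($\Leftarrow$) direction it coincides with the paper's proof (a simple induction on the derivation depth of a support, firing ground rules inside the consistent instance $I'$). For the ($\Rightarrow$) direction, however, you follow a genuinely different route. The paper never builds derivation trees: fixing the height level $\ell$, it collects into a set $T$ all facts of $I'$ over relations of acyclic height at most $\ell$ that lack a proper $\ell$-support, proves that $I'\setminus T$ is still consistent --- non-JD rules are handled by the induction hypothesis, and JD rules by a \emph{bounded} use of Lemma~\ref{lemma:jd-weaving}, one weave per antecedent of the top JD rule, which suffices because each antecedent surviving in $I'\setminus T$ already owns a proper support of the flat $\mathtt{S}_1^\ell$ shape --- and then invokes $\leq_I$-minimality of the repair (possible since $T\subseteq I'\setminus I$) to conclude $T=\varnothing$. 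You instead import the minimality through Lemma~\ref{lemma:repair-test}(ii), which legitimately applies here since JDs are full TGDs, obtain an explicit $T^*_F$-derivation of each fact of $I'$ from $I'\cap I$, and normalize that derivation by iterated weaving until its JD-part collapses to a single JD layer over seeds; your termination argument is sound, because each weave replaces the subtree rooted at a JD-derived antecedent by one of its proper subtrees, so the tree strictly shrinks, and after collapse the rule shape is exactly what $\mathtt{S}_1^{height(R)}$ consumes, with the level bookkeeping working out since lhs relations of non-JD rules have strictly smaller acyclic height. The trade-off: your argument is more constructive (a support is literally read off a derivation) and reuses work already encapsulated in Lemma~\ref{lemma:repair-test}, at the price of derivation-tree combinatorics (termination, checking that surviving antecedents are genuine seeds); the paper's delete-and-apply-minimality argument avoids derivation trees entirely and caps the weaving at a fixed number of steps, but is less explicit about where a support comes from. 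Both proofs rest on the same two pillars --- induction along acyclic height for the non-JD rules and Lemma~\ref{lemma:jd-weaving} for collapsing JD chains --- so the difference lies in the surrounding machinery rather than the core ingredients.
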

The {\em if\/} part is proved with a simple induction over the depth of the derivation of a support. The proof of the {\em only if\/} is based on the following simple idea. A fact $R(t)\in I'\setminus I$ is present in the repair $I'$ to satisfy some ground (full TGD) rule. We identify this ground rule by considering an inconsistent instance $I'\setminus\{R(t)\}$. We use this rule to show that $R(t)$ has a support that validates the claim.

Again, because the set of constraints is cyclic, we construct the block iteratively to bound their derivation depth. \rrtext{The new rules for block are obtained by appropriately incorporating JD-rules into $\mathtt{B}_0$, $\mathtt{B}_1$, and $\mathtt{B}_2$ (Definition~\ref{def:block}).}
\begin{definition}[Block]
Let $h$ be the acyclic height of the dependency graph $\mathcal{D}(\mathcal{S},F)$. For $\ell\in\{-1,0,\ldots,h\}$ an {\em $\ell$-block} of a fact $R(t)\in Hull(I,F)$ is a pair that consists of a subset of $I$ and a set of at most one fact from $Hull(I,F)\setminus I$, defined with the following rules: 
\begin{gather*}
\infer[\mathtt{B}_0^{-1}]{
R(t)\not\in I
}{
(\varnothing,\{R(t)\})\in Block^{-1}(R(t))
}
\\[10pt]
\infer[\mathtt{B}_1^{-1}]{
&&&R(t_1)\land\ldots\land R(t_n)\land 
R_1(s_1)\land\ldots\land R_m(s_m)\rightarrow \mathbf{false}\\
&&&R(t)\land R(t_{i,1})\land\ldots\land R(t_{i,k_i})\jdrightarrow R(t_i)
\quad\forall i\in\{1,\ldots,n\}\\
&&&S_{i,j}\in Supp(R(t_{i,j}))
\quad\forall i\in\{1,\ldots,n\},\;\forall j\in\{1,\ldots,k_i\}\\
&R(t)\in I&& S_p\in Supp(R_p(s_p))\quad\forall p\in\{1,\ldots,m\}
}{
(\bigcup_{i,j} S_{i,j}\cup\bigcup_p S_p,\varnothing)\in Block^{-1}(R(t))
}\\[10pt]
\infer[\mathtt{B}_2^\ell]{
&&&R(t_1)\land\ldots\land R(t_n)\land 
R_1(s_1)\land\ldots\land R_m(s_m)\rightarrow P(s)\\
&&&R(t)\land R(t_{i,1})\land\ldots\land R(t_{i,k_i})\jdrightarrow R(t_i)
\quad\forall i\in\{1,\ldots,n\}\\
&&&S_{i,j}\in Supp(R(t_{i,j}))
\quad\forall i\in\{1,\ldots,n\},\;\forall j\in\{1,\ldots,k_i\}\\
&&& S_p\in Supp(R_p(t_p))\quad\forall p\in\{1,\ldots,m\}\\
&R(t)\in I& &(B,N)\in Block^{\ell-1}(P(s))
}{
(\bigcup_{i,j} S_{i,j}\cup\bigcup_p S_p\cup B,N)\in Block^\ell(R(t))
}
\end{gather*}
where $Block^\ell(R(t))$ is the set of all $\ell$-blocks of $R(t)$. A {\em block} of $R(t)$ is any element of the set $Block(R(t))=Block^h(R(t))$.
\end{definition}
Also this time, we observe that because $\rightarrow$ and $\jdrightarrow$ are reflexive, the rule $\mathtt{B}_1^\ell$ properly propagates blocks, i.e. any $(\ell-1)$-block of $R(t)$ is also an $\ell$-block of $R(t)$. We also note that this definition of blocks generalizes Definition~\ref{def:block}.
\begin{proposition}\label{prop:jd-block}
For every $I'\in Repairs(I,F)$ and every $R(t)\in Hull(I,F)$ 
\[
R(t)\not\in I'\iff \exists (B,N)\in Block(R(t))\sepdot B\subseteq I' \land N\cap I'=\varnothing.
\]
\end{proposition}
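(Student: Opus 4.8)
The plan is to establish the two directions of the biconditional separately, mirroring the proof of the acyclic case (Proposition~\ref{prop:block}) but replacing the induction on the reverse topological order of $\mathcal{D}(\mathcal{S},F)$ by an induction on the block level $\ell$, since the JD self-loops obstruct a genuine topological order of the full graph. Throughout I would use that every repair $I'$ is consistent and closed under $T^*_F$, together with the correctness of supports already proved in Proposition~\ref{prop:jd-support}.

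For the \emph{if} direction I would induct on the derivation of the block $(B,N)\in Block(R(t))$, i.e.\ on $\ell$ and on the last rule applied. If the block comes from $\mathtt{B}_0^{-1}$, then $N=\{R(t)\}$ with $R(t)\notin I$, so $N\cap I'=\varnothing$ yields $R(t)\notin I'$ at once. If it comes from $\mathtt{B}_1^{-1}$, the collected supports being contained in $I'$ force, by Proposition~\ref{prop:jd-support} and by $I'$ satisfying the JD rules $R(t)\land\ldots\jdrightarrow R(t_i)$, every lhs fact of the underlying denial rule into $I'$; hence if $R(t)$ were also in $I'$ the denial would be violated, contradicting consistency. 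If it comes from $\mathtt{B}_2^\ell$, the induction hypothesis applied to the block $(B,N)$ of $P(s)$ gives $P(s)\notin I'$, while the supports force the remaining lhs facts of the full TGD into $I'$; the presence of $R(t)$ together with $T^*_F$-closure would then force $P(s)\in I'$, again a contradiction. This direction is essentially routine.

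The substantive direction is \emph{only if}. Assume $R(t)\notin I'$. If $R(t)\notin I$, rule $\mathtt{B}_0^{-1}$ provides the block $(\varnothing,\{R(t)\})$ and we are done. Otherwise $R(t)\in I\setminus I'$, and I would invoke the repair characterization of Lemma~\ref{lemma:repair-test}: since $I'$ is $\leq_I$-minimal, either $J'=T^*_F(I'\cup\{R(t)\})$ is inconsistent, or $J'\setminus I\neq I'\setminus I$. In the first case, closing $I'\cup\{R(t)\}$ under the full TGDs and JDs produces a violated denial; tracing the $T^*_F$-derivation backwards from that denial to $R(t)$ yields a chain of $\mathtt{B}_2$ steps rooted at a $\mathtt{B}_1^{-1}$ application, where every auxiliary lhs fact used along the way already lies in $I'$ and hence has a support inside $I'$ (Proposition~\ref{prop:jd-support}). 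In the second case some fact $P(s)\notin I$ with $P(s)\notin I'$ is forced into $J'$ by $R(t)$; applying $\mathtt{B}_0^{-1}$ to $P(s)$ and propagating back to $R(t)$ via $\mathtt{B}_2$ steps produces a block whose $N$-component is $\{P(s)\}$, with $N\cap I'=\varnothing$ since $P(s)\notin I'$.

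The main obstacle is making this backward tracing and the accompanying depth bound rigorous in the presence of JD cycles. A $T^*_F$-derivation may interleave full-TGD steps, which strictly raise the relation in $\mathcal{D}(\mathcal{S},F)$ and so number at most $h$, with arbitrarily many JD steps inside a single relation. To keep the constructed block within the $h$ levels of the definition I would use Lemma~\ref{lemma:jd-weaving} to flatten consecutive JD derivations into a single $\jdrightarrow$ rule, so that each $\mathtt{B}_2^\ell$ application absorbs exactly one full-TGD step while folding all intervening JD applications into the JD premises $R(t)\land\ldots\jdrightarrow R(t_i)$ of the rule. I must also verify that the facts $R(t_{i,j})$ supplied by these JD premises are genuinely present in $I'$, so that Proposition~\ref{prop:jd-support} applies to supply their supports; this follows because $I'$ satisfies every JD and already contains the facts driving the derivation. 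Finally, I would confirm that the construction never exceeds level $h$, which is exactly where the acyclicity of the full-TGD part of $F$ is essential.
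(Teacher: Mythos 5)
Your \emph{if} direction is fine and coincides with the paper's (induction on the derivation of a block, using Proposition~\ref{prop:jd-support} and the consistency and $T^*_F$-closure of the repair). The genuine gap is in the \emph{only if} direction, in the claim that when $J'=T^*_F(I'\cup\{R(t)\})$ is inconsistent you can trace the derivation backwards from the violated denial to $R(t)$ so that ``every auxiliary lhs fact used along the way already lies in $I'$''. This is false in general: a rule fired during the full closure can have two or more lhs facts outside $I'$ belonging to \emph{different} relations; neither is a JD-derivative of the other, so Lemma~\ref{lemma:jd-weaving} cannot fold them away, and the rules $\mathtt{B}_1^{-1}$ and $\mathtt{B}_2^\ell$ demand genuine supports for every lhs fact other than the traced one and its same-relation JD-derivatives. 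By Proposition~\ref{prop:jd-support}, a fact outside $I'$ has no support inside $I'$, so every block built along such a trace fails to be proper. Concretely, take unary relations with acyclic full TGDs $R(x)\rightarrow P(x)$, $R(x)\rightarrow Q(x)$, $P(x)\land Q(x)\rightarrow W(x)$, the denial $W(x)\rightarrow\mathbf{false}$ (all JDs trivial), and $I=\{R(1)\}$. The unique repair is $I'=\varnothing$, and $J'=\{R(1),P(1),Q(1),W(1)\}$ is inconsistent, so you are in your first case. Any trace rooted at the violated denial must pass through the rule $P(1)\land Q(1)\rightarrow W(1)$, whose lhs contains the two facts $P(1),Q(1)\notin I'$ from different relations; every block obtained this way has $B$-component containing the support $\{R(1)\}$ of one of them, and $\{R(1)\}\not\subseteq I'$, so it is not proper. (Note also that $\mathtt{B}_1^{-1}$ cannot even be instantiated here, since no fact of $I$ occurs in any ground denial rule.) The proper blocks of $R(1)$ are $(\varnothing,\{P(1)\})$ and $(\varnothing,\{Q(1)\})$, obtained by stopping the trace at $P(1)$ or $Q(1)$ and rooting at $\mathtt{B}_0^{-1}$ --- something your case-1 recipe never does; and your case-2 recipe has the same defect, since nothing in it guarantees that the chosen fact outside $I\cup I'$ admits a derivation in which each rule has only one lhs fact outside $I'$.

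This interleaving problem is exactly what the paper's proof is structured to avoid, and the avoidance is not cosmetic. The paper inducts on the acyclic \emph{depth} of the relation name and, for a purported counterexample $R(t)$ (a fact in $I\setminus I'$ with no proper block), adds to $I'$ only $V=T_{\{jd_R\}}(I'\cup\{R(t)\})\setminus I'$, the closure under the \emph{single} JD of $R$ --- never the full $T^*_F$ closure. Then every ground rule applicable to $I'\cup V$ has all of its non-$I'$ lhs facts inside the one relation $R$, where they are JD-derivatives of $R(t)$, which is precisely the configuration that $\mathtt{B}_1^{-1}$ and $\mathtt{B}_2^\ell$ (together with Lemma~\ref{lemma:jd-weaving}) are designed to absorb; a full TGD rule whose rhs falls outside $I'\cup V$ is handled by the induction hypothesis, because its rhs relation has strictly smaller acyclic depth. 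If no rule is violated at all, then $I'\cup V$ is consistent, and since the paper also shows $V\subseteq I\setminus I'$ (any fact of $V$ with a proper block would yield one for $R(t)$ by composing JD rules), one gets $I'\cup V<_I I'$, contradicting $\leq_I$-minimality. Your plan substitutes Lemma~\ref{lemma:repair-test} plus backward tracing for this minimality contradiction, but it lacks the mechanism --- single-relation closure combined with the depth induction --- that prevents cross-relation multiplicity from ever arising, and without that mechanism the tracing step cannot be completed.
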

The {\em if\/} part is proved with a simple induction over the depth of derivation of a block. The proof of the {\em only if\/} part is based on the following simple idea. $R(t)\in I\setminus I'$ is absent in the repair $I'$ because its presence would cause a violation of some ground rule. We identify this rule by considering an inconsistent instance $I'\cup\{R(t)\}$. We use this rule to show that $R(t)$ has a block that validates the claim. 
\begin{proposition}\label{prop:jd-support-and-block-size}
For any fact $R(t)$ the sets $Supp(R(t))$ and $Block(R(t))$ can be constructed in time polynomial in the size of $I$. 
\end{proposition}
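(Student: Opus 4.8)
The plan is to prove both bounds by induction on the level $\ell$ that appears in the iterative definitions, exploiting the fact that $h$, the acyclic height of $\mathcal{D}(\mathcal{S},F)$, is a \emph{constant} because the schema and the set of constraints $F$ are fixed. Concretely, I would establish by induction on $\ell\in\{-1,0,\ldots,h\}$ the statement: for every $R(t)\in Hull(I,F)$ the set $Supp^\ell(R(t))$ has a number of elements bounded by a polynomial in $|I|$ (of degree depending only on $\ell$ and the constraints), and the family $\{Supp^\ell(R(t))\}_{R(t)}$ can be computed from $\{Supp^{\ell-1}(R(t))\}_{R(t)}$ in polynomial time. The same scheme then handles $Block^\ell$, reusing the already-computed supports. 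Since $Supp(R(t))=Supp^h(R(t))$ and $Block(R(t))=Block^h(R(t))$, instantiating at $\ell=h$ yields the proposition.

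For the support induction, the base case $\ell=-1$ is immediate, since rule $\mathtt{S}_0^{-1}$ gives $|Supp^{-1}(R(t))|\le 1$. For the inductive step, every application of $\mathtt{S}_1^\ell$ is determined by (a) a choice of JD ground rule $R(t_1)\land\ldots\land R(t_k)\jdrightarrow R(t)$ from $Rules(I,F)$, (b) for each $i$ a choice of full TGD ground rule deriving $R(t_i)$, and (c) for each body atom $R_{i,j}(t_{i,j})$ a choice of an $(\ell-1)$-support. The number of ground rules in $Rules(I,F)$ is polynomial in $|I|$ (it equals the number of conflict hyperedges), so (a) and (b) offer only polynomially many options; moreover $k$ and the arities $n_i$ are each bounded by the maximal constraint length $K$, a constant, so the number of lower-level supports combined in (c) is bounded by a constant. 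A product of a constant number of polynomial factors is again polynomial, hence $|Supp^\ell(R(t))|$ is polynomial; and since each support is a subset of $I$, the required unions cost $O(|I|)$ apiece, so the whole level is computed in polynomial time.

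The blocks are handled identically, the only difference being that rules $\mathtt{B}_1^{-1}$ and $\mathtt{B}_2^\ell$ also combine a constant number of supports $S_{i,j},S_p$; these are drawn from the support sets already shown to be polynomial, and rule $\mathtt{B}_2^\ell$ additionally combines one $(\ell-1)$-block, supplied with polynomially many options by the inductive hypothesis. The same \emph{product of constantly many polynomials} argument gives a polynomial bound on $|Block^\ell(R(t))|$ together with a polynomial construction time, completing the induction.

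The crux — and the only place where JDs make the argument non-trivial — is that JD ground rules render the dependency graph cyclic (each JD induces self-loops on its relation through reflexive $\jdrightarrow$), so one cannot simply walk a topological order as in the purely acyclic Proposition~\ref{prop:support-and-block-size}. The iterative, level-capped definition is precisely what rescues us: the number of levels is the constant $h$, and the per-level degree of the polynomial bound grows only by a constant factor (from combining boundedly many lower-level objects), so after $h$ levels the degree is still a constant. The genuinely hard part of the surrounding development is justifying that stopping at level $h$ loses no real support or block; but that is exactly the content of the weaving Lemma~\ref{lemma:jd-weaving} and Propositions~\ref{prop:jd-support} and~\ref{prop:jd-block}, and for the present complexity claim it may be taken as given, leaving only the routine degree bookkeeping described above.
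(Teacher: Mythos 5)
Your proposal is correct and follows essentially the same route as the paper's proof: induction over the level $\ell$, bounding the number of instantiations of $\mathtt{S}_1^\ell$ (resp.\ $\mathtt{B}_1^{-1}$, $\mathtt{B}_2^\ell$) by a product of polynomially many ground-rule choices and constantly many ($K^2$, where $K$ is the maximal constraint length) lower-level supports or blocks, with the degree growing by a constant factor per level and remaining constant since $h$ is fixed. The paper merely makes the bookkeeping explicit (e.g.\ bounding the number of $\ell$-supports by $R^{(K+1)^{2(\ell+1)}}$ with $R=|Hull(I,F)|^{K+1}$) and, like you, leaves the extraction of an actual polynomial-time enumeration algorithm as routine.
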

A simple combinatorial proof is presented in Appendix~\ref{app:omitted-proofs}. 

We recall that the proof of Theorem~\ref{thm:acyclic-ftgd-tractable} relies on the Lemma~\ref{lemma:base-repair-test} and Propositions~\ref{prop:support},~\ref{prop:block}, and~\ref{prop:support-and-block-size}. The proof of Lemma~\ref{lemma:base-repair-test} does not assume the set of constraints to be acyclic and the corresponding  Propositions~\ref{prop:jd-support},~\ref{prop:jd-block}, and~\ref{prop:jd-support-and-block-size} have been proved for  generalized supports and blocks. Consequently,
\begin{corollary}
Consistent query answering is in PTIME for any quantifier-free query and any set of join dependencies, denial constraints, and acyclic full tuple-generating dependencies.
\end{corollary}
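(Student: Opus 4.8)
The plan is to observe that, after one normalization step, Algorithm~\ref{alg:cqa} already computes the consistent answer in this extended setting without modification, so the proof reduces to re-running the correctness and complexity arguments of Theorem~\ref{thm:acyclic-ftgd-tractable} with the generalized supports and blocks in place of the acyclic ones. First I would invoke Proposition~\ref{prop:one-jd-per-relation} to replace the given set of JDs by an equivalent set containing at most one JD per relation, adding the trivial JD $R\jd[attrs(R)]$ wherever none is present; this is exactly the form assumed by the generalized support and block definitions. I would then leave the algorithm as stated, but in its precomputation phase build $Supp$ and $Block$ according to the $\ell$-support and $\ell$-block rules (taking $\ell=h$, the acyclic height of $\mathcal{D}(\mathcal{S},F)$) rather than the rules of Definitions~\ref{def:support} and~\ref{def:block}.

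For correctness I would retrace the three facts on which the proof of Theorem~\ref{thm:acyclic-ftgd-tractable} rests. First, a fact lies in a given repair iff one of its supports is contained in that repair, which is now Proposition~\ref{prop:jd-support}. Second, a fact is absent from a repair iff one of its blocks $(B,N)$ is realized, i.e. $B$ is contained in the repair and $N$ is disjoint from it, which is now Proposition~\ref{prop:jd-block}. Third, for a target combination of present facts $P$ and absent facts $N$, a witnessing repair exists iff $T_F^*(P)$ is consistent and disjoint from $N$; this is Lemma~\ref{lemma:base-repair-test}, whose proof never used acyclicity and hence applies verbatim. Since {\sc ExistsRepair} enumerates precisely the combinations of one support per positive literal and one block per negative literal of $\neg Q_i$ and tests each with Lemma~\ref{lemma:base-repair-test}, these three facts give that {\sc ExistsRepair} reports true iff some repair falsifies $Q_i$; iterating over the conjuncts of the CNF of $Q$ then yields the consistent answer exactly as before.

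For the complexity bound I would reuse the counting argument. The query is fixed, so {\sc ExistsRepair} is called a constant number of times, and each call enumerates a Cartesian product of constantly many factors, one set $Supp(R_i(t_i))$ per positive literal and one set $Block(P_j(s_j))$ per negative literal. By Proposition~\ref{prop:jd-support-and-block-size} each of these sets has polynomially many elements and is constructible in polynomial time, and every evaluation of $T_F^*$ runs in polynomial time by the classical Datalog bound recalled earlier. Hence the total running time is polynomial in $|I|$.

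The genuine difficulty lies not in the corollary but in the supporting propositions, and the corollary should be read as asserting that this difficulty has already been discharged. The obstacle is that adding JDs makes the dependency graph cyclic, so a priori the derivation of a support or block could be arbitrarily deep and there could be super-polynomially many of them. The resolution, captured by Lemma~\ref{lemma:jd-weaving}, is that two chained JD-derivations can be woven into a single JD-derivation of no greater width; this lets one bound the useful derivation depth by the acyclic height $h$, and thereby both establish the characterizations of Propositions~\ref{prop:jd-support} and~\ref{prop:jd-block} and keep the number of $\ell$-supports and $\ell$-blocks polynomial as in Proposition~\ref{prop:jd-support-and-block-size}. Once those are in hand, as assumed here, the corollary follows immediately by the plug-in argument above.
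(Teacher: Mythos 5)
Your proposal is correct and follows essentially the same route as the paper: normalize via Proposition~\ref{prop:one-jd-per-relation} to one JD per relation, run Algorithm~\ref{alg:cqa} with the generalized supports and blocks, and observe that Lemma~\ref{lemma:base-repair-test} (whose proof is acyclicity-free) together with Propositions~\ref{prop:jd-support},~\ref{prop:jd-block}, and~\ref{prop:jd-support-and-block-size} replaces the three ingredients used in Theorem~\ref{thm:acyclic-ftgd-tractable}. You also correctly identify that the real work lives in those supporting propositions (via Lemma~\ref{lemma:jd-weaving}), which is exactly how the paper structures the argument.
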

\subsection{Negative results}
It appears to be difficult to extend our approach beyond quantifier-free queries because of the following result.
\begin{theorem}[\cite{ChMa04}] \label{thm:cqa-denial-intractable}
There exists an FD and a closed conjunctive query (using existential quantifiers) for which consistent query answering is coNP-complete.
\end{theorem}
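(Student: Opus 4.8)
The plan is to prove two matching bounds: membership in coNP, which is routine, and coNP-hardness, which is the substance of the claim.

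For membership, I would show that the complementary problem---does \emph{some} repair fail to satisfy $Q$?---is in NP. A single FD is a denial constraint, so by Proposition~\ref{prop:denial-repairs} every repair is a maximal consistent \emph{subset} of $I$ and hence has size at most $|I|$. A nondeterministic procedure can therefore guess a subset $I'\subseteq I$, verify in polynomial time that $I'$ is a repair (repair checking for denial constraints is in PTIME by Theorem~\ref{thm:rcheck-cyclic}), and then evaluate the fixed conjunctive query $Q$ on $I'$, which takes polynomial time in data complexity. The procedure accepts exactly when such an $I'$ with $I'\not\models Q$ exists, so the complement is in NP and consistent query answering is in coNP.

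For hardness I would reduce $3$-SAT to the \emph{complement} of consistent query answering, i.e.\ to the existence of a repair falsifying $Q$; this shows the complement is NP-hard and hence that the problem itself is coNP-hard. Given a $3$-CNF formula $\varphi$ over variables $x_1,\dots,x_n$ with clauses $C_1,\dots,C_m$, I would build an instance $I_\varphi$ (in logarithmic space) in which a single relation carrying the fixed FD encodes the variable choices: for each $x_i$ two tuples that agree on the key but disagree elsewhere form a conflicting pair, so that every repair must retain exactly one of them, and this choice is read as the truth value of $x_i$. The clause structure of $\varphi$ is stored in additional, constraint-free relations, which therefore appear unchanged in every repair. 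The repairs of $I_\varphi$ are then in bijection with the truth assignments to $x_1,\dots,x_n$. The fixed conjunctive query $Q$ is designed to assert, using existential quantification over the clause relations, that \emph{some} clause is falsified by the chosen assignment (all three of its literals receive their falsifying value). Consequently a repair satisfies $\neg Q$ precisely when its assignment satisfies every clause, so $I_\varphi$ has a repair falsifying $Q$ iff $\varphi$ is satisfiable; equivalently, \textbf{true} is the consistent answer to $Q$ iff $\varphi$ is unsatisfiable.

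The main obstacle is keeping $Q$ and the FD \emph{fixed}, independent of the size of $\varphi$, while letting a single conjunctive query detect a violated clause among arbitrarily many. This is what forces the clause structure into the data and the per-variable choice into the conflict pattern of one FD; the delicate verification is that the correspondence between repairs and assignments is exact---that maximality of repairs leaves no spurious instances and no assignment without a corresponding repair---and that $Q$ captures clause falsification faithfully under this encoding. Checking these two directions carefully is where the real work lies, whereas the coNP upper bound follows immediately from repair checking being tractable for denial constraints.
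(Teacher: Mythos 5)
This theorem is stated in the paper only as an imported result, cited from \cite{ChMa04}; the paper contains no proof of its own, so your attempt has to be measured against the original proof of Chomicki and Marcinkowski. Your argument is correct and follows essentially the same strategy. The coNP upper bound is exactly as you say: for a single FD every repair is a maximal consistent subset of $I$ (Proposition~\ref{prop:denial-repairs}), so a falsifying repair can be guessed and verified in polynomial time. For the lower bound, your reduction---conflicts of one FD encode the per-variable truth-value choice, constraint-free relations carry the clause structure and hence survive intact in every repair, and the closed conjunctive query detects a falsified clause---is the same mechanism as in \cite{ChMa04} (the original reduction happens to go through MONOTONE 3-SAT rather than plain 3-SAT, but the repair--assignment bijection and the ``query detects falsification'' device are identical). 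One detail you gesture at but must make explicit, because the construction genuinely fails without it: if the clause structure is stored in a single relation $T(j,i,v)$ (``clause $j$ contains a literal on variable $i$ whose falsifying value is $v$''), then the natural query
\[
\exists j,i_1,v_1,i_2,v_2,i_3,v_3\sepdot T(j,i_1,v_1)\land R(i_1,v_1)\land T(j,i_2,v_2)\land R(i_2,v_2)\land T(j,i_3,v_3)\land R(i_3,v_3)
\]
collapses: conjunctive queries have no disequalities, so all three $T$-atoms can be witnessed by one and the same tuple, and the query then only asserts ``some clause has \emph{some} falsified literal,'' which breaks the equivalence with satisfiability. You must index literal positions, e.g.\ store $T(j,p,i,v)$ with $p\in\{1,2,3\}$ and use the constants $1,2,3$ in the query, or equivalently use three clause relations $T_1,T_2,T_3$, so that the query really says ``all three literals of clause $j$ are falsified.'' With that fix the rest goes through as you claim: maximality forces exactly one of $R(i,0),R(i,1)$ per variable, so repairs are in bijection with assignments, and some repair falsifies $Q$ if and only if $\varphi$ is satisfiable.
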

Also, the class of constraints is likely to be maximal as having even one cyclic full TGD that is not a JD leads to intractability. 
\begin{theorem}
There exists a positive atomic query and a set of integrity constraints consisting of one FD and one cyclic full tuple-generating dependency for which consistent query answering is coNP-complete. 
\end{theorem}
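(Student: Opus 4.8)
The plan is to prove the two complexity bounds separately: membership in coNP, and coNP-hardness.

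\textbf{Membership.} For the fixed positive atomic query $Q\equiv G$, the complement of consistent query answering asks: does there exist a repair $I'$ of $I$ with $G\notin I'$? I would show this complement is in NP, which yields coNP membership. A single FD is a denial constraint and the cyclic full TGD is a full TGD, so the pair $F$ is covered by Theorem~\ref{thm:rcheck-cyclic}, and repair checking is in PTIME. Moreover every repair $I'$ is a subset of $Hull(I,F)$, whose size is polynomial in $|I|$. Hence an NP machine may guess a candidate $I'\subseteq Hull(I,F)$, verify in polynomial time that $I'\in Repairs(I,F)$ using the criterion of Lemma~\ref{lemma:repair-test}, and check $G\notin I'$. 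Such a guess witnesses a repair falsifying $Q$, so the complement is in NP and consistent query answering is in coNP.

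\textbf{Hardness.} I would reduce 3-SAT to the complement of consistent query answering, so that a 3-CNF formula $\psi$ is satisfiable if and only if some repair falsifies $G$; equivalently, $G$ is the consistent answer iff $\psi$ is unsatisfiable. All binary choices are encoded in a single relation $R(\underline{X},Y)$ carrying the one FD $X\to Y$: for each variable $x_i$ the instance holds two conflicting tuples recording the values \emph{true}/\emph{false}, and one further conflicting pair encodes the survival of the goal fact against a rival tuple. Because the FD is a denial constraint, the restriction of a repair to these tuples keeps exactly one tuple per key (keeping none is non-maximal on that part), so repairs induce total truth assignments together with a \emph{true}/\emph{false} decision for the goal. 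The clauses of $\psi$ are stored as ordinary data, and the single \emph{cyclic} full TGD is used as a recursive verifier: it propagates a ``checked-so-far'' token along the clause list, and since the rule is fixed while the formula grows, it is precisely the cyclicity that lets one fixed dependency iterate over arbitrarily many clauses. The atomic query is the goal fact $G$. The reduction is arranged so that deleting $G$ (keeping its rival) can be completed to a repair only through a cascade of the cyclic TGD that stays consistent with the FD, and such a cascade exists exactly when the chosen assignment satisfies every clause; when some clause is unsatisfied the cascade forces a tuple colliding, under $X\to Y$, with a committed assignment tuple, and the cyclic re-triggering blocks resolving this collision without abandoning the no-$G$ choice. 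Thus some repair omits $G$ iff $\psi$ is satisfiable.

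\textbf{Main obstacle.} The crux is designing the single FD together with the single cyclic full TGD so that the repair--assignment correspondence is \emph{exact}: I must rule out spurious ``escape'' repairs that avoid $G$ without exhibiting a genuine satisfying assignment. This is delicate because a full-TGD conflict $\mathrm{body}\to\mathrm{head}$ always admits two $\leq_I$-incomparable resolutions --- inserting the head or deleting a body atom --- so a head fact is never forced by purely local propagation. The cyclic cascade must therefore make every escape either inconsistent or non-$\leq_I$-minimal, and establishing this requires a careful global minimality analysis rather than a local argument; I expect essentially all the difficulty to lie here. By contrast, cyclicity is genuinely necessary: by Theorem~\ref{thm:acyclic-ftgd-tractable} the same problem is in PTIME for any acyclic set of full TGDs and denial constraints, so no acyclic encoding can succeed, which both justifies the hypothesis of the theorem and guides the construction of the recursive verifier.
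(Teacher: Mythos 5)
Your membership argument is fine and matches the paper's (the paper simply invokes Theorem~\ref{thm:rcheck-cyclic} together with the definition of consistent answers; your extra detail about guessing a subset of $Hull(I,F)$ and verifying with Lemma~\ref{lemma:repair-test} is a correct elaboration). The gap is in the hardness direction, which is the substance of the theorem. What you give there is a reduction \emph{template} --- FD-encoded binary choices, a single cyclic full TGD acting as a token-propagating clause verifier, a goal fact as the atomic query --- and this is indeed exactly the shape of the paper's construction (the paper reduces from the complement of 3-colorability, with color-choice facts $p_{i,j}^k$ constrained by $R\fd A\rightarrow B$, token facts $q_j=P(j)$ chained by the cyclic TGD $R(x_1,y_1,z_1,z_2)\land R(x_2,y_2,z_1,z_2)\land P(z_1)\land y_1\neq y_2\rightarrow P(z_2)$, special facts $r,r',r''$, and query $Q=r$). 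But you never write down the dependency, the instance, or either direction of the correspondence; you explicitly defer ``essentially all the difficulty'' to a global minimality analysis you do not perform. Since ruling out spurious repairs \emph{is} the proof, what you have is a plan with the crux missing, not a proof.

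Moreover, the one concrete structural claim you do commit to --- that a repair ``keeps exactly one tuple per key (keeping none is non-maximal on that part)'' --- is unsound once a TGD coexists with the FD under symmetric-difference minimality: deleting \emph{all} tuples sharing a key can be $\leq_I$-minimal when re-inserting any one of them would fire the TGD and force insertions of facts outside $I$, which by condition (iii) of Lemma~\ref{lemma:repair-test} makes the two instances incomparable rather than one dominating the other (this is precisely the incomparability you flag yourself two sentences later, and it actually occurs in the paper's own $\Pi_2^p$ construction, where repairs may drop both valuation facts of an existential variable). The paper's proof never needs this claim: in its ``if'' direction it argues that $r\notin I'$ forces, via Lemma~\ref{lemma:repair-test}(iii), that $q_m,r'\in I'$ and $r''\notin I'$; then $I'=T_F^*(I'\cap I)$ forces the entire token chain $q_0,\ldots,q_m$ to be derivable inside $I'$, which requires differently-colored endpoint facts for every edge, and only then does the FD give well-definedness of the extracted coloring. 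So the tools needed to close your argument (the three conditions of Lemma~\ref{lemma:repair-test}) are available, but the proposal neither deploys them nor supplies the construction they would be deployed on.
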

\begin{proof}
The membership of consistent query answering in coNP follows from the definition of consistent query answers and Theorem~\ref{thm:rcheck-cyclic}.

We show coNP-hardness by reducing the complement of 3COL to consistent query answering. 3COL is a classic NP-complete problem of testing if a graph has a legal 3-coloring~\cite{Papa94}. A {\em 3-coloring} is an assignment of one of 3 colors to each vertex of the graph. It is {\em legal} if no two adjacent vertices have the same color. Take any undirected graph $G=(V,E)$, and let $V=\{v_1,\ldots,v_n\}$ and $E=\{e_1,\ldots,e_m\}$. We assume that $G$ has no isolated vertices (i.e. vertices incident to no edge). 

We use the schema
\[
\mathcal{S}=\{R(A,B,C,D),P(C)\}
\]
with the set of integrity constraints
\[
F=\{R\fd{}A\rightarrow B, 
R(x_1,y_1,z_1,z_2)\land 
R(x_2,y_2,z_1,z_2)\land 
P(z_1) 
\land y_1\neq y_2\rightarrow P(z_2)\}.
\]
We use the following facts:
\begin{itemize}
\item $p_{i,j}^k=R(i,k,j-1,j)$ for each vertex $v_i$ with color $k$ incident to the edge $e_j$ (we create a separate copy for each edge incident to the vertex and for each color);
\item $q_j=P(j)$ indicating that the edges $e_1,\ldots,e_j$ connect properly colored vertices (for $j\in\{0,\ldots,m\}$);
\item 3 special facts: $r=R(n+1,0,m,m+1)$, $r'=R(n+2,1,m,m+1)$ and $r''=P(m+1)$.
\end{itemize}
The constructed instance is:
\[
I_G=\{p_{i,j}^k\sepbar v_i\in V, e_j\in E, v_i\in e_j, 1\leq k \leq 3\}\cup\{q_0,r,r'\}.
\]

Now, we outline the interaction among the facts induced by the integrity constraints. The FD ensures that every vertex has at most one color assigned to it, i.e. for any $v_i\in V$, any two $e_{j_1},e_{j_2}\in E$ adjacent to $v_i$, and any two {\em different} colors $k_1,k_2\in\{1,2,3\}$:
\begin{equation}
p_{i,j_1}^{k_1}\land p_{i,j_2}^{k_2}\rightarrow \mathbf{false}.
\label{eq:fd-rules}
\end{equation}
The full TGD ensures that the facts $q_j$ are properly used to incrementally verify that all edges connect legally colored vertices, i.e. for any edge $e_j=\{v_{i_1},v_{i_2}\}$ and any two {\em different} colors $k_1,k_2\in\{1,2,3\}$:
\[
p_{i_1,j}^{k_1}\land p_{i_2,j}^{k_2}\land q_{j-1}\rightarrow q_j.
\]
The full TGD also requires that if $q_m$ is inserted, which indicates a legal coloring, then $r$ or $r'$ is to be deleted or $r''$ is to be inserted:
\[
r\land r'\land q_m\rightarrow r''.
\]
Consequently, the query used in the reduction checks if $r$ is not removed from any of the repairs. 
\[
Q=r.
\]

The main claim is that:
\[
G\in \text{3COL}\iff\exists I'\in Repairs(I_G,F)\sepdot r\not\in I'.
\]

For the {\em only if\/} part, let $f$ be the legal $3$-coloring of $G$. We construct the following instance 
\[
I'=\{p_{i,j}^{f(i)} \sepbar v_i\in V, e_j\in E, v_i\in e_j\}\cup\{q_0,\ldots,q_m,r'\}.
\]
It can be easily verified that this instance satisfies (i), (ii), and (iii) of Lemma~\ref{lemma:repair-test} and hence $I'$ is a repair and $r\not\in I'$. 

For the {\em if\/} part, we note that since $r\not\in I'$, by (iii) of Lemma~\ref{lemma:repair-test} both $q_m$ and $r'$ are present in $I'$ and $r''\not\in I'$. $q_m\in I'$ implies that $\{q_0,\ldots,q_m\}\subseteq I'$. Therefore, for every $j\in\{1,\ldots,m\}$ if $e_j=\{v_{i_1},v_{i_2}\}$, there exist two different colors $k_1$ and $k_2$ such that $p_{i_1,j}^{k_1}$ and $p_{i_2,j}^{k_2}$ are present in $I'$. Moreover, for any two $p_{i,j_1}^{k_1}$ and $p_{i,j_2}^{k_2}$ we have $k_1=k_2$. Hence, the function 
$
f(i)= k \quad \text{such that there exists $p_{i,j}^k\in I'$}
$
is a properly defined legal $3$-coloring of $G$. 
\qed
\end{proof}
Finally, we note that the role of the FD can be simulated with a full TGD giving almost the same reduction. Indeed, if we replace the FD with the TGD
$
R(x,y,z_1,z_2)\land R(x,y',z_1',z_2')\rightarrow R(0,0,0,0)
$
and by $d$ denote the fact $R(0,0,0,0)$, then the rules~\eqref{eq:fd-rules} are replaced by 
$
p_{i,j_1}^{k_1}\land p_{i,j_2}^{k_2}\rightarrow d,
$
for every $v_i\in V$, any two edges $e_{j_1}, e_{j_2}$ adjacent to $v_i$, and any two {\em different} colors $k_1,k_2\in\{1,2,3\}$. We note that the fact $d$ is not involved in any other rules, and therefore it is only present in a repair which assign two different colors to the same vertex. Now, the query needs to be augmented to check that in no repair $r$ is deleted while $d$ is not inserted, i.e. in all repairs $r$ is absent only if $d$ is present, $Q=(\neg r) \Rightarrow d = r\lor d$.
\begin{corollary}
There exists a quantifier-free ground query and a set of two full cyclic TGDs for which consistent query answering is coNP-complete.
\end{corollary}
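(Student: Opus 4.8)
The plan is to reuse the machinery of the preceding theorem almost verbatim, changing only the way in which a vertex receiving two distinct colours is detected. Membership in coNP is immediate: by Theorem~\ref{thm:rcheck-cyclic} repair checking is in PTIME for full TGDs, so a nondeterministic machine can guess a repair $I'$ together with a polynomial certificate of $I'\in Repairs(I,F)$ and verify that $Q$ fails in $I'$; hence the complement of consistent query answering is in NP and the problem itself is in coNP. For hardness I would again reduce from the complement of \text{3COL}, keeping the graph encoding $p^k_{i,j}=R(i,k,j-1,j)$, the counters $q_j=P(j)$, the gadget facts $r=R(n+1,0,m,m+1)$, $r'=R(n+2,1,m,m+1)$, $r''=P(m+1)$, and the propagation TGD $R(x_1,y_1,z_1,z_2)\land R(x_2,y_2,z_1,z_2)\land P(z_1)\land y_1\neq y_2\rightarrow P(z_2)$.

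The only change is to replace the functional dependency $R\fd{}A\rightarrow B$ by the full TGD
\[
R(x,y,z_1,z_2)\land R(x,y',z_1',z_2')\land y\neq y'\rightarrow d,
\]
where $d=R(0,0,0,0)$ is a fresh fact absent from the constructed instance $I_G$. Both constraints are full TGDs, and each is cyclic (the first has $R$ on both sides, the second has $P$ on both sides), so $F$ is a set of two full cyclic TGDs as required. The ground instances of this rule are exactly $p^{k_1}_{i,j_1}\land p^{k_2}_{i,j_2}\rightarrow d$ for a common vertex $v_i$ and distinct colours $k_1\neq k_2$, so $d$ is forced into a repair precisely when that repair assigns two colours to some vertex, and $d$ participates in no other rule. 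I would take the ground quantifier-free query $Q=r\lor d$.

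The heart of the argument is the equivalence $G\in\text{3COL}\iff$ there exists a repair $I'$ with $r\notin I'$ and $d\notin I'$, from which the claim follows because true is the consistent answer to $Q$ exactly when no such repair exists, i.e.\ exactly when $G\notin\text{3COL}$. For the forward direction I would take a legal colouring $f$ and form $I'=\{p^{f(i)}_{i,j}\mid v_i\in V,\,e_j\in E,\,v_i\in e_j\}\cup\{q_0,\ldots,q_m,r'\}$; since every vertex carries a single colour the first TGD never fires, so $d\notin I'$, the legality of $f$ lets the counters propagate up to $q_m$, the absence of $r$ keeps $r''$ out, and one verifies conditions (i)--(iii) of Lemma~\ref{lemma:repair-test} to confirm that $I'$ is a repair. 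For the converse, $d\notin I'$ together with consistency forces every vertex to carry at most one colour, while $r\notin I'$ forces, via condition (iii) of Lemma~\ref{lemma:repair-test}, that $q_m,r'\in I'$ and $r''\notin I'$; then $q_m\in I'$ drags in all of $q_0,\ldots,q_m$, which witnesses that every edge joins differently coloured endpoints, so the induced colouring is legal.

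The main obstacle, and the reason the query must be strengthened from $r$ to $r\lor d$, is that the TGD simulation no longer forbids ill-formed colourings outright: replacing a denial constraint by a generating dependency turns an inconsistency into the insertion of the marker $d$, so repairs in which a vertex is doubly coloured now exist and may well omit $r$. The delicate point is therefore to show that every such spurious $r$-free repair must contain $d$ and hence still satisfies $Q$, so that only repairs arising from genuine legal colourings can falsify $Q$. Making both directions of the equivalence line up with Lemma~\ref{lemma:repair-test} under this weaker notion of conflict is the step that needs the most care; the combinatorial bookkeeping (counter propagation and single-colour enforcement) then carries over essentially unchanged from the preceding proof.
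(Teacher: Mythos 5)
Your proposal is correct and follows essentially the same route as the paper: replace the FD by a full TGD whose head is the fresh marker fact $d=R(0,0,0,0)$, keep the 3COL gadget ($p^k_{i,j}$, the counters $q_j$, and $r$, $r'$, $r''$) unchanged, and strengthen the query to $Q=r\lor d$ so that the spurious repairs encoding doubly-colored vertices still satisfy it. Your version is in fact slightly more careful than the paper's, which displays the replacement TGD without the guard $y\neq y'$ (an apparent typo, since the paper then states the resulting ground rules $p^{k_1}_{i,j_1}\land p^{k_2}_{i,j_2}\rightarrow d$ only for \emph{different} colors $k_1\neq k_2$, exactly as you do).
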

The complexity of computing consistent answers to {\em atomic} ground queries in the presence of full TGDs only remains an open question. 
\section{Universal constraints}
\label{sec:univ}
In this section we investigate the complexity of consistent query answering and repair checking in the presence of arbitrary universal constraints. 

\begin{lemma}
\label{lemma:cqa-upper-bound}\label{lemma:repair-check-upper-bound}
For any set of universal constraints $F$ and any closed query $Q$, repair checking is in coNP and consistent query answering is in $\Pi^p_2$. 
\end{lemma}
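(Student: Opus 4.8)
The plan is to establish both membership results by first bounding the size of repairs, and then giving guess-and-verify procedures for the \emph{complements} of the two problems, matching the definitions of coNP and $\Pi^p_2$ used here.

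The key structural observation is that every repair has size polynomial in $|I|$ and is drawn entirely from the hull. Indeed, by Proposition~\ref{prop:saturated} every $I'\in Repairs(I,F)$ yields an independent set $Compl(I')$ of $G(I,F)$, so $Compl(I')\subseteq Hull(I,F)$; since $Compl(I')^+=I'$, this gives $I'\subseteq Hull(I,F)$. As $Hull(I,F)$ is of polynomial size, every repair is a polynomial-size subset of the positive facts of $Hull(I,F)$, hence describable by a polynomially long nondeterministic guess. I will also use two routine data-complexity facts for the fixed $F$ and fixed $Q$: testing $J\models F$ and evaluating $J\models Q$ are both doable in time polynomial in $|J|$ (cf.~\cite{Var82}).

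For repair checking in coNP, I would show the complement is in NP. Given $(I,I')$, the machine first checks in polynomial time whether $I'\models F$; if not, $I'$ is inconsistent, hence not a repair, and it accepts. Otherwise $I'$ is consistent, and it fails to be a repair exactly when it is not $\leq_I$-minimal, i.e. when some consistent $I''$ satisfies $\Delta(I,I'')\subsetneq\Delta(I,I')$. Whenever such an $I''$ exists, one can be taken to be a repair below $I'$ (a $\leq_I$-minimal consistent instance below $I'$ is globally minimal), which by the bound above is a polynomial-size subset of the positive facts of $Hull(I,F)$. So the machine guesses such an $I''$ and verifies in polynomial time that $I''\models F$ and $\Delta(I,I'')\subsetneq\Delta(I,I')$, accepting iff both hold. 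This NP procedure accepts iff $(I,I')\notin\mathcal{B}_F$, so $\mathcal{B}_F\in\text{coNP}$. For consistent query answering in $\Pi^p_2$, I would show its complement — deciding whether \emph{some} repair falsifies $Q$ — is solvable by a nondeterministic polynomial-time machine with an NP oracle, which is exactly the definition used here. The machine guesses a candidate $I'$ among the polynomial-size subsets of the positive facts of $Hull(I,F)$, verifies in polynomial time that $I'\models F$ and $I'\not\models Q$, and then must confirm that $I'$ is $\leq_I$-minimal, i.e. that $I'$ is genuinely a repair. This last test is precisely repair checking, which by the first part is in coNP, so it is resolved by a single call to the NP oracle (asking the complementary question). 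The machine accepts iff $I\not\models_F Q$, placing consistent query answering in $\Pi^p_2$.

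The main obstacle is the minimality requirement, which is exactly what separates the two complexity levels. For repair checking, non-minimality is witnessed by a single existential guess, keeping the complement in NP; for consistent query answering, the guessed falsifying instance must itself be \emph{certified} minimal, and this embeds a coNP test inside an existential guess, forcing the second level of the polynomial hierarchy. The enabling technical ingredient in every step is the polynomial bound on repair size coming from Proposition~\ref{prop:saturated}, without which the guessed instances would not be polynomially bounded and the arguments would break down.
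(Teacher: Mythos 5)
Your proof is correct, and while it shares the paper's overall skeleton---working with the complements of both problems and reusing the repair-checking bound as the NP oracle for consistent query answering---the key technical step is genuinely different. The paper's machine for detecting that $I'$ is not a repair is phrased in hypergraph terms: it checks that $Compl(I')$ is a maximal independent set of $G(I,F)$ and then guesses a maximal independent set $N$ with $N^+<_I I'$, so its soundness rests on Proposition~\ref{prop:repair-mis} and, in particular, on the assertion inside that proposition's proof that the positive projection of a maximal independent set is consistent. You instead guess the witness directly as a consistent instance drawn from the positive facts of $Hull(I,F)$, invoke Proposition~\ref{prop:saturated} only for the polynomial bound $I'\subseteq Hull(I,F)$, and supply the one fact the paper never argues explicitly: that a $\leq_I$-minimal consistent instance below a consistent non-repair is itself a repair, so a strictly closer \emph{repair} always lies in your guess space. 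This buys you an argument that is both more elementary and more robust. The paper's hypergraph step is in fact delicate: a maximal independent set may contain neither $P(t)$ nor $\neg P(t)$ for some fact $P(t)$ in the hull, and then $N^+$ can be inconsistent---for example, with $F=\{A(x)\rightarrow P(x),\; B(x)\rightarrow P(x),\; P(x)\land E(x)\rightarrow\mathbf{false}\}$ and $I=\{A(1),B(1),E(1)\}$, the set $N=\{A(1),B(1),E(1)\}$ is a maximal independent set whose positive projection $N^+=I$ is inconsistent, and guessing this $N$ would lead the paper's machine to declare the genuine repair $\{E(1)\}$ a non-repair. Your machine checks consistency of the guessed witness explicitly, so no such subtlety can arise; the only price is guessing over arbitrary polynomial-size subsets of the hull's facts rather than over maximal independent sets, which is immaterial for the complexity bounds.
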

\begin{proof}
We observe that checking if a set of facts is a maximal independent set is in PTIME. The definition of a nondeterministic Turing machine checking if an instance $I'$ is not a repair follows from Propositions~\ref{prop:saturated} and~\ref{prop:repair-mis}. First, the machine constructs $Compl(I')$ and checks if it is a maximal independent set. If so, it nondeterministically attempts to construct a maximal independent set $N$ such that $N^+<_I I'$. 

The definition of a nondeterministic machine (with an NP oracle) that checks if {\bf true} is not the consistent query answer follows from Definition~\ref{def:cqa}: {\bf true} is not the consistent answer if and only if there exists a repair where the query answer is {\bf false}. Hence, the machine nondeterministically creates an instance $I'$, verifies that $I'$ is a repair by using the NP oracle (checking that $I'$ is not accepted by the machine constructed above), and verifies that the query answer in $I'$ is {\bf false}. 
\qed
\end{proof}

\begin{theorem}\label{thm:universal-cqa-intractable}
There exists a positive atomic query, and a set of two FDs and a universal constraint for which consistent query answering is  $\Pi^p_2$-complete. 
\end{theorem}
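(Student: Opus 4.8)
The $\Pi^p_2$ upper bound is already given by Lemma~\ref{lemma:cqa-upper-bound}, so the whole content is the matching lower bound, and the plan is to establish $\Pi^p_2$-hardness by a reduction from the validity of $\forall\exists$ quantified boolean formulas. Concretely, I would start from a formula $\Psi=\forall x_1\cdots x_n\,\exists y_1\cdots y_m\,\Phi(\bar x,\bar y)$ whose matrix $\Phi=\bigvee_{j=1}^{k}D_j$ is a 3-DNF (each $D_j$ a conjunction of three literals); deciding validity of such $\Psi$ is a canonical $\Pi^p_2$-complete problem. From $\Psi$ I would build, in polynomial time, a single instance $I_\Psi$ over a fixed schema, while the two FDs, the universal constraint, and the positive atomic query $Q$ are all fixed, independent of $\Psi$ --- so that the hardness holds already at the level of data complexity. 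The target equivalence to prove is
\[
I_\Psi\models_F Q \iff \Psi \text{ is valid},
\]
and I would argue it through its contrapositive: $I_\Psi$ admits a repair falsifying $Q$ iff $\exists\bar x\,\forall\bar y\,\neg\Phi(\bar x,\bar y)$.

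The intended correspondence matches the two quantifier alternations of $\Psi$ with the two alternations hidden in consistent query answering (Lemma~\ref{lemma:cqa-upper-bound}): the universal ``every repair'' level carries the outer $\forall\bar x$, while the coNP repair-minimality test carries the inner $\exists\bar y$. For the outer level I would use a relation $X(\mathit{Var},\mathit{Val})$ together with the first FD $\mathit{Var}\to\mathit{Val}$, and seed $I_\Psi$ with the two conflicting facts $X(i,0),X(i,1)$ for each universal variable $x_i$; since these are resolved only by deletion, every repair retains exactly one of each pair and thereby selects a truth assignment $\alpha$ to $\bar x$. Because the clause structure must be stored in the data (the constraints being fixed), I would also place in $I_\Psi$ a relation listing each disjunct $D_j$ together with its three \emph{(variable, sign)} literals, and a relation $Y$ governed by the second FD that carries the auxiliary $\bar y$-gadget; the single universal constraint then reads off one clause and its three literals, compares their signs against the $X$- and $Y$-assignment facts by means of the built-in predicates, and fires to force the query fact $q$ (the atomic query being $Q=q$) when a clause comes out satisfied. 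The bounded width of $\Phi$ is what keeps this within one universal constraint of fixed, bounded arity.

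The crux is to arrange the $\bar y$-gadget and the universal constraint so that the inner existential is genuinely delegated to repair minimality rather than to a free choice inside the repair. I would fix $q\in I_\Psi$ and design the gadget so that, for each $\alpha$, the candidate ``bad'' instance $I'(\alpha)$ obtained by deleting $q$ (making $Q$ false) is always consistent, while every instance strictly closer to $I_\Psi$ than $I'(\alpha)$ --- in particular every instance that retains $q$ --- is forced, via the universal constraint, to commit to a concrete assignment $\beta$ to $\bar y$ and is consistent precisely when $\Phi(\alpha,\beta)$ holds. Under such a design $I'(\alpha)$ is a $\leq_{I_\Psi}$-minimal consistent instance, i.e.\ a repair, exactly when no $\beta$ satisfies $\Phi$, which yields the displayed equivalence; the two directions are then routine, using the repair/maximal-independent-set correspondence of Propositions~\ref{prop:saturated} and~\ref{prop:repair-mis}.

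I expect the engineering of this last step to be the main obstacle. The difficulty is twofold. First, universal constraints permit both insertions and deletions, so I must ensure that no repair can ``cheat'' by inserting or deleting unrelated facts to falsify $Q$ spuriously, which typically forces one to make the choice-facts mutually load-bearing, so that no strictly smaller symmetric difference is consistent unless a satisfying $\beta$ exists. Second, I must prevent the repair from freely selecting $\beta$ as an independent FD-choice, since an independent $\bar y$-choice would collapse the alternation to a single universal block over pairs $(\alpha,\beta)$ and reduce the problem only to coNP; the role of the second FD and of the negative/disjunctive part of the universal constraint is precisely to entangle the $\bar y$-gadget with $q$ so that $\beta$ is exercised only by the minimality test. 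Getting a single fixed universal constraint to enforce this entanglement, while keeping the reduction polynomial and the schema fixed, is the delicate part, and the 3-DNF form together with the storage of clauses as data are the levers I would rely on to achieve it.
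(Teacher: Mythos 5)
Your high-level architecture coincides with the paper's: two FDs plus one fixed universal constraint, the outer $\forall$ realized by FD-forced binary choices among facts seeded in the instance, the inner $\exists$ delegated to the $\leq_I$-minimality test rather than to a free choice inside repairs, and a single distinguished fact whose deletion marks the query-falsifying repairs. However, there is a fatal error at the very first step: you reduce from validity of $\forall\bar{x}\,\exists\bar{y}\,\Phi$ with $\Phi$ in \emph{3-DNF}, and that problem is not $\Pi^p_2$-complete. With a DNF matrix the inner check $\exists\bar{y}\,\Phi(\alpha,\bar{y})$ is polynomial: substitute $\alpha$ and look for a disjunct with no falsified $\bar{x}$-literal and no complementary pair of $\bar{y}$-literals. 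Hence your source problem lies in coNP (and is exactly coNP-complete, degenerating to DNF-tautology when there are no existential variables), so even a flawless reduction from it establishes only coNP-hardness, not the $\Pi^p_2$-hardness the theorem requires. The quantifier prefix $\forall\exists$ gives $\Pi^p_2$-completeness only when the matrix is in \emph{CNF}; this is precisely what the paper uses ($\Phi=C_1\land\ldots\land C_k$ in 3CNF), and the CNF form is also what makes a single universal constraint natural: each clause is stored as a fact $q_j$ of a wide relation $D$, and the constraint $D(\bar{x}_1,\bar{x}_2,\bar{x}_3,\bar{x}_4)\rightarrow R(\bar{x}_1)\lor R(\bar{x}_2)\lor R(\bar{x}_3)\lor R(\bar{x}_4)$ grounds to $q_j\rightarrow \ell_{j,1}\lor\ell_{j,2}\lor\ell_{j,3}\lor r$, whose disjunctive right-hand side mirrors the clause.

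Beyond that, your proposal stops short of the construction it needs: you yourself flag the entanglement of the $\bar{y}$-gadget with the query fact as the delicate unresolved part, and this is exactly where the paper does concrete work. There, inserting the special fact $r$ is the only way to delete $\bar{r}$ (the query fact), because $r$ and $\bar{r}$ clash on the FD $R\fd{}A_1\rightarrow B_1$; and the second FD $R\fd{}A_2\rightarrow B_2$ is arranged so that $r$ conflicts with \emph{every} fact encoding an existential valuation (they share $A_2=1$ but differ on $B_2$), while leaving the universal-variable facts untouched. Thus a query-falsifying repair must wipe out the entire existential gadget, and $\leq_I$-minimality certifies such a repair exactly when no existential valuation extends the chosen universal valuation to satisfy all clauses. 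Without an explicit mechanism of this kind, and with the source problem corrected to a 3CNF matrix, your sketch is not yet a proof.
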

\begin{proof}
The membership is proved in Lemma~\ref{lemma:cqa-upper-bound}. We prove $\Pi_2^p$-hardness by reducing the problem of validity of $\forall^*\exists^*$QBF to $\mathcal{D}_{F,Q}$. 

Consider the following $\forall^*\exists^*$QBF formula:
\[
\Psi=\forall{}x_1,\ldots{},x_n\sepdot\exists{}x_{n+1},\ldots{},x_{n+m}\sepdot\Phi,
\]
where $\Phi=C_1 \land \ldots{} \land C_k$ is quantifier-free 3CNF
i.e., $C_j$ is a clause of three literals 
$L_{j,1} \lor L_{j,2} \lor L_{j,3}$. Recall that checking the validity of $\forall^*\exists^*$QBF is a classical $\Pi_2^p$-complete problem \cite{Papa94}.

We assume that no two clauses of $\Phi$ are identical. For ease of reference, we call the variables $x_1,\ldots,x_n$ {\em universal} and the variables $x_{n+1},\ldots,x_{n+m}$ {\em existential}. We also use the following functions on the literals of $\Phi$:
\begin{align*}
&
\begin{array}{rl}
var(x_i)&=i,\\
var(\neg x_i)&=i,
\end{array}
&
&\begin{array}{rl}
sgn(x_i)&=1,\\
sgn(\neg x_i)&=0,
\end{array}
&
q(x_i)=q(\neg x_i)&=
\begin{cases}
1 &\text{if $i\leq n$,}\\
0 &\text{otherwise.}
\end{cases}
\end{align*}

The schema contains two relation names: 
\[
\mathcal{S}=\{ R(A_1,B_1,A_2,B_2),D(A_1,B_1,C_1,D_1,\ldots,A_4,B_4,C_4,D_4)\}.
\]
The set of integrity constraints is:
\begin{align*}
F=\{&R\fd{}A_1\rightarrow B_1, R\fd{}A_2\rightarrow B_2,\\
    &D(\bar{x}_1,\bar{x}_2,\bar{x}_3,\bar{x}_4)\rightarrow
     R(\bar{x}_1)\lor
     R(\bar{x}_2)\lor
     R(\bar{x}_3)\lor
     R(\bar{x}_4)\},
\end{align*}
where each $\bar{x}_i$ is a vector of 4 variables. 
We use the following types of facts in the reduction:
\begin{itemize}
\item facts corresponding to the valuations of universal variables
($i\in\{1,\ldots,n\}$):
\[
p_i=R(i,1,0,0)\quad\text{and}\quad\bar{p}_i=R(i,0,0,0),
\]
and the valuations of existential variables ($i\in\{n+1,\ldots,n+m\}$):
\[
p_i=R(i,1,1,0)\quad\text{and}\quad\bar{p}_i=R(i,0,1,0),
\]
\item facts corresponding to the clauses ($j\in\{1,\ldots,k\}$):
\begin{align*}
q_j=D(&var(l_{j,1}),sgn(l_{j,1}),q(l_{j,1}),0,
      var(l_{j,2}),sgn(l_{j,2}),q(l_{j,2}),0,\\
      &var(l_{j,3}),sgn(l_{j,3}),q(l_{j,3}),0,0,1,1,1),
\end{align*}
\item 2 special facts:
\[
r=R(0,1,1,1)\quad\text{and}\quad\bar{r}=R(0,0,0,0).
\]
\end{itemize}
The constructed instance is
\[
I_\Psi=\{p_1,\bar{p}_1,\ldots,p_{n+m},\bar{p}_{n+m},
q_1,\ldots,q_k,\bar{r}\}.
\]
For the clarity of further considerations by $\ell_{j,p}$ we will denote the fact corresponding to the satisfying valuation of the literal $L_{j,p}$, i.e.:
\[
\ell_{j,p}=
\begin{cases}
p_i &\text{when $L_{j,p}=x_i$,}\\
\bar{p}_i &\text{when $L_{j,p}=\neg x_i$.}
\end{cases}
\]

Now, we outline the interaction among the facts induced by the integrity constraints. We start with the simple observation that the FD $R:A_1\rightarrow B_1$ ensures that in every repair there is at most one fact corresponding to a valuation of each variable. In symbols:
\begin{equation*}
p_i\land \bar{p}_i\rightarrow \mathbf{false} \quad
\text{for $i\in\{1,\ldots,n+m\}$,}
\end{equation*}
The full TGD ensures that for every conjunct if the repair does not have a fact corresponding to a valuation satisfying the conjunct, then the fact $q_j$ is deleted or the fact $r$ is inserted:
\begin{equation*}
q_j\rightarrow \ell_{j,1}\lor \ell_{j,2} \lor \ell_{j,3} \lor r
\quad\text{for $j\in\{1,\ldots,k\}$.}
\end{equation*}
Inserting $r$ requires removing all the facts corresponding to valuations of the existential variables (the FD $R:A_2\rightarrow B_2$):
\begin{equation*}
p_i\land r \rightarrow \mathbf{false} 
\quad\text{and}\quad
\bar{p}_i\land r \rightarrow \mathbf{false}
\qquad \text{for $i\in\{n+1,\ldots,n+m\}$.}
\end{equation*}
It is important to note that this makes inserting $r$ quite a drastic way to repair the instance $I_\Psi$. Since $r$ does not belong to $I_\Psi$ and all facts corresponding to valuations of existential variables do, $\leq_I$-minimality ensures that such a way of repairing is considered only if for a given valuation of universal variables there does not exist a valuation of existential variables satisfying $\Psi$. Also, we observe that inserting $r$ requires deleting  $\bar{r}$, i.e. 
\begin{equation*}
r\land\bar{r}\rightarrow \mathbf{false}.
\end{equation*}
Consequently, the query used in the reduction checks if the fact $r$ is not deleted from any of the repairs:
\[
Q=\bar{r}.
\]

Figure~\ref{fig:fig1} contains the extended conflict hypergraph of $I_\Psi$ for 
\[
\Psi=\forall x_1,x_2,x_3\sepdot\exists x_4,x_5\sepdot (\neg x_1\lor x_4\lor x_2)\land (\neg x_2\lor\neg x_5\lor x_3).
\]
The dotted lines are used for stabilizing edges.
\begin{figure}[htb]
\begin{center}
\begin{tikzpicture}
\begin{scope}[xscale=1.4, yscale=1.1]
\node (pp1) at (0,0) {$p_1$};
\node (np1) at (1,0) {$\bar{p}_1$};
\draw (pp1) -- (np1);

\node (pp4) at (2,0) {$p_4$};
\node (np4) at (3,0) {$\bar{p}_4$};
\draw (pp4) -- (np4);

\node (pp2) at (4,0) {$p_2$};
\node (np2) at (5,0) {$\bar{p}_2$};
\draw (pp2) -- (np2);

\node (pp5) at (6,0) {$p_5$};
\node (np5) at (7,0) {$\bar{p}_5$};
\draw (pp5) -- (np5);

\node (pp3) at (8,0) {$p_3$};
\node (np3) at (9,0) {$\bar{p}_3$};
\draw (pp3) -- (np3);

\node (ps) at (4.5,1) {$r$};
\node (ns) at (4.5,2) {$\bar{r}$};
\draw (ps) -- (ns);

\draw (ps) -- (pp4);
\draw (ps) -- (np4);
\draw (ps) -- (pp5);
\draw (ps) -- (np5);

\node (nnp1) at (1,-1) {$\neg \bar{p}_1$};
\node (npp4) at (2,-1) {$\neg p_4$};
\node (npp2) at (4,-1) {$\neg p_2$};
\node (nps) at (4.5,-1) {$\neg r$};
\node (nnp2) at (5,-1) {$\neg \bar{p}_2$};
\node (nnp5) at (7,-1) {$\neg \bar{p}_5$};
\node (npp3) at (8,-1) {$\neg p_3$};

\draw[densely dotted] (nnp1) -- (np1);
\draw[densely dotted] (npp4) -- (pp4);
\draw[densely dotted] (npp2) -- (pp2);
\draw[densely dotted] (nps) -- (ps);
\draw[densely dotted] (nnp2) -- (np2);
\draw[densely dotted] (nnp5) -- (np5);
\draw[densely dotted] (npp3) -- (pp3);

\node (d1) at (3,-2) {$q_1$};
\node (d2) at (6,-2) {$q_2$};

\draw[blue!50!black] (3,-2) plot[smooth cycle, tension=0.85] coordinates{
+(0,-.4) +(-2.25,1.25) +(1.65,1.25)
};

\draw[blue!50!black] (6,-2) plot[smooth cycle, tension=0.85] coordinates{
+(0,-.4) +(2.25,1.25) +(-1.65,1.25)
};
\end{scope}
\end{tikzpicture}
\end{center}
\caption{\label{fig:fig1} $G(I_\Psi,F)$ for $\Psi=\forall x_1,x_2,x_3\sepdot\exists x_4,x_5\sepdot
(\neg x_1\lor x_4\lor x_2)\land (\neg x_2\lor\neg x_5\lor x_3)$.}
\end{figure}

The main claim of the reduction is :
\[
I_\Psi\models_F \bar{r} \iff \models \Psi,
\]
where $\models\Psi$ denotes that $\Psi$ is valid. 

For the {\em only if\/} part, we start by observing that no repair contains $r$. We also show that for any consistent instance $I_1\subseteq\{p_1,\bar{p}_1,\ldots,p_n,\bar{p}_n\}$ there exists a repair $I'$ such that $I'\leq_I I_1$ and such that $\{q_1,\ldots,q_k\}\subseteq I'$. Indeed, consider $J=I_1\cup\{q_1,\ldots,q_k,r\}$. Clearly, it is consistent and since no repair contains $r$, $J$ is not a repair. Consequently, there exists a repair $I'$ such that $I'<_I J$. It is easy to see that $I'$ is the required repair. 

Now, we take any valuation of universal variables $V_1$, construct the consistent instance 
\[
I_1=\{p_i\sepbar V_1(x_i)=\mathbf{true}, i\in\{1,\ldots,n\}\} \cup 
\{\bar{p}_i \sepbar V_1(x_i)=\mathbf{false}, i\in\{1,\ldots,n\}\},
\]
and take the repair $I'$ as described above. Naturally, for every $i\in\{1,\ldots,n\}$ either $p_i$ or $\bar{p}_i$ belongs to $I'$. The same holds for $i\in\{n+1,\ldots,n+m\}$ because $r\not\in I'$ and $I'$ is $\leq_I$-minimal. Consequently, the following valuation of the existential variables $x_{n+1},\ldots,x_{n+m}$ is properly defined
\[
V_2(x_i) = 
\begin{cases}
\mathbf{true} & \text{if $p_i\in I'$,}\\
\mathbf{false} & \text{if $\bar{p}_i\in I'$.}
\end{cases}
\]
We claim that $V_1\cup V_2\models \Phi$. Take any clause $C_j$ and observe that at least one of $\ell_{j,1},\ell_{j,2},\ell_{j,3}$ is present in $I'$ because $r\not\in I'$, $q_j\in I'$, and $I'$ is a repair. Consequently, $V_1\cup V_2$ assigns {\bf true} to at least one of $L_{j,1},L_{j,2},L_{j,3}$. 

Now, we show the {\em if\/} part by contradiction: we assume there exists a repair $I'$ such that $\bar{r}\not\in I'$ and we construct a consistent instance $I''$ such that $I''<_II'$.

First, we observe that $\bar{r}\not\in I'$ implies that $r\in I'$. By $\leq_I$-minimality this gives $\{q_1,\ldots,q_k\}\subseteq I'$. Also, for every $i\in\{1,\ldots,n\}$ either $p_i$ or $\bar{p}_i$ belongs to $I'$ and for $i\in\{n+1,\ldots,n+m\}$ neither $p_i$ or $\bar{p}_i$ belongs to $I'$. 

Consequently, the following valuation of the universal variables $x_1,\ldots,x_n$ is properly defined
\[
V_1(x_i) = 
\begin{cases}
\mathbf{true} & \text{if $p_i\in I'$,}\\
\mathbf{false} & \text{if $\bar{p}_i\in I'$.}
\end{cases}
\]
Since $\models \Psi$, there exists a valuation of the existential variables $V_2$ such that $V=V_1\cup V_2\models \Phi$. The instance $I''$ is defined as follows:
\begin{align*}
I''=&
\{\bar{r},q_1,\ldots,q_k\}\cup\{p_i \sepbar V(x_i)=\mathbf{true}, i\in\{1,\ldots,n+m\}\}\cup{}\\
&\{\bar{p}_i \sepbar V(x_i)=\mathbf{false},i\in\{1,\ldots,n+m\}\}.
\end{align*}
$V\models\Phi$ implies that $I''$ is consistent. Note that $I'$ and $I''$ agree on the facts $\{p_1,\bar{p}_1,\ldots,p_n,\bar{p}_n\}$, both contain $\{q_1,\ldots,q_k\}$ but $I''$ contains $\bar{r}$ and some of the facts $\{p_{n+1},\bar{p}_{n+1},\ldots,p_{n+m},\bar{p}_{n+m}\}$ whereas $I'$ contains none of them. This shows that $I''$ is relatively closer to $I$ than $I'$, i.e. $I'' <_I I'$. \qed
\end{proof}
\begin{corollary}
There exists a set of 2 FDs and one universal constraint for which repair checking is coNP-complete. 
\end{corollary}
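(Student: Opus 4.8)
The plan is to combine the coNP upper bound already available from Lemma~\ref{lemma:repair-check-upper-bound} with a hardness argument that recycles the construction of Theorem~\ref{thm:universal-cqa-intractable}. Since membership in coNP is immediate, the whole effort goes into establishing coNP-hardness, and I would do this by reducing the (coNP-complete) unsatisfiability problem for 3CNF formulas to repair checking, using \emph{exactly} the schema $\mathcal{S}$ and the set $F$ (two FDs and one universal constraint) of that theorem.

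First I would specialize the reduction of Theorem~\ref{thm:universal-cqa-intractable} to a formula with an \emph{empty} universal prefix: take a plain 3CNF $\Phi=C_1\land\ldots\land C_k$ over variables $x_1,\ldots,x_m$ and read it as $\Psi=\exists x_1,\ldots,x_m\sepdot\Phi$, so that validity of $\Psi$ coincides with satisfiability of $\Phi$. The construction then produces the instance $I_\Phi$ built from the variable facts $p_i,\bar p_i$, the clause facts $q_1,\ldots,q_k$, and $\bar r$, together with the same grounded interaction: the FD $R\fd{}A_1\rightarrow B_1$ forbids keeping both $p_i$ and $\bar p_i$; each rule $q_j\rightarrow\ell_{j,1}\lor\ell_{j,2}\lor\ell_{j,3}\lor r$ forces every retained clause fact to be satisfied by a present literal fact or else to trigger the insertion of $r$; the FD $R\fd{}A_2\rightarrow B_2$ makes $r$ conflict with \emph{every} variable fact (all variables are now existential); and $r$ conflicts with $\bar r$.

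The key idea is to fix a single explicit candidate instance and decide whether it is a repair. I would take $I'=\{q_1,\ldots,q_k,r\}$, the ``drastic'' instance obtained by inserting $r$ and deleting all variable facts and $\bar r$; it is trivially consistent because $r$ satisfies the last disjunct of every ground rule $q_j\rightarrow\cdots\lor r$ and no two $R$-facts remain to violate an FD. The claim I would prove is that $I'$ is a repair of $I_\Phi$ w.r.t.\ $F$ if and only if $\Phi$ is unsatisfiable. The easy direction uses Theorem~\ref{thm:universal-cqa-intractable} directly: if $\Phi$ is satisfiable then $\models\Psi$, so $I_\Phi\models_F\bar r$, hence $\bar r$ belongs to every repair and the $\bar r$-free instance $I'$ cannot be one. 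For the converse I would argue $\leq_I$-minimality by hand, since the universal constraint is not a full TGD and Lemma~\ref{lemma:repair-test} does not apply: assuming $\Phi$ unsatisfiable, suppose some consistent $I''$ satisfied $\Delta(I_\Phi,I'')\subsetneq\Delta(I_\Phi,I')=\{p_1,\bar p_1,\ldots,p_m,\bar p_m\}\cup\{\bar r,r\}$. Then $I''$ retains every $q_j$, and a two-case split closes the argument: if $r\in I''$ the two FDs force $I''$ to drop all variable facts and $\bar r$, giving $\Delta(I_\Phi,I'')=\Delta(I_\Phi,I')$ and contradicting strictness; if $r\notin I''$ then each retained $q_j$ must be satisfied by a present literal fact, and the FD $R\fd{}A_1\rightarrow B_1$ guarantees that these literal facts encode a consistent (partial) truth assignment satisfying every clause, contradicting unsatisfiability.

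The main obstacle I anticipate is precisely this converse $\leq_I$-minimality step: because the rhs of the universal constraint is a genuine four-way disjunction rather than a single head, I cannot invoke the closure-based repair test of Lemma~\ref{lemma:repair-test} and must reason about the relative proximity order directly, carefully ruling out ``cheaper'' repairs that mix retaining some $q_j$, flipping variable facts, and keeping $\bar r$. Once this is done, the map $\Phi\mapsto(I_\Phi,I')$ is a polynomial-time reduction from 3CNF unsatisfiability to $\mathcal{B}_F$, so $\mathcal{B}_F$ is coNP-hard; together with the coNP membership from Lemma~\ref{lemma:repair-check-upper-bound} this yields coNP-completeness.
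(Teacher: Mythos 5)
Your proposal is correct and takes essentially the same route as the paper: the paper also specializes the construction of Theorem~\ref{thm:universal-cqa-intractable} to a QBF with empty universal prefix, fixes the same candidate instance $I'_\Phi=\{r,q_1,\ldots,q_k\}$, and claims $\Phi\not\in\text{3SAT}\iff I'_\Phi\in Repairs(I_\Phi,F)$, leaving the verification as ``analogous'' to that theorem's proof. Your two-case minimality argument (splitting on whether $r\in I''$) correctly fills in exactly the details the paper omits, so the anticipated ``obstacle'' is handled and nothing is missing.
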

\begin{proof}
We use the reduction from the proof of Theorem~\ref{thm:universal-cqa-intractable} to reduce $\mathcal{B}_F$ to the complement of 3SAT: a 3CNF $\Phi$ is treated as a $\forall^*\exists^*$QBF with no universally quantified variables. Let $F$ be the set of constraints as defined in the previous reduction and  $I_\Phi$ be the instance obtained from $\Phi$. We take $I'_\Phi=\{r,q_1,\ldots,q_k\}$ and claim that 
\[
\Phi\not\in\text{3SAT}\iff I_\Phi'\in Repairs(I_\Phi,F).
\]
The proof of this claim is analogous to the proof of Theorem~\ref{thm:universal-cqa-intractable}. 
\qed
\end{proof}

\section{Related work}
\label{sec:related}
Here we only discuss work relevant to our contributions and we refer the reader to surveys of the topic~\cite{Be06,BeCh03,Ch07,Fa08}.

In general, three different approaches to compute consistent query answers have been proposed: query rewriting, logic programming, and compact representation of repairs. Our work belongs to the last category. 

{\em Query rewriting} was the first approach proposed to compute consistent query answers~\cite{ArBeCh99}. A query $Q$ is rewritten into a query $Q'$ whose evaluation returns the set of consistent query answers to $Q$. An indisputable advantage of this approach is the ease of its incorporation into already existing applications. However, applicability of this approach is limited and certain conjunctive queries are known not to have rewritings~\cite{ChMa04,FuMi07,Wi07}.  

\cite{ArBeCh99} uses the notion of {\em residues} obtained from constraints to identify potential impact of integrity violations on the query results. The residues are used to construct rewriting rules for the atoms used in the query. This approach has been shown to be applicable to quantifier-free conjunctive queries in the presence of binary universal constraints. Chomicki and Marcinkowski~\cite{ChMa04} observe that if the set of constraints contains one FD per relation only, the conflict graph is a union of disjoint full multipartie graphs. This simple structure allows to construct rewriting for {\em simple} conjunctive queries, i.e conjunctive queries without repeated relation names and no variable sharing. The result of Chomicki and Marcinkowski has been further generalized by Fuxman and Miller~\cite{FuMi05,Fuxman,FuMi07} to allow restricted variable sharing (joins) in the conjunctive queries. The class $C_{forest}$ of allowed queries is defined using the notion of {\em join graph} of a query whose vertices are the literals used in the query and an edge runs from a literal $R_i$ to literal $R_j$ if there is a variable which occurs on a non-key attribute of $R_i$ and any attribute of $R_j$ (both occurrences have to be different if $i=j$). The class $C_{forest}$ consist of queries whose join graph is a forest, the joins are full and the join conditions are non-key to key. Wijsen~\cite{Wi07} presents a rewriting scheme for the class of {\em rooted} queries which further extends $C_{forest}$. We remark that the class of rooted queries is semantically defined and its subclass is captured with syntactic characterization using an alternative notion of the join graph.

Several approaches have been developed to compute consistent query answers using {\em logic programs} with disjunction and classical negation~\cite{ArBeCh03,BaBe03,EFGL03,GrGrZu01,GrGrZu03,NiVe02}. Essentially, all of them use disjunctive rules to model the process of repairing violations of constraints. In this way stable models of a program corresponds to the repairs of the inconsistent database. A query evaluated under the {\em cautious} semantics returns the answers present in every model, which naturally yields the consistent query answers. 

The main advantage of using this approach is its generality: typically arbitrary first-order (or even Datalog$^\neg$) queries are handled in the presence of universal constraints. Also, the repairing programs can be easily evaluated with existing logic program environments like {\sf Smodels} or {\tt dlv}~\cite{EFLW00}. We note, however, that the systems computing answers to logic programs usually perform grounding, which may be cost prohibitive if we are to work with large databases. Another disadvantage of this approach is that the class of disjunctive logic programs is known to be $\Pi_p^2$-complete. 

These difficulties are addressed in the INFOMIX system~\cite{EFGL03} with several optimizations geared toward effective execution of repairing programs. One is {\em localization} of conflicts with identification of the {\em affected database} which consists of all facts involved in constraint violations and all syntactically propagated {\em conflict-bound} facts (analogous to applying $T^*_F$). Another optimization involves using bit-vectors to encode fact membership to each repair and subsequent use of  bitwise aggregate function to find tuples that  present in every repair. This optimization, however, may be insufficient to handle databases with large numbers of conflicts because typically the number of repairs is exponential in the number of conflicts.  Recently, this deficiency has been addressed with {\em repair factorization}~\cite{EFGL08}. Essentially, the affected database is decomposed into parts that are conflict-disjoint (no two mutually conflicting facts are in separate parts). When computing consistent answers to a query only parts that are simultaneously spanned by the query are considered at a time. We observe an analogy to computing consistent query answers using hypergraphs: when finding whether {\bf true} is the consistent answer to a ground query Algorithm~\ref{alg:cqa} analyzes base fragments of repairs obtained by combining the hyperedges adjacent to facts from the query. 

Our work was inspired by positive results for denial constraints presented in~\cite{ChMa04}. There, the repairs are obtained by  deleting facts only and consequently the repairs are subsets of the original instance. \cite{ChMa04} also investigates using {\em subset} repairs obtained to define consistent query answers in the presence of inclusion dependencies (IND), i.e. formulas of the form  
\[
\forall \bar{x}_1\exists \bar{x}_3\sepdot R(\bar{x}_1)\rightarrow P(\bar{x}_2,\bar{x}_3),
\]
where $\bar{x}_2\subseteq\bar{x}_1$. An IND of this form is commonly written $R[X]\subseteq P[Y]$, where $X$ and $Y$ are the sets of attributes corresponding to $\bar{x}_2$ in $R$ and $P$ respectively. An IND $R[X]\subseteq P[Y]$ is a {\em foreign-key} dependency if $Y$ is the {\em key} of $P$. We note that universal constraints capture only {\em full} INDs, i.e. INDs with no existentially quantified variables. \cite{ChMa04} shows that consistent query answering is in PTIME for quantifier-free and simple conjunctive queries in the presence of foreign-key dependencies and one key dependency per relation. It is also shown that relaxing the restriction on the set of integrity constraints leads to intractability and consistent query answering for arbitrary sets of INDs and FDs becomes $\Pi^p_2$-complete.

Using subset repairs is natural in scenarios like {\em data warehousing}, where the data is complete but may be incorrect. In particular we can assume that if a fact is not present in the original database, then it is not true. Obtaining repairs by deletion of facts only is not necessarily a natural approach in the scenarios where we cannot assume that information missing in the database is false, for instance in the context of integration of sources that may be missing some information. Then, we might want to consider standard repairs obtained by deleting and inserting a minimal set of facts, i.e. repairs in the sense of Definition~\ref{def:repair}. We observe, however, that while in the case of universal constraints the missing facts that create conflicts are implicitly defined, the presence of existentially quantified variables in INDs leads to possibly infinite number of repairs. 

Cali et al. in~\cite{CaLeRo03} show that consistent query answering becomes undecidable for arbitrary sets of INDs and FDs. The problem becomes decidable when the set of integrity constraints is restricted to {\em non-key-conflicting} INDs; IND $R[X]\subseteq P[Y]$ is non-key-conflicting if $Y$ is not a strict superset of the key of $P$. Then, the problem of consistent query answering is $\Pi_p^2$-complete.  

Another compact representation of all repairs is {\em nucleus} \cite{Wij03,Wij05}. In this approach all repairs are represented by a tableau (a table with free variables), and queries are evaluated in the standard way (answers with variables are discarded). We note that for some classes of constraints, constructing the nucleus may take an exponential time to complete.

\rrmark{Updated to a newer version of the paper.}\rrtext{\cite{AfKo09} provides an thorough study of the complexity of repair checking for 4 different notions of minimality used to define repairs: minimality of symmetric set difference (Definition~\ref{def:repair}), minimality of asymmetric set difference (which yields subset repairs), minimality of the cardinality of symmetric set difference, and minimality of the cardinality of symmetric difference on every relation. The classes of the considered integrity constraints include denial constraints, inclusion dependencies, equality-generating dependencies, and {\em weakly acyclic} and {\em local-as-view} (LAV) tuple-generating dependencies. The results offer additional insight into the problem of repair checking in the presence of full TGDs for the symmetric and asymmetric set difference notion of minimality. The authors show that for full TGDs repair checking is PTIME-hard, which in view of Theorem~\ref{thm:rcheck-cyclic} makes the problem PTIME-complete. 
It is a general belief, based on the assumption $\text{NC}\subsetneq\text{PTIME}$, that there do not exist fast parallel algorithms for PTIME-complete problems. This suggest that database repairing (Algorithm~\ref{alg:construct-repair}) using a parallel computation model does not guarantee an efficiency improvement. However, the authors also show that for weakly acyclic LAV tuple-generating dependencies the problem is in LOGSPACE (which is included in NC). On the other hand, repair checking easily becomes coNP-complete if we relax the restrictions on the set of constraints; for instance, if we consider weakly acyclic TGDs without the LAV restriction. We remark that these results are orthogonal to Theorem~\ref{thm:rcheck-cyclic} as the classes of constraints are incomparable.
}

\section{Conclusions and future work}
\label{sec:future}
In this paper we investigated the complexity of computing consistent query answers in the presence of universal constraints. We proposed an extended version of the conflict hypergraph. Its size is polynomial in the size of the database and it captures all repairs w.r.t. to the given set of universal constraints. Hence, we consider it to be a compact representation of all repairs. This property is essential for using the extended conflict hypergraph to compute consistent query answers.   

Extending the notions of conflicts to include negations of facts leads, however, to a significant increase of computational complexity. Consistent query answering is $\Pi_P^2$-complete in the presence of universal constraints even when using atomic queries. The problem becomes coNP-complete when we restrict the set of constraints to contain full tuple-generating dependencies and denial constraints only; then the conflicts can contain at most one negation of a fact. If we further restrict the integrity constraints to join dependencies, denial constraints, and acyclic full tuple-generating dependencies, then the problem of consistent answering becomes tractable for quantifier-free queries. Consequently, we present an extension of the algorithm of Chomicki and Marcinkowski~\cite{ChMa04} that finds if {\bf true} is the consistent answer to a closed quantifier-free query. 

The problem of repair checking is also intractable for universal constraints. It becomes tractable if we restrict the constraints to full tuple-generating dependencies and denial constraints. Consequently, we present a polynomial repairing algorithm. It is both sound (always produces a repair) and complete (every repair can be produced). 

The summary of computational complexity results is presented in Table~\ref{tab:summary-universal}; its last row is taken from \cite{ChMa04}.

\begin{table}[htb]
\begin{center}
\begin{tabular}{|c|c|c|c|}
\hline
\multirow{2}{*}{Constraints}&
\multirow{2}{20mm}{\parbox{20mm}{\begin{center}Repair Checking\end{center}}}&
\multicolumn{2}{c|}{Consistent Answering to}\\
\cline{3-4}
&
&$\{\forall,\exists\}$-free queries
&conjunctive queries\\
\hline
Universal & coNP-complete & 
\multicolumn{2}{c|}{$\Pi^p_2$-complete}\\
\hline
Full TGDs + Denial & PTIME &
\multicolumn{2}{c|}{coNP-complete}\\
\hline
\multirow{2}{34mm}{\parbox{34mm}{\begin{center}Acyclic full TGDs + Denial + JDs\end{center}}}&
\multirow{2}{*}{PTIME}&
\multirow{2}{*}{PTIME}&
\multirow{2}{*}{coNP-complete}\\
&&&\\
\hline
Denial & PTIME & PTIME & coNP-complete\\
\hline
\end{tabular}
\end{center}
\caption{\label{tab:summary-universal}Summary of complexity results for universal constraints.}
\end{table}

We envision several possible directions of future study. First, we would like to investigate practical applicability of our approach. The main obstacle lays in the high degree of the polynomials used to bound the number of of all supports and blocks, and consequently in the high degree of the polynomial describing the complexity of Algorithm~\ref{alg:cqa}. We observe that the bounds are estimates of the pessimistic case where the number of conflicts (ground rules) in the database is very high and the set of integrity constraints very complex. We believe that in practical scenarios the amount of conflicts is small enough to be stored in the main memory and the acyclic height of the set of integrity constraints rather small. 

\rrmark{Added handling conjunctive queries to Future Work.}\rrtext{
Although computing consistent answers to arbitrary conjunctive queries is long known to be intractable~\cite{ChMa04}, considerable effort has been made to find practical subclasses of conjunctive queries for which consistent answering is tractable~\cite{Wi07,FuMi07,GLRR05}. Usually, tractability comes at the price of restricting the class of constraints to primary key constraints. However, it would be interesting to see for what subclasses of universal constraints similar techniques could be used to handle conjunctive queries. Another interesting challenge in this direction is a generalization of Algorithm~\ref{alg:cqa} to handle sets of universal constraints and arbitrary queries with quantifiers. Because of the negative complexity results, we cannot expect that a generalized algorithm would work in polynomial time (unless P=NP). We believe, however, that in most practical cases such an algorithm should not require exponential time. This belief is based on the promising results of heuristics used to optimize the INFOMIX system~\cite{EFGL03,EFGL08} and its conceptual closeness to Algorithm~\ref{alg:cqa} (see Section~\ref{sec:related}).
}

It would be interesting to see if using an alternative definition of repairs would affect the complexity of consistent query answering and repair checking in the presence of universal constraints. For instance, we observe that if we consider repairs obtained by deleting facts only, then the repairs are maximal consistent subsets of the original instance. It would seem that this property simplifies reasoning about repairs allowing to employ algorithms similar to those used for denial constraints (where all repairs are obtained by deleting facts only). For example, a subset $I'$ of an instance $I$ is a repair of $I$ w.r.t. to a set of denial constraints if and only if $I'\cup\{R(t)\}$ is inconsistent for any $R(t)\in I\setminus I'$. This is not necessarily true in the case of universal constraints, where we need to check that $I'\cup X$ is inconsistent for every nonempty $X\subseteq I\setminus I'$. In fact, our preliminary research shows that this problem remains coNP-complete. Also, we believe that the positive results carry to the setting of subset repairs as well.
\bibliographystyle{plain}
\bibliography{general}
\appendix
\section{Omitted proofs}
\label{app:omitted-proofs}
\noindent
{\em
{\bf Proposition~\ref{prop:one-jd-per-relation}}
For any set of JDs $\{jd_1,\ldots,jd_n\}$ on the same relation there exists a JD $jd^*$ such that an instance satisfies $\{jd_1,\ldots,jd_n\}$ if and only if it satisfies $jd^*$.
}

\begin{proof}
The proof is by induction over $n$. For $n=0$ we note that every relation satisfies the trivial JD: $R\jd[attrs(R)]$, where $attrs(R)$ is the set of all attributes of $R$. The hypothesis is also trivially satisfied for $n=1$. To show the inductive step it suffices to show that two JDs $jd_1=R\jd[X_1,\ldots,X_n]$ and $jd_2=R\jd[Y_1,\ldots,Y_m]$ are equivalent to 
\rrmark{Changed formatting.}
\[
jd^*=R\jd[X_1\cap Y_1,\ldots, X_1\cap Y_m,\ldots,X_n\cap Y_1,\ldots,X_n\cap Y_m]. 
\]
To prove this equivalence we use standard relation algebra~\cite{AbHuVi95} and recall that a JD $R\jd[Z_1,\ldots,Z_k]$ is defined as $R=\pi_{Z_1}(R)\Join\ldots\Join\pi_{Z_k}(R)$. 

First we note that every instance satisfies $\pi_{Z_1}(R)\Join\ldots\Join\pi_{Z_k}(R)\subseteq R$ for any sets $Z_1,\ldots,Z_k$ of attributes of $R$ whose union is $attrs(R)$. Hence, it suffices to show that 
\[
R\subseteq \pi_{X_1\cap Y_1}(R)\Join\ldots\pi_{X_1\cap Y_m}(R)\Join\ldots\Join \pi_{X_n\cap Y_1}(R)\Join\ldots\pi_{X_n\cap y_m}(R)
\] in any instance $I$ that satisfies $jd_1$ and $jd_2$. 

We fix an instance $I$ and let $r$ be the set of all tuples that belong to the relation $R$ in $I$. Take any $t\in r$ and let $t_{X_i}=\pi_{X_i}(t)$ , $t_{Y_j}=\pi_{Y_j}$, 
and $t_{X_i\cap Y_j}=\pi_{X_i\cap Y_j}(t)$ for any $i\in\{1,\ldots,n\}$ and any $j\in\{1,\ldots,m\}$. Since $I$ satisfies 
$jd_1$ and $jd_2$, 
we have that $t=t_{X_1}\Join\ldots\Join t_{X_n}\in r$ and $t=t_{Y_1}\Join\ldots\Join t_{Y_m}\in r$. 
Now, we observe that 
\begin{equation}
t_{X_1\cap Y_1}\Join\ldots\Join t_{X_1\cap Y_m}\Join\ldots\Join t_{X_n\cap Y_1}\Join\ldots\Join t_{X_n\cap Y_m} =t_{X_1}\Join\ldots\Join t_{X_n}.\tag*{\qed}
\end{equation}
\end{proof}

We recall that the lhs of a ground rule is represented with a set of facts obtained from grounding the atoms of some constraint. If the same fact is obtained by grounding more than one atom in the constraint, then it is not repeated in the ground rule. We continue to use this representation, but on some occasions we will require to know the duplicates in ground JD rules. Then, the lhs is represented with a bag rather than a set, and we call such a rule {\em unfolded}. Naturally, every rule can be unfolded, although not always unambiguously. We remark that this ambiguity does not affect the correctness our considerations, and hence we ignore it.
\begin{example}[Unfolded rule]\label{ex:unfolded-rule}
Suppose a schema consisting of one relation name $R(A,B,C,D)$ and let the set of integrity constraints contain one JD $R\jd[AB,BC,CD]$ which is represented with the following formula
\[
R(x,y,z,s)\land 
R(x',y,z',s')\land
R(x'',y'',z',s'')
\rightarrow
R(x,y,z',s'').
\]
The following ground rule is one of possible instantiations of the formula above.
\[
R(0,0,0,0)\land R(1,0,0,1)\rightarrow R(0,0,0,1).
\]
Its unfolding is 
\[
R(0,0,0,0)\land R(1,0,0,1)\land R(1,0,0,1)
\rightarrow R(0,0,0,1).
\]
\end{example}

{\em{\bf Lemma~\ref{lemma:jd-weaving}}
If $Rules(I,F)$ contains the following two unfolded ground rules
\[
r'=R(t'_1)\land\ldots\land R(t'_k)\jdrightarrow R(t_i)
\quad\text{and}\quad
r''=R(t_1)\land\ldots\land R(t_k)\jdrightarrow R(t)
\]
for some $i\in\{1,\ldots,k\}$, then there exists $j\in\{1,\ldots,k\}$ such that $Rules(I,F)$ contains also 
\[
r^*=R(t_1)\land\ldots\land R(t_{i-1})\land R(t'_j)\land R(t_{i+1})\land\ldots\land R(t_k)\jdrightarrow R(t).
\]
}

\begin{proof}
We assume that the rules are obtained from grounding the join dependency $R\jd[X_1,\ldots,X_k]$ and recall that it is represented as the following full TGD:
\[
R(\bar{x}_1)\land\ldots\land R(\bar{x}_k) \land
\bigwedge_{1\leq \alpha,\beta \leq k} 
\bar{x}_\alpha[X_\alpha\cap X_\beta]=\bar{x}_\beta[X_\beta\cap X_\alpha]
\rightarrow R(\bar{y}),
\]
where $\bar{y}\subseteq \bar{x}_1\cup\ldots\cup\bar{x}_n$ such that $\bar{y}[X_\alpha\setminus\bigcup_{1\leq \beta < \alpha} X_\beta] = 
\bar{x}_\alpha[X_\alpha\setminus\bigcup_{1\leq \beta < \alpha} X_\beta]$ for $\alpha\in\{1,\ldots,k\}$. W.l.o.g. we assume that the order of facts in the lhs of rules $r'$ and $r''$ corresponds to the order of the atoms in the definition of the constraints. With this assumption a rule $R(s_1)\land\ldots\land R(s_k)\jdrightarrow R(s)$ belongs to $Rules(I,F)$ if and only if 
\begin{align}
s_\alpha[X_\alpha]&=s[X_\alpha]&
&\text{for every $\alpha\in\{1,\ldots,k\}$}\label{eq:jd1}\\
s_\alpha[X_\alpha\cap X_\beta]&=s_\beta[X_\alpha\cap X_\beta]&
&\text{for every $\alpha,\beta\in\{1,\ldots,k\}$.}\label{eq:jd2}
\end{align}
Also, with the assumption on the order in which the facts are listed in rules, we show that the claim of the lemma holds for $j=i$. 

$t_i'[X_i]=t_i[X_i]=t[X_i]$ follows from \eqref{eq:jd1} for $r''$ and $r'$ (for $\alpha=i$). Since $X_i\cap X_\beta\subseteq X_i$, we get $t_i'[X_i\cap X_\beta]=t_i[X_i\cap X_\beta]$, and from \eqref{eq:jd2} for $r'$ (for $\alpha=i$) we obtain $t_i'[X_i\cap X_\beta]=t_\beta[X_i\cap X_\beta]$ for every $\beta\in\{1,\ldots,k\}$. The remaining equations needed to prove $r^*$ follow trivially from \eqref{eq:jd1} and \eqref{eq:jd2} for $r''$.
\qed
\end{proof}

\noindent
{\em{\bf Proposition~\ref{prop:jd-support}}
For every $I'\in Repairs(I,F)$ and every $R(t)\in Hull(I,F)$
\[
R(t)\in I'\iff \exists S\in Supp(R(t))\sepdot S\subseteq I'.
\]
}

\begin{proof}
We fix a repair $I'$ and we say that an $\ell$-support $S$ of $R(t)\in Hull(I,F)$ is {\em proper} if $S\subseteq I'$. 

The {\em if\/} part is proved with a simple induction over the depth of the derivation of a support. The proof of the {\em only if\/} is based on the following simple idea. A fact $R(t)\in I'\setminus I$ is present in the repair $I'$ to satisfy some ground (full TGD) rule. We identify this ground rule by considering an inconsistent instance $I'\setminus\{R(t)\}$. We use this rule to show that $R(t)$ has a proper support. 

Recall that the acyclic height $height(R)$ in $\mathcal{D}(\mathcal{S},F)$ of a relation name $R$ is the maximal length of a directed acyclic path that begins in $R$. For $\ell\in\{-1,0,\ldots,h\}$ define  $\mathcal{S}^\ell=\{R\in\mathcal{S}\sepbar height(R)\leq\ell\}$ and note that $\mathcal{S}^{h}=\mathcal{S}$. 

We show with induction over $\ell\in\{-1,0,\ldots,h\}$ that for $R\in \mathcal{S}^\ell$
\[
R(t)\in I' \Rightarrow \exists S\in Supp^\ell(R(t))\sepdot S\subseteq I'.
\]
For $\ell=-1$ the claim is trivially true because $\mathcal{S}^{-1}=\varnothing$. We prove the inductive step by taking the set of all facts in $I'$ that do not have a proper support: 
\[
T=\{R(t)\in I' \sepbar R\in \mathcal{S}^\ell\land 
\nexists S\in Supp^\ell(R(t))\sepdot S\subseteq I'\}.
\]

We observe that $T\subseteq I'\setminus I$ because all facts that belong to $I$ have a proper support obtained with the rule $\mathtt{S}_0^{-1}$. Also, by IH $T$ contains no facts using a relation name from $\mathcal{S}^{\ell-1}$, i.e. $T$ contains only facts using relations names whose acyclic height is $\ell$. 

Now, we show that $I'\setminus T$ is consistent. As a subset of a consistent instance it satisfies all denial constraints. Hence, we only need to check satisfiability of ground full TGD rules having in the rhs a relation name of acyclic height $\ell$.

First, we take a non-JD ground rule 
\[
R_1(t_1)\land\ldots\land R_n(t_n)\rightarrow R(t)
\]
such that $R_i(t_i)\in I'\setminus T$ for all $i\in\{1,\ldots,n\}$. Note that every $R_i\in\mathcal{S}^{\ell-1}$ and by IH every $R_i(t_i)$ has an $(\ell-1)$-support $S_i$ such that $S_i\subseteq I'$. Now, $S=S_1\cup\ldots\cup S_n\subseteq I'$ is an $\ell$-support of $R(t)$ constructed with the rule $\mathtt{S}_1^\ell$. Consequently, $R(t)\not\in T$. Naturally, $R(t)\in I'$ because $I'$ is consistent. 

Now, consider an (unfolded)  JD rule 
\[
r^*_0=R(t_1)\land\ldots\land R(t_k)\jdrightarrow R(t)
\] 
such that every $R(t_i)\in I'\setminus T$. 
Repeatedly we use Lemma~\ref{lemma:jd-weaving} and instances of $\mathtt{S}_1^\ell$, used to obtain proper $\ell$-supports of $R(t_i)$'s, to show the existence of the following elements:
\begin{itemize}
\item a ground rule $r=R(t_1^*)\land\ldots\land R(t_n^*)\jdrightarrow R(t)$ (being the last element of a constructed sequence of (unfolded) ground rules $r_0^*,\ldots,r_k^*=r$),
\item ground rules $r_p^*=R_{p,1}(t_{p,1}^*)\land\ldots\land R_{p,n_p}(t_{p,n_p}^*)\rightarrow R(t_p^*)$ for every $p\in\{1,\ldots,k\}$,
\item proper $(\ell-1)$-supports $S_{p,q}^*$ of $R_{p,q}(t_{p,p}^*)$ for every $p\in\{1,\ldots,k\}$ and $q\in\{1,\ldots,n_p\}$.
\end{itemize}
For $p\in\{1,\ldots,k\}$, let the $\ell$-support of $R(t_p)$ be obtained with the following instance of $\mathtt{S}_1^\ell$:
\[
\genfrac{}{}{0.5pt}{}{\begin{aligned}
&&&r_p'=R(t_1')\land\ldots\land R(t_{k_p}')\jdrightarrow R(t_p)\\
&&&r_{p,\alpha}'=R_{\alpha,1}(t_{\alpha,1}')\land\ldots\land R_{\alpha,n_\alpha}(t_{\alpha,m_\alpha}')\rightarrow R(t_\alpha')
\quad\forall \alpha\in\{1,\ldots,k_p\}\\
&R(t_p)\not\in I& &S_{\alpha,\beta}\in Supp^{\ell-1}(R_{\alpha,\beta}(t_{\alpha,\beta}'))\quad
\forall \alpha\in\{1,\ldots,k\},\;\forall \beta\in\{1,\ldots,m_\alpha\}\end{aligned}}{
\bigcup_{\alpha,\beta} S_{\alpha,\beta}\in Supp^\ell(R(t_p))
}
\]
We apply Lemma~\ref{lemma:jd-weaving} to 
\[
r_p'=R(t_1')\land\ldots\land R(t_{k_p}')\jdrightarrow R(t_p)
\]
and
\[
r^*_p=R(t_1^*)\land\ldots\land R(t_{p-1}^*)\land R(t_p)\land R(t_{p+1})\land\ldots\land R(t_k)\jdrightarrow R(t)
\]
to obtain 
\[
r^*_p=R(t_1^*)\land\ldots\land R(t_{p-1}^*)\land R(t_p^*)\land R(t_{p+1})\land\ldots\land R(t_k)\jdrightarrow R(t), 
\]
where $R(t_p^*)=R(t_j')$ for some $j\in\{1,\ldots,k_p\}$ indicated by Lemma~\ref{lemma:jd-weaving}. From the instance of $\mathtt{S}_1^\ell$, for $r_p^*$ we take $r_{p,j}'$, and for $S_{p,q}^*$ we take $S_{j,q}$ for every $q\in\{1,\ldots,m_j\}$.

The elements above show that $R(t)$ has a proper support, i.e. $R(t)\not\in T$. Again, $R(t)\in I'$ because $I'$ is consistent. This finishes the proof that $I'\setminus T$ is consistent. 

Now, recall that $T\subseteq I'\setminus I$, and therefore $I'\setminus T \leq_I I'$. Since $I'$ is $\leq_I$-minimal consistent instance, $I'\setminus T= I'$, and consequently $T=\varnothing$.\qed
\end{proof}

\noindent
{\em
{\bf Proposition~\ref{prop:jd-block}}
For every $I'\in Repairs(I,F)$ and every $R(t)\in Hull(I,F)$ 
\[
R(t)\not\in I'\iff \exists (B,N)\in Block(R(t))\sepdot B\subseteq I' \land N\cap I'=\varnothing.
\]
}

\begin{proof}
We fix a repair $I'$ and we say that an $\ell$-block $(B,N)$ of $R(t)\in Hull(I,F)$ is {\em proper} if $B\subseteq I'$ and $N\cap I'=\varnothing$.

The {\em if\/} part is proved with a simple induction over the depth of derivation of a block. The proof of the {\em only if\/} part, although technically complex, is based on the following simple idea. $R(t)\in I\setminus I'$ is absent in the repair $I'$ because its presence would cause a violation of some ground rule. We identify this rule by considering an inconsistent instance $I'\cup\{R(t)\}$. We use this rule to show that $R(t)$ has a proper block. 

Recall that the acyclic depth $depth(R)$ in $\mathcal{D}(\mathcal{S},F)$ of a relation name $R$ is the maximal length of a directed acyclic path that ends in $R$. For $\ell\in\{-1,0,\ldots,h\}$ we define $\mathcal{S}^\ell=\{R\in\mathcal{S}\sepbar depth(R)\leq \ell\}$. Note that $\mathcal{S}^h=\mathcal{S}$ as the acyclic depth of every relation name is bounded by the acyclic height of $\mathcal{D}(\mathcal{S},F)$. 

We show with an induction over $\ell$ that for $R\in\mathcal{S}^\ell$
\[
R(t)\not\in I'\Rightarrow \exists (B,N)\in Block^\ell(R(t))\sepdot B\subseteq I'\land N\cap I'=\varnothing. 
\]

For $\ell=-1$ the claim is trivially true because $\mathcal{S}^{-1}=\varnothing$. We prove the inductive step by contradiction: we assume that the set of facts that are not in $I'$ and that do not have a proper $\ell$-block,
\[
T=\{R(t)\in Hull(I,F)\setminus I' \sepbar  R\in\mathcal{S}^l\land\text{$R(t)$ has no proper $\ell$-block}\}
\]end{align*}
is nonempty. We note that $T\subseteq I\setminus I'$ because any fact $R(t)\in Hull(I,F)\setminus I$ has a proper block obtained with the rule $\mathtt{B}_0^{-1}$.

Now, take any element $R(t)\in T$, let $jd_R\in F$ be the JD on relation $R$, and consider the set $V=T_{\{jd_R\}}(I'\cup\{R(t)\})\setminus I'$. For any $R(t')\in V$ by $r_{R(t')}$ we denote an arbitrarily chosen ground JD rule $R(t)\land R(t_1)\land\ldots R(t_k)\jdrightarrow R(t')$ such that every $R(t_i)\in I'$.

We claim that: (1) $I'\cup V$ is consistent, and (2) $V\subseteq T$. 
\begin{enumerate}
\item[(1)]  Because $I'\cup V$ is obtained by adding facts using the relation name $R$ to a consistent instance, we only need to verify that ground rules having $R$ in their lhs are satisfied. 

First, take an (unfolded) ground JD rule 
\[
r''=R(t_1')\land\ldots\land R(t_p')\land R(t_1)\land\ldots\land R(t_k)\jdrightarrow R(s)
\] 
such that every $R(t'_i)$ belongs to $V$ and every $R(t_i)$ belongs to $I'$. We iteratively apply Lemma~\ref{lemma:jd-weaving} to the rule above using rules $r_{R(t'_i)}$ and obtain the rule 
\[
R(t_1^*)\land\ldots\land R(t_p^*)\land R(t_1)\land\ldots\land R(t_k)\jdrightarrow R(s),
\] 
where each $R(t_i^*)$ is either $R(t)$ or belongs to $I'$. If all $R(t_i^*)$'s belong to $I'$, then $R(s)\in I'$ because $I'$ is consistent. Otherwise, $R(s)\in V$. 

For the remaining (non-JD) ground rules, we show that if such a rule is not satisfied in $I'\cup V$, then a proper $\ell$-block for $R(t)$ can be constructed (which contradicts $R(t)\in§ T$). If there is a ground denial rule 
\[
R(t_1')\land\ldots\land R(t_n')\land R_1(s_1)\land\ldots\land R_m(s_m)\rightarrow\mathbf{false}
\]
such that every $R(t_i')\in V$ and every $R_j(s_j)\in I'$, then a proper $\ell$-block of $R(t)$ is constructed with the rule $\mathtt{B}_1^{-1}$. Similarly, if there is a ground rule
\[
R(t_1')\land\ldots\land R(t_n')\land R_1(s_1)\land\ldots\land R_m(s_m)\rightarrow P(s)
\]
such that every $R(t_i')\in V$, every $R_j(s_j)\in I'$, and $P(s)\not\in I'\cup V$, then by IH $P(s)$ has a proper $(\ell-1)$-block and a proper $\ell$-block is constructed with the rule $\mathtt{B}_2^\ell$. 
\item[(2)] For $R(t')\in V$ we show that if $R(t')$ has a proper $\ell$-block, then $R(t)$ has a proper $\ell$-block as well (which contradicts $R(t)\in T$).

Suppose some $R(t')\in V$ has a proper $\ell$-block constructed with the following instance of the rule $\mathtt{B}_2^\ell$:
\[
\genfrac{}{}{0.5pt}{}{\begin{aligned}
&&&r=R(t_1')\land\ldots\land R(t_n')\land 
R_1(s_1)\land\ldots\land R_m(s_m)\rightarrow P(s)\\
&&&r_i=R(t')\land R(t_{i,1}')\land\ldots\land R(t_{i,k_i}')\jdrightarrow R(t_i')
\quad\forall i\in\{1,\ldots,n\}\\
&&&S_{i,j}\in Supp(R(t_{i,j}'))
\quad\forall i\in\{1,\ldots,n\},\;\forall j\in\{1,\ldots,k_i\}\\
&&& S_p\in Supp(R_p(t_p))\quad\forall p\in\{1,\ldots,m\}\\
&R(t')\in I& &(B,N)\in Block^{\ell-1}(P(s))
\end{aligned}
}{
(\bigcup_{i,j} S_{i,j}\cup\bigcup_p S_p\cup B,N)\in Block^\ell(R(t'))
}
\]
For every $i\in\{1,\ldots,n\}$ we apply Lemma~\ref{lemma:jd-weaving} to $r'=r_{R(t')}$ and $r''=r_i$, and obtain $r_i^*$.\footnote{More precisely, we take an unfolded version of $r''$ and apply Lemma~\ref{lemma:jd-weaving} to every occurrence of $R(t')$ in $r''$.} We observe that for every $i\in\{1,\ldots,n\}$ if the ground rule $r_i^*$ does not have $R(t)$ in its lhs, then all the facts in its lhs belong to $I'$ and consequently have a proper support. Let $X\subseteq \{1,\ldots,n\}$ be the set of indexes of all rules $r_i^*$ which have $R(t)$ in their lhs. Using $\mathtt{B}_1^\ell$ with the ground rules $r$, $r_i^*$ for $i\in X$, the corresponding proper supports, and the proper $(\ell-1)$-block of $P(s)$ we obtain a proper $\ell$-block of $R(t)$. 
\end{enumerate}
Finally, we observe that $I'\cup V<_I I'$ because $V\subseteq T \subseteq I\setminus I'$. However, $I'$ is by definition a $\leq_I$-minimal consistent instance; a contradiction. \qed
\end{proof}

\noindent
{\em
{\bf Proposition~\ref{prop:jd-support-and-block-size}}
For any fact $R(t)$ the sets $Supp(R(t))$ and $Block(R(t))$ can be constructed in time polynomial in the size of $I$. 
}

\begin{proof}
Here we only give a combinatorial argument showing that the number of all supports and blocks of a fact is bounded by a polynomial of the size of $I$. A polynomial algorithm that generates the supports and blocks can be easily derived. 

By $K$ we denote the maximum number of atoms used in the definition of a constraint in $F$. Since we assume the set of constraints to be fixed, $K$ is a constant. Also, note that the acyclic height $h$ of the dependency graph $\mathcal{D}(\mathcal{S},F)$ is bounded by the cardinality of $\mathcal{S}$. 

We recall that every ground rule corresponds to a subset of $Hull(I,F)$ of cardinality at most $K$. Hence, the number of all ground rules is bounded by $R=|Hull(I,F)|^{K+1}$.

First, with a simple induction over $\ell\in\{-1,0,\ldots,h\}$ we show that the number of all $\ell$-supports of a fact is bounded by $N=R^{(K+1)^{2(\ell+1)}}$. This bound holds trivially for $\ell=-1$. To show the inductive step, we note that $\mathtt{S}_1^\ell$ can be instantiated with at most $R^{K+1}$ possible combinations of ground rules and $(\ell-1)$-supports of $K^2$ facts. By IH each of the facts has at most  $R^{(K+1)^{2\ell}}$ $(\ell-1)$-supports. Together, the number of possible combinations is bounded by 
\begin{align*}
R^{K+1}\left(R^{(K+1)^{2\ell}}\right)^{K^2}=&
R^{(K+1)^{2\ell}K^2+K+1}=
R^{(K+1)^{2\ell}((K+1)^2-2K-1)+K+1}\\
=&
R^{(K+1)^{2(\ell+1)}-(2K+1)(K+1)^{2\ell}+K+1}\leq
R^{(K+1)^{2(\ell+1)}}.
\end{align*}
Hence the number of all supports of a fact is bounded by $N=R^{(K+1)^{2(h+1)}}$.

Now, with a simple induction over $\ell\in\{-1,0,\ldots,h\}$ we show that the number of all $\ell$-blocks of a fact is bounded by $(R^{K+1}N^{K^2})^{\ell+2}$. For $\ell=-1$ the set of supports is constructed either with $\mathtt{B}_0^{-1}$ or $\mathtt{B}_1^{-1}$. For the former the claim holds trivially. For the latter we observe that there are at most $R^{K+1}$ combinations of ground rules and $N^{K^2}$ combinations of supports used in $\mathtt{B}_1^{-1}$, giving together $R^{K+1}N^{K^2}$. Similarly, for the inductive step we observe that $\mathtt{B}_2^\ell$ can be instantiated with $R^{K+1}$ combinations of rules, $N^{K^2}$ combinations of supports, and, from IH, at most $(R^{K+1}N^{K^2})^{\ell+1}$ $(\ell-1)$-blocks. Together, this gives us exactly $P=(R^{K+1}N^{K^2})^{\ell+2}$.

Finally, we observe that both $N$ and $P$ are polynomials when viewed as functions of the size of $I$. 
\qed
\end{proof}
\end{document}